\newcommand{\record}[1]{\ensuremath{{\langle #1 \rangle}}}
\newcommand{\rrecord}[1]{\ensuremath{{\llangle #1 \rrangle}}}
\newcommand{\interpret}[1]{\ensuremath{{\llbracket #1 \rrbracket}}}
\newcommand{\pipe}[0]{\ensuremath{\gg}}
\newcommand{\vdashr}[0]{\ensuremath{\vdash_{\record{}}}}
\newcommand{\vdashk}[0]{\ensuremath{\vdash_k}}
\newcommand{\DeltaLCM}[0]{\ensuremath{\Delta_{\mathcal{L}}^{\mathcal{C}, \mathcal{M}}}}
\newcommand{\Der}{\mathcal{D}}
\newcommand{\myClass}[0]{\textsf{myClass}}
\newcommand{\argClass}[0]{\textsf{argClass}}
\newcommand{\state}[0]{\textsf{state}}
\newcommand{\self}[0]{\textsf{self}}
\newcommand{\super}[0]{\textsf{super}}
\newcommand{\get}[0]{\textit{get}}
\newcommand{\set}[0]{\textit{set}}
\renewcommand{\succ}[0]{\textit{succ}}
\newcommand{\succTwice}[0]{\textit{succ2}}
\newcommand{\compare}[0]{\textit{compare}}
\newcommand{\semType}[1]{{\color{blue}\ensuremath{\textsf{#1}}}}
\newcommand{\Int}[0]{\text{Int}}
\newcommand{\String}[0]{\text{String}}
\newcommand{\Bool}[0]{\text{Bool}}
\newcommand{\Even}[0]{\semType{Even}}
\newcommand{\Odd}[0]{\semType{Odd}}
\newcommand{\Nat}[0]{\text{Num}}
\newcommand{\Comparable}[0]{\text{Comparable}}
\newcommand{\Parity}[0]{\text{Parity}}
\newcommand{\SuccTwice}[0]{\text{Succ2}}
\newcommand{\SuccDelta}[0]{\text{SuccDelta}}
\newcommand{\Injective}[0]{\semType{Injective}}
\newcommand{\Plain}[0]{\semType{Plain}}
\newcommand{\Enc}[0]{\semType{Enc}}
\newcommand{\Sign}[0]{\semType{Sign}}
\newcommand{\Time}[0]{\semType{Time}}
\DeclareMathOperator{\tgt}{tgt}
\DeclareMathOperator{\atoms}{atoms}
\DeclareMathOperator{\level}{level}
\newcommand{\Y}[0]{\mathbf{Y}}
\newcommand{\letin}[1]{\texttt{let } #1 \texttt{ in }}
\newcommand{\TT}[0]{\mathbb{T}}
\newcommand{\CC}[0]{\mathbb{C}}
\newcommand{\NN}[0]{\mathbb{N}}
\renewcommand{\AA}[0]{\mathbb{A}}
\newcommand{\PP}[0]{\mathbb{P}}
\newcommand{\calL}[0]{\mathcal{L}}
\newcommand{\calO}[0]{\mathcal{O}}
\newcommand{\calC}[0]{\mathcal{C}}
\newcommand{\calM}[0]{\mathcal{M}}
\newcommand{\TTC}[0]{\TT_{\CC}}
\newcommand{\TTR}[0]{\TT_{\record{}}}
\newcommand{\bclk}{\ensuremath{\textsf{BCL}_k(\to,\cap)}}
\newcommand{\bcl}{\ensuremath{\textsf{BCL}(\to,\cap)}}
\newcommand{\bclc}{\ensuremath{\textsf{BCL}(\TTC)}}
\newcommand{\bclkc}{\ensuremath{\textsf{BCL}_k(\TTC)}}
\newskip \point \point=1pt
\newcommand{\Then}{\Rightarrow} 
\newcommand{\Iff}{\Leftrightarrow} 
\newcommand{\Set}[1]{\{ #1 \}} 
\newcommand{\dom}{\textrm{\it dom}}
\newcommand{\Pair}[1]{\langle #1 \rangle} 
\newcommand{\reduces}{\longrightarrow} 
\newcommand{\Label}{\mbox{\bf Label}}
\newcommand{\Variable}{\mbox{\bf Var}}
\newcommand{\lbl}{\mbox{\it lbl}}
\newcommand{\lambdaR}{\Lambda_R}
\newcommand{\inter}{\cap} 
\newcommand{\der}{\vdash} 
\newcommand{\deduces}{\vdash} 
\newcommand{\arrow}{\rightarrow} 
\newcommand{\ArrI}{\arrow\mbox{\it I}}   
\newcommand{\ArrE}{\arrow\mbox{\it E}} 
\newcommand{\Axiom}{\textrm{\it Ax}} 
\newcommand{\redSel}{\mbox{\it sel}}
\newcommand{\redMergeThree}{\mbox{\it $\oplus$}}
\newcommand{\recRule}{\mbox{\it rec}}
\newcommand{\selRule}{\mbox{\it sel}}
\newcommand{\Subst}[2]{\{#1 / #2\}}
\newcommand{\fv}{\mbox{\it fv}}
\newcommand{\Override}{\,\oplus\,}
\begin{document}

\title[Mixin Composition Synthesis]{Mixin Composition Synthesis based on Intersection Types}


\author[Bessai]{Jan Bessai\rsuper{a}}	
\address{\lsuper{a}Technical University of Dortmund, 
  Dortmund, Germany}	
\email{jan.bessai@tu-dortmund.de}  
\email{andrej.dudenhefner@tu-dortmund.de} 
\email{boris.duedder@tu-dortmund.de}  
\email{jakob.rehof@tu-dortmund.de}  

\author[Chen]{Tzu-Chun Chen\rsuper{b}}	
\address{\lsuper{b}Technical University of Darmstadt, Darmstadt, Germany}	
\email{tcchen@rbg.informatik.tu-darmstadt.de}

\author[Dudenhefner]{Andrej Dudenhefner\rsuper{a}}	
\address{\vskip-6pt}

\author[D{\"u}dder]{Boris D{\"u}dder\rsuper{a}}	
\address{\vskip-6pt}

\author[de'Liguoro]{Ugo de'Liguoro\rsuper{c}}	
\address{\lsuper{c}University of Torino, Torino, Italy}	
\email{ugo.deliguoro@unito.it}

\author[Rehof]{Jakob Rehof\rsuper{a}}	
\address{\vskip-6pt}

\thanks{This work was partially supported by EU COST Action IC1201: BETTY and MIUR PRIN CINA Prot.\ 2010LHT4KM, San Paolo Project SALT.
Tzu-chun Chen is also partially supported by the ERC grant FP7-617805 \emph{LiVeSoft -- Lightweight Verification of Software}.}



\subjclass{F.4.1 Ma\-the\-ma\-ti\-cal Logic -- $\lambda$ Calculus and Re\-la\-ted Systems}
\keywords{Record Calculus, %
Combinatory Logic, %
Type Inhabitation, %
Mixin, %
Intersection Type}

\titlecomment{This article is based on the submission to the 13th International Conference on Typed Lambda Calculi and Applications, TLCA 2015}



\begin{abstract}
  \noindent We present a method for synthesizing compositions of mixins using type inhabitation in intersection types. First, recursively defined classes and mixins, which are functions over classes, are expressed as terms in a lambda calculus with records. Intersection types with records and record-merge are used to assign meaningful types to these terms without resorting to recursive types. 
	Second, typed terms are translated to a repository of typed combinators.
 We show a relation between record types with record-merge and intersection types with constructors. This relation is used to prove soundness and partial completeness of the translation with respect to mixin composition synthesis.  
  Furthermore, we demonstrate how a translated repository and goal type can be used as input to 
  an existing framework for composition synthesis in bounded combinatory logic via type inhabitation. The computed result is a class typed by the goal type and generated by a mixin composition applied to an existing class.
\end{abstract}

\maketitle

  \section{Introduction}
	

Starting with Cardelli's pioneering work \cite{Cardelli84}, various typed $\lambda$-calculi extended with records have been thoroughly studied to model sophisticated features of object-oriented programming languages, like recursive objects and classes, object extension, method overriding and inheritance (see e.g.~\cite{AbadiCardelli,BruceBook,KiselyovLaemmel05}).

Here, we focus on the synthesis of mixin application chains. In the object-oriented paradigm, mixins \cite{Moon86, Cannon79} have been introduced as an alternative construct for code reuse that improves over the limitations of multiple inheritance, e.g.~connecting incompatible base classes and semantic ambiguities caused by the diamond problem \cite{HTT87}. 
Together with abstract classes and traits, mixins (functions over classes) can be considered as an advanced construct to obtain flexible implementations of module libraries and to enhance code reusability; many popular programming languages miss native support for mixins, but they are an object of intensive study and research (e.g.~\cite{BonoPS99,OderskyZ05}). 
In this setting we aim at synthesizing classes from a library of mixins.
Our particular modeling approach is inspired by modern language features (e.g.~ECMAScript ``\texttt{bind}'') to preserve contexts in order to prevent programming errors \cite{ecma2011rev51}.

We formalize synthesis of classes from a library of mixins as an instance of the relativized type inhabitation problem in combinatory logic. Relativized type inhabitation is the following decision problem: given a combinatory type context $\Delta$ and a type $\tau$ does there exist an applicative term $e$ such that $e$ has type $\tau$ under the type assumptions in $\Delta$? We implicitly include the problem of constructing such a term as the synthesized result.

Although relativized type inhabitation is undecidable even in simple types, it is decidable in $k$-bounded combinatory logic $\bclk$~\cite{RehofEtAlCSL12}. $\bclk$ is the combinatory logic typed with arrow and intersection types, where $k$ bounds  the depth of types wrt. $\to$ used to instantiate schematic combinator types.
Hence, an algorithm for semi-deciding type inhabitation for $\mbox{\bcl} = \bigcup_k \mbox{\bclk}$ can be obtained by iterative deepening over $k$ and solving the corresponding decision problem in \bclk\; \cite{RehofEtAlCSL12}. In the present paper, we 
enable combinatory synthesis of classes via intersection typed mixin combinators. Intersection types \cite{bcd} play an important 
r\^{o}le in combinatory synthesis, because they allow for semantic specification of components and synthesis goals \cite{RehofEtAlCSL12,BessaiDDM14}. They also allow a natural way to type records.

Now, looking at $\Set{C_1:\sigma_1,\ldots,C_p:\sigma_p, M_1:\tau_1,\ldots,M_q:\tau_q } \subseteq \Delta$
as the abstract specification of a library including classes $C_i$ and mixins $M_j$ with interfaces $\sigma_i$ and $\tau_j$ respectively,
and given a type $\tau$ specifying a desired class, we
may identify the class synthesis problem with the relativized type inhabitation problem of constructing a term $e$, i.e.~an applicative composition of classes and mixins, typed by $\tau$ in $\Delta$.
To make this feasible, we have to bridge the gap between the expressivity of highly sophisticated type systems used for typing classes and mixins, for instance
$F$-bounded polymorphism used in \cite{CHOM89,CookHC90}, and the system of intersection types from \cite{bcd}.
In doing so, 
we move from the system originally presented in \cite{deLiguoro01}, consisting of a type assignment system of intersection and record
types, to a $\lambda$-calculus which we enrich here with record merge operation (called ``\textbf{with}'' in  \cite{CookHC90}), to allow for expressive mixin combinators.
The type system is modified by reconstructing record types $\record{l_i:\sigma_i \mid i \in I}$ as intersection of unary record types $\record{l_i:\sigma_i}$, and considering a subtype relation extending the one in \cite{bcd}. This is however not enough for typing record merge, for which we consider a type-merge operator $+$. The problem of typing extensible records and merge, faced for the first time in \cite{Wand91,Remy92}, is notoriously hard; to circumvent difficulties the theory of record subtyping in \cite{CookHC90} (where a similar type-merge operator is considered) allows just for ``exact'' record typing, which involves subtyping in depth, but not in width. Such a restriction, that has limited effects wrt. a rich and expressive type system like $F$-bounded polymorphism, would be too severe in our setting. Therefore, we undertake a study of the type algebra of record types with intersection and type-merge, leading to a type assignment system where exact record typing is required only for the right-hand side operand of the term merge operator, which is enough to ensure soundness of typing.

The next challenge is to show that in our system we can type classes and mixins in a meaningful way. Classes are essentially recursive records. Mixins are made of a combination of fixed point combinators and record merge. Such combinators, which usually require recursive types, can be typed
in our system by means of an iterative method exploiting the ability of intersection types to represent approximations of the potentially infinite unfolding of recursive definitions. 

The final problem we face is the encoding of intersection types with record types and type-merge into the language of $\bclk$. For this purpose, we consider a conservative extension of bounded combinatory logic, called $\bclkc$, where we allow unary type constructors that are monotonic and distribute over intersection. We show that the (semi) algorithm solving inhabitation for $\bclk$ can be adapted to $\bclkc$, by proving that the key properties necessary to solve the inhabitation problem in $\bclk$ are preserved in $\bclkc$ and showing how the type-merge operator can be simulated in $\bclkc$. In fact, type-merge is not monotonic in its second argument, due to the lack of negative information caused by the combination of $+$ and $\inter$. Our work culminates in two theorems that ensure soundness and completeness of the so obtained method wrt. synthesis
of classes by mixin application.

\subsection{Contributions}
The contributions of this article can be summarized as follows:
\begin{itemize}
\item A type system with intersection types and records ($\TTR$) for a $\lambda$-calculus with records ($\lambdaR$) is designed and its key properties are proven. 
\item $\TTR$ is used as a typed calculus for classes and mixins.
\item Bounded combinatory logic and a decision algorithm for relativized type inhabitation are extended with constructors ($\bclkc$) retaining complexity results.
\item A sound and (partially) complete encoding of mixins and classes in $\TTR$ as combinators in $\bclkc$ for synthesis is proven and exemplified.
\item Negative information, i.e.~the information on which labels are absent, is encoded with polynomial overhead.
\end{itemize}

\subsection{Organization}
The article is organized as follows: 
Section~\ref{sec:relatedWork} discusses the development of traits and mixins as well as program synthesis. In Section~\ref{sec:intersection}, intersection and record types for a $\lambda$-calculus with records are introduced as a domain specific language for representing mixins and classes. Section~\ref{sec:bcl} adapts bounded combinatory logic to include constructors ($\bclkc$) as a foundation for mixin synthesis. Section~\ref{sec:synthesis} presents the encoding of record types in $\bclkc$, mixin composition synthesis by type inhabitation and provides detailed examples including the use of semantic types. A conclusion in Section~\ref{sec:conclusion} hints at future work.

	\section{Related Work}\label{sec:relatedWork}
	

This work evolves  from the contributions \cite{BessaiDDM14, UdLTC2014, besDDTU15} to the workshop \textsf{ITRS'14} and the original submission to TLCA15 \cite{BDDCdLR15}. It combines two research areas, type systems for traits and mixins and program synthesis using type inhabitation.

\subsection{Type Systems for Traits and Mixins}
The concept of mixins goes back to research on Lisp dialects in the late 70's and early 80's \cite{Cannon79, WM80, Moon86}. The motivation was to extend code organization in object-oriented languages, which was traditionally defined by hierarchical inheritance. To this end Cannon introduced base flavors serving as a starting point later to be extended by mixin flavors \cite{Cannon79}. Mixin flavors implement particular features and can only be instantiated by application to existing flavors. They can define requirements to be met prior to application. If they use internal state, this state cannot be shared with other flavors. 

Type theories for objects and classes were developed shortly after. Languages like Simula~\cite{DN66} and Smalltalk-80~\cite{Deutsch84} are object-oriented typed languages, but their type system is geared toward memory management rather than safe reuse. 
Various techniques have been deployed to enable type safe reuse of implementations. Cardelli's seminal paper ``A Semantics of Multiple Inheritance''~\cite{Cardelli84} filled this gap. Type inference for products with subtyping was shown to be decidable in \cite{Mitchell84}. Cardelli improved upon the cumbersome encoding of entities using records with named projections instead of products. Subtyping addresses  permutation of record entries as well as multiple inheritance. It ensures compatibility of horizontally extended and vertically specialized records. 

A series of later developments dealt with issues introduced by recursion and obtained type inference. To this end Wand designed an ML-inspired system \cite{Wand87} that later inspired the development of System $F_{\leq}$~\cite{CHOM89}. A fine grained analysis of the connection between inheritance and subtyping was now possible \cite{CookHC90} and came to the surprising negative conclusion, that ``inheritance is not subtyping''. This analysis by Cook, Hill and Canning is also the first to discuss the difference between early and late binding of the recursion inherently present in objects: delegating recursion to methods using a self parameter (late binding) as done previously by Mitchell \cite{Mitchell89} unifies classes and objects, whereas early fixpoint computation at the class level is open for later modifications by inheritance and mixins. Subsequently, Bracha and Cook reinterpreted mixins in the light of classes and inheritance, where they can act as abstract subclasses \cite{BrachaC90}. Their main contribution was to shed light on the abstract operation of mixin application in contrast to prior work, which focused on implementation details like class hierarchy linearization. This operation turns out to be powerful enough to encode (multiple) inheritance.
From there on, development took two main routes. One direction was the development of advanced object calculi, which is covered in great detail in \cite{AbadiCardelli}. The other direction was to model the type systems of mainstream programming languages. Featherweight Java by Igarashi, Pierce and Wadler is a prominent example \cite{IgarashiPW01} for the latter direction. It inspired a mixin based treatment of virtual classes in \cite{EOC06}. A parallel development focused on traits, which are a restriction of mixins to subtype compatible overwriting \cite{LS08, LSZ12}. Bono et al. suggested a formal type system for a calculus with traits \cite{BDG07, BDG08}. The most recent studies follow this trend to bring together the two directions. Java has been partially formalized by Rowe and van Bakel \cite{BR13, RoweB14}, and completely in K-Java~\cite{BR15}. Recently, the Scala compiler is being reworked to match DOT~\cite{AMO12, ARO14, AGORS16}, a formal dependently typed core specification.

\subsection{Synthesis by Type Inhabitation}
Synthesis is by now a vast area of computer science, and we can only attempt here to place our approach broadly in relation to major points of comparison.

The synthesis problem can be traced back to Alonzo Church \cite{Church57, Thomas09}. It was first considered for the problem of automatically constructing a finite-state procedure implementing a given input/output relation over infinite bitstreams specified as a logical formula. The automata-theoretic approach was notably advanced by Pnueli and Rosner in the context of linear temporal logic (LTL) \cite{PR89} and by many others since then. Another major branch of work in synthesis is deductive synthesis,
as studied by Manna and Waldinger, who rely on proof systems for program properties \cite{MW80}, 
with many more works following in this tradition. 

The presented approach to the synthesis problem is proof-theoretic and
can best be characterized more specifically as {\em type-based} and 
{\em component-oriented}.
This approach was initiated by Rehof et al. \cite{RehofUrzyczyn11, RehofEtAlCSL12, JR13} around the idea of using the inhabitation problem in bounded combinatory logics with intersection types \cite{bcd} as a foundation for synthesis. In contrast to standard
combinatory logic \cite{HinSel08,hide92} invented by Sch{\"o}nfinkel in 1924 \cite{Schoenfinkel24}, $k$-bounded combinatory logic $\bclk$ \cite{RehofEtAlCSL12} introduces
a bound $k$ on the level of
types, i.e.~depth wrt. $\to$, used to instantiate schematic combinator types. Additionally, rather than
considering a {\em fixed} base of combinators (for example, the base
$\mathbf{S}, \mathbf{K}$), 
the inhabitation problem is {\em relativized} to an arbitrary
set $\Delta$ of typed combinators, given as part of the input to the relativized inhabitation problem:
\smallskip

\hfil {\em Given $\Delta$ and  $\tau$, is there 
an applicative term $e$ such that $\Delta \vdash_{\bclk} e : \tau$?}%
\smallskip 

\noindent In many instances the inhabitation problem with a fixed-base is much easier than in the relativized case, where the base is part of the input. 
For example, {\sc Pspace}-completeness of inhabitation in the simple-typed $\lambda$-calculus \cite{statman79} implies {\sc Pspace}-completeness of the equivalent simple-typed $\mathbf{S}\mathbf{K}$-calculus. 
The related problem of term enumeration for simple-typed $\lambda$-calculus has been studied by Ben-Yelles \cite{BY79} and, more recently, by Hindley~\cite{Hindley08}.

Linial and Post \cite{PostLinial} initiated the study of decision problems for arbitrary propositional axiom systems (partial propositional calculi, abbreviated PPC), in reaction to a question posed by Tarski in 1946. In the Linial-Post theorem they show existence of a PPC with an unsolvable decision problem. 
Later, Singletary showed that every recursively enumerable many-one degree can be represented by the implicational fragment of PPC \cite{Singletary74}. This implies that the relativized inhabitation problem is undecidable for combinatory logic with schematism even in simple types. Recent developments \cite{Bokov15} show that it is undecidable, whether a given finite set of propositional formulas constitutes an adequate axiom system for a fixed intuitionistic implicational propositional calculus.

In contrast, the main result of \cite{RehofEtAlCSL12} is that the relativized inhabitation problems for $\bclk$ with intersection types form an infinite hierarchy, being $(k+2)$-{\sc Exptime}-complete for each fixed bound $k$.

In \cite{JR13}, unbounded relativized inhabitation is seen, already at the level of simple types, 
to constitute a Turing-complete logic programming language for program composition,
in a way that is related to work on proof-theoretic generalizations of logic programming languages \cite{MillerNPS91}.

Perhaps surprisingly, component-oriented synthesis is a relatively recent development.
The approach of combinatory logic synthesis is basically motivated by the idea that type structure
provides a natural and code-oriented vehicle for synthesis specifications together with the
idea that combinatory logic provides a natural type-theoretic model of components, component interfaces, and component composition.
Vardi and Lustig initiated the component-oriented
approach within the automata-theoretic tradition, explicitly developing the idea of synthesizing systems from preexisting components \cite{LustigVardi09} with the idea of leveraging 
design intelligence and efficiency from components as building blocks for synthesis.
A recent Dagstuhl meeting explored these ideas of design and synthesis from components across
different communities \cite{RehofVardi14}.

Realizing the component-based idea within type-based synthesis requires ways to deal with 
semantic specification as well as the
combinatorial explosion of search spaces inherent in synthesis problems. 
Rehof et al.~\cite{RehofU12} introduced the idea of using intersection types \cite{bcd} to type-based synthesis
as a means of addressing the problem of search control and semantic specification at the type level. We refer to types enriched with semantic specifications through intersection types as {\em semantic types}.
In addition to semantic types, Rehof et al. have introduced the idea of {\em staging} into synthesis via modal types \cite{DudderMR14}.
Simple types were used by Steffen et al. \cite{SMB97} in the context of temporal logic synthesis
to semantically enrich temporal specifications with taxonomic information.

The notion of composition synthesis using semantic types is related to adaptation synthesis via proof counting \cite{Wells02,Wells04}. 
In particular, in \cite{Wells02} typed predicate logic is used for semantic specification at the interface level. We follow this idea, however the type system, underlying logic and algorithmic methods are different.

 Refinement types \cite{FP91} externally relate specifications to implementation types. The refinement type scheme structurally constraints how refinement types are formed: this prevents semantic specifications like $(\Int \to \Int) \cap \Injective$, which are possible in our system.
 Recently, refinement types have been used for example guided synthesis in \cite{FOW16}. 
Types have also been used in order to synthesize code completions \cite{Gkkp13}. Intersection types are not only useful for semantic specification, but also to encode objects. To this end they can be combined with records as proposed in \cite{Pierce91}. Their relation to object-oriented inheritance has been studied in \cite{CompagnoniP96} and serves as an inspiration for our work.

Combinatory logic synthesis has been implemented in a framework, Combinatory Logic Synthesizer {\bf (CL)S}, 
which is still being further developed \cite{BDDMR14}.

	\section{Intersection Types for Mixins and Classes}\label{sec:intersection}
	
\newcommand{\Context}{\Gamma}
\newcommand{\State}{\textsf{state}}

\subsection{Intersection and record types}

We consider a type-free $\lambda$-calculus of extensible records, equipped with a  merge operator.
The term syntax is defined by the following grammar:
\[
\begin{array}{rlll}
 \lambdaR \ni M,N, M_i  & ::= &  x \mid (\lambda x.M) \mid (MN)  \mid (M.l) \mid R \mid (M \oplus R) & \text{terms} \\ [1mm]
R & ::= & \record{ l_i = M_i \mid i \in I} & \text{records}
\end{array}
\]
where $x\in\Variable$ and $l\in\Label$ range over denumerably many {\em term variables} and {\em labels} respectively, and the sets of indexes $I$ are finite.
Free and bound variables are defined as for the ordinary $\lambda$-calculus, and we name $\lambdaR^0$ the set of all closed terms
in $\lambdaR$; terms are identified up to renaming of bound variables and $M\Subst{N}{x}$ denotes capture avoiding substitution of 
$N$ for $x$ in $M$. We adopt notational conventions from \cite{Barendregt84}; in particular, application associates to the left and external 
parentheses are omitted when unnecessary; also the dot notation for record selection takes precedence over $\lambda$, so that
$\lambda x. \;M.l$ reads as $\lambda x. (M.l)$. If not stated otherwise $\Override$ also associates to the left, and we avoid external parentheses when 
unnecessary.

Terms $R \equiv \record{ l_i = M_i \ | \ i \in I }$ (writing $\equiv$ for syntactic identity) represent {\em records}, 
with fields $l_i$ and $M_i$ as the respective values; we set $\lbl(\record{ l_i = M_i \mid i \in I}) = \Set{l_i \mid i\in I}$. For records we adopt the usual notation to combine sets, i.e.~$\{l_i, l_j \mid i \in I, j \in J\} = \{l_i \mid i \in I \cup J\}$. The term
$M.l$ is {\em field selection} and $M\oplus R$ 
is record {\em merge}. In particular, if $R_1$ and $R_2$ are records then $R_1\Override R_2$ is the record having as fields
the union of the fields of $R_1$ and $R_2$ and as values those of the original records but in case of ambiguity, where the values in
$R_2$ prevail. The syntactic constraint that $R$ is a record in $M \oplus R$ is justified after Definition \ref{def:typeAssignment}. Note that
a variable $x$ is not a record, hence $\lambda x. x \Override R$ is well-formed for any record $R$ while $\lambda x. R \Override x$ is not.

The actual meaning of these operations is formalized by the following reduction relation:

\begin{defi}[$\lambdaR$ reduction]
Reduction $\reduces\, \subseteq \lambdaR^2$ is the least compatible relation such that:
\[\begin{array}{lrcl}
(\beta) & \ (\lambda x.M)N &\reduces & M\Subst{N}{x} \\ [1mm]
(\redSel) & \Pair{l_i=M_i  \mid  i\in I}.l_j & \reduces & M_j  \hspace{4cm} \mbox{if $j \in I$} \\ [1mm]
(\redMergeThree) & \Pair{l_i=M_i  \mid  i\in I} \Override \Pair{l_j=N_j  \mid \ j\in J} & \reduces &
	\Pair{l_i=M_i, \ l_j=N_j \mid  i\in I\setminus J, \  j\in J}
\end{array}\]
\end{defi}

Record merge subsumes field update as considered in \cite{deLiguoro01}: 
$(M.l := N)$ is exactly $(M \Override \record{l = N})$, but merge is not uniformly definable in terms of update as long as
labels are not expressions in the calculus, therefore we take merge as primitive operator. The reduction relation
$\reduces$ is Church-Rosser namely its transitive closure $\reduces^*$ is confluent.

\begin{thm}[Church-Rosser property]
For all $M$ if $M \reduces^* M_1$ and $M \reduces^* M_2$ then there is $M_3$ such that $M_1 \reduces^* M_3$ and
$M_2 \reduces^* M_3$.
\end{thm}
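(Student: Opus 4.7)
The plan is to follow the Tait--Martin-L\"of strategy by introducing a parallel reduction $\Rightarrow\,\subseteq\lambdaR^2$, proving that it enjoys the diamond property, and then lifting this to confluence of $\reduces^*$ in the usual way. The rules defining $\Rightarrow$ would include reflexivity on variables, the congruence closure through $\lambda$, application, record formation, selection, and merge, together with three ``contractive'' clauses corresponding to the three basic reductions: (i) if $M\Rightarrow M'$ and $N\Rightarrow N'$ then $(\lambda x.M)N\Rightarrow M'\Subst{N'}{x}$; (ii) if $M_i\Rightarrow M_i'$ for $i\in I$ and $j\in I$, then $\record{l_i=M_i\mid i\in I}.l_j\Rightarrow M_j'$; (iii) if $M_i\Rightarrow M_i'$ and $N_j\Rightarrow N_j'$ then $\record{l_i=M_i\mid i\in I}\Override \record{l_j=N_j\mid j\in J}\Rightarrow\record{l_i=M_i',\,l_j=N_j'\mid i\in I\setminus J,\,j\in J}$. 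Routine inductions give $\reduces\,\subseteq\,\Rightarrow\,\subseteq\,\reduces^*$, so that $\Rightarrow^{*}\,=\,\reduces^{*}$ and confluence of $\Rightarrow$ implies confluence of $\reduces^*$.

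The preparatory lemma I would prove first is the substitution lemma for parallel reduction: if $M\Rightarrow M'$ and $N\Rightarrow N'$ then $M\Subst{N}{x}\Rightarrow M'\Subst{N'}{x}$. This is a standard induction on the derivation of $M\Rightarrow M'$, with the new cases for selection and merge being immediate because substitution commutes with the record formers and with the index-side effects of the merge rule (the set $I\setminus J$ does not depend on the values).

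For the diamond property I would proceed by induction on the structure of $M$ and a case analysis on the two derivations $M\Rightarrow M_1$ and $M\Rightarrow M_2$. The classical overlaps ($\beta$ vs.\ congruence under application or abstraction) are dispatched exactly as in the pure $\lambda$-calculus, using the substitution lemma for the critical pair in which both derivations contract a common $\beta$-redex while further reducing its subterms. The new overlaps are the analogous ones for $\redSel$ and $\redMergeThree$. For selection, when both derivations reduce $\record{l_i=M_i\mid i\in I}.l_j$ but with different parallel reductions inside each $M_i$, the induction hypothesis on the $M_i$'s delivers common reducts $M_i^{\star}$, and the projection on $l_j$ on one side matches $M_j^{\star}$ on the other. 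For merge the situation is entirely analogous: both sides end up at the record whose fields are exactly the indices in $(I\setminus J)\cup J$ with values given by the common reducts $M_i^{\star}$ and $N_j^{\star}$ produced by the induction hypothesis; the key observation is that the index-set recipe of rule $(\redMergeThree)$ is purely syntactic on labels and is preserved by parallel reduction of the field values.

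The main obstacle, as usual, is the critical pair where one derivation performs an outer contraction (a $\beta$, $\redSel$, or $\redMergeThree$ step at the root) while the other performs only a congruence step; here we need to know that the inner parallel reduction can be propagated through substitution or through the index-rearrangement built into $(\redMergeThree)$, so that the two results reunite. The substitution lemma handles the $\beta$ case; for $\redMergeThree$ the care required is essentially bookkeeping on the disjoint partition of labels, which is well defined because the right-hand operand of $\oplus$ is syntactically a record, so its set of labels is fixed before any reduction takes place. Once the diamond property for $\Rightarrow$ is established, confluence of $\reduces^*$ follows from the standard ``strip lemma'' argument: if $M\reduces^* M_1$ and $M\reduces^* M_2$, decompose both into sequences of $\Rightarrow$-steps and close each little diamond to obtain the required $M_3$.
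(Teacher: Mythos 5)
Your proposal is correct and follows essentially the same route as the paper: both extend Tait--Martin-L\"of parallel (1-)reduction with congruence and contractive clauses for record formation, selection, and merge, prove the diamond property by induction with the substitution lemma handling the $\beta$ critical pair, and lift to confluence of $\reduces^*$ via the standard strip argument. Your treatment of the merge overlaps is a more explicit spelling-out of what the paper leaves as ``easily established,'' but the argument is the same.
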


\begin{proof}
By adapting Tait and Martin-L\"of's proof of Church-Rosser property of $\beta$-reduction of $\lambda$-calculus, 
which is based on the so called $1$-reduction 
$\twoheadrightarrow_1$. To the clauses for ordinary $\lambda$-calculus (see e.g.~\cite{Barendregt84} Definition 3.2.3) we add:
\begin{enumerate}[label=\roman{enumi})]
\item $\forall i \in I.  \ M_i \twoheadrightarrow_1 M'_i \Then \record{l_i=M_i \mid i\in I} \twoheadrightarrow_1
		 \record{l_i=M'_i \mid i\in I}$
\item $M \twoheadrightarrow_1 M' \Then M.l  \twoheadrightarrow_1 M'.l$
\item $M \twoheadrightarrow_1 M' \And R \twoheadrightarrow_1 R' \Then M\Override R \twoheadrightarrow_1 M'\Override R'$
\item $\forall i \in I.  \ M_i \twoheadrightarrow_1 M'_i  \And j\in I \Then \record{l_i=M_i \mid i\in I}.l_j \twoheadrightarrow_1 M'_j$
\item $\forall i \in I.  \ M_i \twoheadrightarrow_1 M'_i \And \forall j\in J. \ N_j \twoheadrightarrow_1 N'_j \Then$ \\
		$\Pair{l_i=M_i  \mid  i\in I} \Override \Pair{l_j=N_j  \mid \ j\in J} \twoheadrightarrow_1
		          \Pair{l_i=M'_i, \ l_j=N'_j \mid  i\in I\setminus J, \  j\in J}$
\end{enumerate}
Now it is easy to check that $\twoheadrightarrow_1^* \  = \ \reduces^*$ so that by Lemma 3.2.2 of \cite{Barendregt84} it suffices to
show that $\twoheadrightarrow_1$ satisfies the diamond property:
\[M \twoheadrightarrow_1 M_1 \And M\twoheadrightarrow_1 M_2 \Then \exists M_3. \ M_1\twoheadrightarrow_1 M_3 \And
M_2 \twoheadrightarrow_1 M_3.\]
The latter is easily established by induction over the definition of $M \twoheadrightarrow_1 M_1$ and by cases of
$M\twoheadrightarrow_1 M_2$.
\end{proof}

\medskip

In the spirit of Curry's assignment of polymorphic  types and of intersection types in particular, 
types are introduced as a syntactical tool to capture semantic properties of terms, rather than as constraints to term formation.

\begin{defi}[Intersection types for $\lambdaR$]  \label{def:intertype:lambdaR}
\[\begin{array}{rlll}
\TT \ni \sigma, \sigma_i & ::= &  a \mid \omega \mid \sigma_1\to\sigma_2 \mid \sigma_1 \inter \sigma_2 \mid  \rho & \text{types} \\ [1mm]
\TTR \ni \rho, \rho_i & ::= &   \record{} \mid \record{l:\sigma} \mid \rho_1 + \rho_2 \mid \rho_1 \inter \rho_2 & \text{record types}
\end{array}\]
where 
$a$ ranges over {\em type constants} and $l$ ranges over the denumerable set $\Label$.
\end{defi}
We use $\sigma,\tau$, possibly with sub and superscripts, for types in $\TT$ and $\rho,\rho_i$, possibly with superscripts, for record types in $\TTR$ only. Note that $\to$ associates to the right, and $\inter$ binds stronger than $\to$.
As with intersection type systems for the $\lambda$-calculus, the intended meaning of types are sets, provided 
a set theoretic interpretation of type constants $a$.

Following \cite{bcd}, type semantics is given axiomatically by means of the subtyping relation $\leq$, that can be interpreted as subset inclusion: see e.g.~\cite{Mit96} \S 10.4 for extending such interpretation to record types. It is the least pre-order over $\TT$ satisfying Definition~\ref{def:typeInclusionInt} and Definition~\ref{def:typeInclusionRec-new}.

\begin{defi}[Type inclusion: arrow and intersection types]\label{def:typeInclusionInt}
~ \hfill

\begin{enumerate}
\item \label{def:typeInclusionInt-1}
	$\sigma \leq \omega$ and $\omega \leq \omega\to\omega$
\item \label{def:typeInclusionInt-2}
	$\sigma\inter\tau \leq \sigma$ and $\sigma\inter\tau \leq \tau$
\item \label{def:typeInclusionInt-3}
	$\sigma \leq \tau_1 \And \sigma\leq \tau_2 \Then \sigma \leq \tau_1\inter\tau_2$
\item \label{def:typeInclusionInt-4}
	$(\sigma\to\tau_1) \inter (\sigma\to\tau_2) \leq \sigma \to \tau_1\inter\tau_2$
\item \label{def:typeInclusionInt-5}
	$\sigma_2 \leq \sigma_1 \And \tau_1 \leq \tau_2 \Then \sigma_1\to\tau_1 \leq \sigma_2\to\tau_2$
\end{enumerate}
\end{defi}

By \ref{def:typeInclusionInt}.\ref{def:typeInclusionInt-2} type $\omega$ is the top w.r.t. $\leq$; by \ref{def:typeInclusionInt}.\ref{def:typeInclusionInt-2}  and
\ref{def:typeInclusionInt}.\ref{def:typeInclusionInt-3} $\sigma \inter \tau$ is the meet. 
Writing $\sigma = \tau$ if $\sigma\leq \tau$ and $\tau\leq\sigma$, from
\ref{def:typeInclusionInt}.\ref{def:typeInclusionInt-1} and \ref{def:typeInclusionInt}.\ref{def:typeInclusionInt-5} we have
$\omega \leq \omega\to\omega \leq \sigma\to\omega \leq \omega$,
which are all equalities.
Finally from \ref{def:typeInclusionInt}.\ref{def:typeInclusionInt-4}  and \ref{def:typeInclusionInt}.\ref{def:typeInclusionInt-5} we also
have $(\sigma\to\tau_1) \inter (\sigma\to\tau_2) = \sigma \to \tau_1\inter\tau_2$.

If $\sigma_1 \leq \tau_1$ and $\sigma_2 \leq \tau_2$ then by transitivity we have 
$\sigma_1 \inter \sigma_2 \leq \sigma_1 \leq \tau_1$
and $\sigma_1 \inter \sigma_2 \leq \sigma_2 \leq \tau_2$ from which
$\sigma_1 \inter \sigma_2 \leq \tau_1 \inter \tau_2$ follows by \ref{def:typeInclusionInt}.\ref{def:typeInclusionInt-3}. Hence $\leq$ is
a precongruence w.r.t. $\inter$, and hence $=$ is a congruence.


\begin{defi}[Type inclusion: record types]\label{def:typeInclusionRec-new}
~ \hfill
\begin{enumerate}
\item \label{leq-ax-1} $\record{l:\sigma} \leq \record{}$
\item \label{leq-ax-2} $\Pair{l:\sigma}\inter\Pair{l:\tau} \leq \Pair{l:\sigma\inter\tau}$
\item \label{leq-ax-3} $\sigma \leq \tau \Then \Pair{l:\sigma} \leq \Pair{l:\tau}$
\item \label{leq-ax-4} $\rho + \record{} = \record{} + \rho = \rho$
\item \label{leq-ax-5} $(\rho_1 + \rho_2) + \rho_3 = \rho_1 + (\rho_2 + \rho_3)$
\item \label{leq-ax-6} $(\rho_1 \inter \rho_2) + \rho_3 = (\rho_1 + \rho_3) \inter (\rho_2 + \rho_3)$
\item \label{leq-ax-8} $\Pair{l:\sigma} + (\Pair{l:\tau} \inter \rho) = \Pair{l:\tau} \inter \rho$
\item \label{leq-ax-10} $\Pair{l:\sigma} + (\Pair{l':\tau} \inter \rho) = \Pair{l':\tau} \inter (\Pair{l:\sigma} + \rho)$ 
				if  $l\neq l'$
\item \label{leq-ax-11} $\rho_1 \leq \rho_2 \Then \rho_1 + \rho \leq \rho_2 + \rho$
\item \label{leq-ax-12} $\rho_1 = \rho_2 \Then \rho + \rho_1 = \rho + \rho_2$
\end{enumerate}
\end{defi}

While Definition \ref{def:typeInclusionInt} is standard after \cite{bcd}, comments on Definition \ref{def:typeInclusionRec-new} are in order.
First observe that from $\record{l:\sigma} \leq \record{}$ we obtain that $\record{l:\sigma} \inter \record{} =\record{l:\sigma}$, so that by putting
$\rho = \record{}$ in \ref{def:typeInclusionRec-new}.\ref{leq-ax-8} and in \ref{def:typeInclusionRec-new}.\ref{leq-ax-10} we obtain
by \ref{def:typeInclusionRec-new}.\ref{leq-ax-12}:
\begin{equation}\label{eq:interPlus}
\Pair{l:\sigma} + \Pair{l:\tau} =  \Pair{l:\tau}, \quad\quad
\Pair{l:\sigma} + \Pair{l':\tau} =  \Pair{l:\sigma} \inter \Pair{l':\tau} \qquad (l\neq l').
\end{equation}


Type $\record{}$ is the type of all records. 
Type $\Pair{l:\sigma}$ is a unary record type, whose meaning is the set of records having at least a field labeled by 
$l$, with value of type $\sigma$; therefore
$\record{l:\sigma} \inter \record{l:\tau}$ is the type of records having label $l$ with values both of type $\sigma$ and $\tau$, that is of type $\sigma\inter\tau$. 
In fact the following equation is derivable:
\begin{equation}\label{eq:interRecord}
\record{l:\sigma} \inter \record{l:\tau} = \Pair{l:\sigma\inter\tau}.
\end{equation}
 On the other hand
 $\record{l:\sigma} \inter \record{l':\tau}$, with $l\neq l'$, is the type of records having fields labeled by $l$ and $l'$, with values of type $\sigma$ and 
 $\tau$ respectively. It follows that intersection of record types can be used to express properties of records with arbitrary (though finitely) many fields, which justifies the abbreviations
$
\Pair{l_i:\sigma_i \mid i\in I \neq \emptyset} = \bigcap_{i\in I }  \Pair{l_i:\sigma_i}
$ and $\Pair{l_i:\sigma_i \mid i\in \emptyset} = \record{}$,
where we assume that the $l_i$ are pairwise distinct. 
Finally, as it will be apparent from Definition \ref{def:typeAssignment} below, $\rho_1+ \rho_2$ is the type of all records obtained by merging 
a record of type $\rho_1$ 
with a record of type $\rho_2$, which is intended to type $\Override$ that
is at the same time a record extension and field updating operation. Since this is the distinctive feature of the system introduced here, we comment on this by means of a few lemmas, illustrating its properties.

\begin{lem}\label{lem:derivedLeq}
~ \hfil
\begin{enumerate}
\item \label{lem:derivedLeq-b} $\Pair{l_{j_0}:\sigma} + \Pair{l_j : \tau_{j} \mid j \in J } =  \Pair{ l_j : \tau_{j} \mid j \in J }$ if $j_0 \in J$,
\item \label{lem:derivedLeq-c} $\Pair{l_{j_0}:\sigma} + \Pair{l_j : \tau_{j} \mid j \in J } = \Pair{l_{j_0}:\sigma}  \inter \Pair{ l_j : \tau_{j} \mid j \in J }$
	if $j_0 \not\in J$
\end{enumerate}
\end{lem}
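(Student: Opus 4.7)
The plan is to handle the two parts separately, treating part (1) by a direct application of axiom \ref{def:typeInclusionRec-new}.\ref{leq-ax-8} after rearranging the intersection, and treating part (2) by induction on $|J|$ using axiom \ref{def:typeInclusionRec-new}.\ref{leq-ax-10}. A small preliminary observation is that $\inter$ is commutative and associative (modulo $=$): both properties follow from Definition \ref{def:typeInclusionInt} parts \ref{def:typeInclusionInt-2} and \ref{def:typeInclusionInt-3} together with transitivity. I will use this freely to rearrange intersections. I will also use that $\rho \inter \record{} = \rho$ for any record type $\rho$, which follows from $\rho \leq \record{}$ (derivable from \ref{def:typeInclusionRec-new}.\ref{leq-ax-1} and the fact that all the record-type constructors preserve $\leq \record{}$).

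For part (\ref{lem:derivedLeq-b}), assume $j_0 \in J$ and set $J' = J \setminus \Set{j_0}$, so that up to the congruence $=$ we may write
\[
\Pair{l_j:\tau_j \mid j\in J} \;=\; \Pair{l_{j_0}:\tau_{j_0}} \inter \rho,
\qquad \text{where } \rho = \Pair{l_j:\tau_j \mid j\in J'}.
\]
(When $J' = \emptyset$, we have $\rho = \record{}$, and the decomposition still holds by the observation above.) Axiom \ref{def:typeInclusionRec-new}.\ref{leq-ax-8} now gives directly
\[
\Pair{l_{j_0}:\sigma} + \bigl(\Pair{l_{j_0}:\tau_{j_0}} \inter \rho\bigr)
= \Pair{l_{j_0}:\tau_{j_0}} \inter \rho,
\]
which is the desired equality.

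For part (\ref{lem:derivedLeq-c}), I proceed by induction on $|J|$. For the base case $J = \emptyset$, we have $\Pair{l_j:\tau_j \mid j\in J} = \record{}$, so axiom \ref{def:typeInclusionRec-new}.\ref{leq-ax-4} gives $\Pair{l_{j_0}:\sigma} + \record{} = \Pair{l_{j_0}:\sigma} = \Pair{l_{j_0}:\sigma} \inter \record{}$, using the observation above. For the inductive step, pick any $j_1 \in J$ and write $\Pair{l_j:\tau_j \mid j\in J} = \Pair{l_{j_1}:\tau_{j_1}} \inter \rho'$ with $\rho' = \Pair{l_j:\tau_j \mid j \in J \setminus \Set{j_1}}$. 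Since $j_0 \notin J$, in particular $l_{j_0} \neq l_{j_1}$ (labels are pairwise distinct by the convention on the abbreviation), so axiom \ref{def:typeInclusionRec-new}.\ref{leq-ax-10} applies and yields
\[
\Pair{l_{j_0}:\sigma} + \bigl(\Pair{l_{j_1}:\tau_{j_1}} \inter \rho'\bigr)
= \Pair{l_{j_1}:\tau_{j_1}} \inter \bigl(\Pair{l_{j_0}:\sigma} + \rho'\bigr).
\]
Since $j_0 \notin J \setminus \Set{j_1}$, the induction hypothesis gives $\Pair{l_{j_0}:\sigma} + \rho' = \Pair{l_{j_0}:\sigma} \inter \rho'$, and commutativity/associativity of $\inter$ then rearranges the right-hand side to $\Pair{l_{j_0}:\sigma} \inter \bigl(\Pair{l_{j_1}:\tau_{j_1}} \inter \rho'\bigr) = \Pair{l_{j_0}:\sigma} \inter \Pair{l_j:\tau_j \mid j\in J}$.

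The only potential obstacle is verifying that the side conditions of axioms \ref{def:typeInclusionRec-new}.\ref{leq-ax-8} and \ref{def:typeInclusionRec-new}.\ref{leq-ax-10} match up with how the intersections are decomposed, which is routine once the pairwise-distinctness of the labels is recorded; the congruence properties of $+$ (\ref{def:typeInclusionRec-new}.\ref{leq-ax-11} and \ref{def:typeInclusionRec-new}.\ref{leq-ax-12}) are what justifies replacing sub-expressions under $+$ by equal ones in all steps above.
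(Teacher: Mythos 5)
Your proof is correct and follows essentially the same route as the paper: part (1) by decomposing the intersection to expose $\Pair{l_{j_0}:\tau_{j_0}}$ and applying axiom \ref{def:typeInclusionRec-new}.\ref{leq-ax-8}, and part (2) by induction on $|J|$ using axioms \ref{def:typeInclusionRec-new}.\ref{leq-ax-4} and \ref{def:typeInclusionRec-new}.\ref{leq-ax-10}, with the congruence rule \ref{def:typeInclusionRec-new}.\ref{leq-ax-12} justifying the rewriting under $+$. Your explicit handling of the singleton case $J' = \emptyset$ in part (1) is a small extra care the paper leaves implicit, but it is the same argument.
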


\begin{proof} 
~ \hfill
\begin{enumerate}
\item By commutativity and associativity of the meet operator $\inter$ and by
	\ref{def:typeInclusionRec-new}.\ref{leq-ax-12}, when $j_0 \in J$ we may freely assume that
	\[ \Pair{l_{j_0}:\sigma} + \Pair{l_j : \tau_{j} \mid j \in J } = 
		\Pair{l_{j_0}:\sigma} + (\Pair{l_{j_0}:\tau_{j_0}} \inter \Pair{l_j : \tau_{j} \mid j \in J\setminus \Set{j_0} }) \]
	which is equal to $\Pair{l_{j_0}:\tau_{j_0}} \inter  \Pair{l_j : \tau_{j} \mid j \in J\setminus \Set{j_0} }$ 
	by \ref{def:typeInclusionRec-new}.\ref{leq-ax-8},
	namely to $\Pair{ l_j : \tau_{j} \mid j \in J }$.
\item By induction over the cardinality of $J$. If it is $0$ then $\Pair{l_j : \tau_{j} \mid j \in \emptyset} = \record{}$ and:
	\[  \Pair{l_{j_0}:\sigma} + \record{} = \Pair{l_{j_0}:\sigma} = \Pair{l_{j_0}:\sigma} \inter \record{} \]
	using \ref{def:typeInclusionRec-new}.\ref{leq-ax-4}, \ref{def:typeInclusionRec-new}.\ref{leq-ax-1} and the fact that $\inter$ is the meet.
	If $|J| > 0$ then, by reasoning as before we have:
	\[ \Pair{l_{j_0}:\sigma} + \Pair{l_j : \tau_{j} \mid j \in J } = 
		\Pair{l_{j_0}:\sigma} + (\Pair{l_{j_1}:\tau_{j_1}} \inter \Pair{l_j : \tau_{j} \mid j \in J\setminus \Set{j_1} }) .\]
	for some $j_1 \in J$. Now $j_0 \not\in J$ implies $l_{j_0} \neq l_{j_1}$, and the thesis follows 
	by \ref{def:typeInclusionRec-new}.\ref{leq-ax-10} 
	and the induction hypothesis.\qedhere
\end{enumerate}

\end{proof}

\begin{lem}\label{lem:normal-recordTypes}
~\hfill
\begin{enumerate}
\item \label{lem:normal-recordTypes-a}
	$\Pair{ l_i : \sigma_{i} \mid i \in I } \leq \Pair{l_j : \tau_{j} \mid j \in J } \Iff J \subseteq I \And \forall j \in J.\;\sigma_j \leq \tau_j $,
\item \label{lem:normal-recordTypes-ab}
	$ \Pair{ l_i : \sigma_{i} \ \mid \ i \in I } \inter  \Pair{ l_j : \tau_{j} \mid j \in J } =
		\Pair{l_i : \sigma_{i},  l_j : \tau_{j}, l_k:\sigma_k\inter\tau_k \mid i\in I\setminus J, j\in J\setminus I, k \in I \cap J}$,
\item \label{lem:normal-recordTypes-b}
	$ \Pair{ l_i : \sigma_{i} \ \mid \ i \in I } +  \Pair{ l_j : \tau_{j} \mid j \in J } =
		\Pair{l_i : \sigma_{i},  l_j : \tau_{j} \mid i\in I\setminus J, j\in J}$,
\item \label{lem:normal-recordTypes-c}
	$\forall \rho \in \TTR .\; \exists \ \Pair{l_i:\sigma_i  \mid  i\in I}.~
		\rho = \Pair{l_i:\sigma_i \mid  i\in I} $.
\end{enumerate}
\end{lem}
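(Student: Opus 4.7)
The plan is to prove the four clauses in the order $(\ref{lem:normal-recordTypes-ab})$, $(\ref{lem:normal-recordTypes-b})$, $(\ref{lem:normal-recordTypes-c})$, $(\ref{lem:normal-recordTypes-a})$, because the normal-form existence claim $(\ref{lem:normal-recordTypes-c})$ relies on the first two, and the subtyping characterization $(\ref{lem:normal-recordTypes-a})$ is cleanest to derive once both sides are already in normal form.

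For $(\ref{lem:normal-recordTypes-ab})$ I would partition the combined index set into $I\setminus J$, $J\setminus I$, and $I\cap J$; by commutativity and associativity of $\inter$ (which hold since $\inter$ is the meet) one can regroup the conjuncts $\Pair{l_k:\sigma_k}$ and $\Pair{l_k:\tau_k}$ for each $k\in I\cap J$ next to each other, and equation $(\ref{eq:interRecord})$ then collapses each such pair into $\Pair{l_k:\sigma_k\inter\tau_k}$, yielding the right-hand side. Clause $(\ref{lem:normal-recordTypes-b})$ I would establish by induction on $|I|$: the base case $I=\emptyset$ is axiom $\ref{def:typeInclusionRec-new}.\ref{leq-ax-4}$; in the inductive step, picking any $i_0\in I$, distributivity $(\ref{def:typeInclusionRec-new}.\ref{leq-ax-6})$ isolates the summand $\Pair{l_{i_0}:\sigma_{i_0}}+\Pair{l_j:\tau_j\mid j\in J}$, which is handled by Lemma $\ref{lem:derivedLeq}(\ref{lem:derivedLeq-b})$ when $i_0\in J$ (absorbing $l_{i_0}$) and by $(\ref{lem:derivedLeq-c})$ when $i_0\notin J$ (promoting the $+$ to an $\inter$); the induction hypothesis closes the remaining $I\setminus\{i_0\}$ conjuncts.

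Claim $(\ref{lem:normal-recordTypes-c})$ follows by a straightforward structural induction on $\rho\in\TTR$: the generators $\record{}$ and $\Pair{l:\sigma}$ are already normal (with $I=\emptyset$ and $I=\{i_0\}$ respectively), while the cases $\rho_1\inter\rho_2$ and $\rho_1+\rho_2$ apply the induction hypothesis to normalize the operands and then invoke $(\ref{lem:normal-recordTypes-ab})$ or $(\ref{lem:normal-recordTypes-b})$ to collapse the result.

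The $(\Leftarrow)$ direction of $(\ref{lem:normal-recordTypes-a})$ is immediate from axiom $(\ref{def:typeInclusionRec-new}.\ref{leq-ax-3})$ applied label-wise together with $\inter$-projection and the meet rule $(\ref{def:typeInclusionInt}.\ref{def:typeInclusionInt-3})$. The forward direction is the main obstacle, since no axiom of Definition~$\ref{def:typeInclusionRec-new}$ directly inverts a subtyping involving a unary record type on the right. My plan is to establish a $\beta$-soundness style inversion lemma stating that whenever $\bigcap_{i\in I}\Pair{l_i:\sigma_i}\leq\Pair{l:\tau}$ with the $l_i$ pairwise distinct, then $l=l_{i_0}$ for some $i_0\in I$ and $\sigma_{i_0}\leq\tau$. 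I would prove this by induction on the derivation in a standard Hindley-style inductive presentation of $\leq$ (reflexivity, transitivity, congruence, and the axioms), using the fact that only $(\ref{def:typeInclusionRec-new}.\ref{leq-ax-2})$ and $(\ref{def:typeInclusionRec-new}.\ref{leq-ax-3})$ can produce a unary record type on the right, and that both of them preserve the label. Applying this inversion to each conjunct $\Pair{l_j:\tau_j}$ of the right-hand side of the assumed inequality then yields both $J\subseteq I$ and $\sigma_j\leq\tau_j$ for every $j\in J$.
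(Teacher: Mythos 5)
Your proposal takes essentially the same route as the paper: clause~(\ref{lem:normal-recordTypes-ab}) via commutativity/associativity of $\inter$ and equation~(\ref{eq:interRecord}), clause~(\ref{lem:normal-recordTypes-b}) via repeated left-distributivity (Definition~\ref{def:typeInclusionRec-new}.\ref{leq-ax-6}) plus Lemma~\ref{lem:derivedLeq}, clause~(\ref{lem:normal-recordTypes-c}) by structural induction using the previous two, and both directions of clause~(\ref{lem:normal-recordTypes-a}) by induction (on the derivation of $\leq$, resp.\ on $|J|$). The only differences are cosmetic --- you reorder the clauses and factor the derivation induction through a unary-target inversion lemma rather than running it directly on the general intersection on the right --- and your case analysis (including the $\inter$-projection instance of Definition~\ref{def:typeInclusionInt}.\ref{def:typeInclusionInt-2}, which can also place a unary record type on the right) lines up with the paper's.
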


\begin{proof} Through this proof let $\rho_1 \equiv \Pair{ l_i : \sigma_{i} \mid i \in I }$ and $\rho_2 \equiv \Pair{l_j : \tau_{j} \mid j \in J }$,
where $\equiv$ denotes syntactic identity up to commutativity and associativity of $\inter$.

\begin{enumerate}

\item The only if part is proved by induction over the derivation of $\rho_1 \leq \rho_2$. If this equation is an instance of axiom \ref{def:typeInclusionRec-new}.\ref{leq-ax-1} then $J = \emptyset \subseteq I$ and $\forall j \in J.\;\sigma_j \leq \tau_j$ is vacuously true.

\noindent
Axiom \ref{def:typeInclusionRec-new}.\ref{leq-ax-2} doesn't apply being the labels $l_i$ pairwise distinct. 

\noindent
In case the derivation ends by rule \ref{def:typeInclusionRec-new}.\ref{leq-ax-3} we have $I = J = \Set{l}$ and $\sigma \leq \tau$ by the 
premise of the rule.

\noindent
In case of the axiom instance $\record{l_1:\sigma} \inter \record{l_2:\tau} \leq \record{l_1:\sigma}$ of 
\ref{def:typeInclusionInt}.\ref{def:typeInclusionInt-2} we have
$J = \Set{l_1} \subseteq \Set{l_1, l_2} = I$ and obviously $\sigma \leq \sigma$. 

\noindent
Finally in case of deriving $\rho_1 \leq (\rho_3 \inter \rho_4)$ from $\rho_1 \leq \rho_3$ and 
$\rho_1 \leq \rho_4$, where $\rho_2 \equiv \rho_3 \inter \rho_4$, we have that for some $J', J''$ such that $J = J' \cup J''$ it
is $\rho_3 \equiv \Pair{l_h : \tau_{h} \mid h \in J'}$ and $\rho_4 \equiv \Pair{l_k : \tau_{k} \mid k \in J''}$. By induction we know that
$J' \subseteq I$ and $\sigma_h \leq \tau_h$ for all $h\in J'$ and $J'' \subseteq I$ and $\sigma_k \leq \tau_k$ for all $k\in J''$. Therefore
$J = J' \cup J'' \subseteq I$ and $\sigma_j \leq \tau_j$ for all $j \in J$.

\noindent
For the if part we reason by induction over the cardinality of $J$. If $|J| = 0$ then $\rho_2 \equiv \record{}$, hence either $I = \emptyset$, but
then $\rho_1 \equiv \record{}$; or $I \neq \emptyset$ so that
$\rho_1 \leq \Pair{l_i:\sigma_i} \leq \record{}$ for any $i \in I$, using axiom \ref{def:typeInclusionRec-new}.\ref{leq-ax-1}.
If $|J| > 0$ then let $j_0 \in J$ be any index: then $\rho_2 \equiv \Pair{l_j:\tau_j \mid j \in J\setminus \Set{j_0}} \inter \Pair{l_{j_0}:\tau_{j_0}}$.
Now by induction $\rho_1 \leq \Pair{l_j:\tau_j \mid j \in J\setminus \Set{j_0}}$; by hypothesis $\sigma_{j_0}\leq \tau_{j_0}$  so
that $\rho_1 \leq  \Pair{l_{j_0}:\sigma_{j_0}} \leq  \Pair{l_{j_0}:\tau_{j_0}}$ and we conclude being $\inter$ the meet w.r.t. $\leq$. 

\item
This is an immediate consequence of commutativity, associativity, idempotency of $\inter$ and 
equation (\ref{eq:interRecord}).

\item
If $I = \emptyset$ then$\rho_1  = \record{}$ and the thesis follows by 
\ref{def:typeInclusionRec-new}.\ref{leq-ax-4}. Otherwise we have:
\[\begin{array}{llll}
\rho_1 + \rho_2 & = & \bigcap_{i\in I} (\record{l_i:\sigma_i} + \rho_2) &
	 \mbox{by repeated applications of \ref{def:typeInclusionRec-new}.\ref{leq-ax-6}} \\
& = & \bigcap_{i\in I} \Pair{l_i:\sigma_i , l_j: \tau_j\mid i \not \in J, j \in J} & 
	\mbox{by Lemma \ref{lem:derivedLeq}.\ref{lem:derivedLeq-b} and \ref{lem:derivedLeq}.\ref{lem:derivedLeq-c}} \\
& \equiv & \Pair{l_i : \sigma_{i},  l_j : \tau_{j} \mid i\in I\setminus J, j\in J} .
\end{array}\]
	
\item By induction over $\rho$. If $\rho = \record{}$ there is nothing to prove. 
	
	\noindent
	If $\rho \equiv \rho_1 \inter \rho_2$ then by induction 
	$\rho_1 = \Pair{l_i^1:\sigma_i^1 \mid i \in I_1}$ and
	$\rho_2 = \Pair{l_i^2:\sigma_i^2 \mid i \in I_2}$ so that the thesis follows by
	(\ref{lem:normal-recordTypes-ab}) above and the fact that $=$ is a congruence w.r.t. $\inter$.
	
	\noindent
	Similarly if $\rho \equiv \rho_1 + \rho_2$ the thesis follows by induction using
	(\ref{lem:normal-recordTypes-b}) and the fact that $=$ is a congruence w.r.t. $+$
	by \ref{def:typeInclusionRec-new}.\ref{leq-ax-11} and \ref{def:typeInclusionRec-new}.\ref{leq-ax-12}.\qedhere
\end{enumerate}
\end{proof}

\begin{rem}\label{rem:interPlus}
By \ref{def:typeInclusionRec-new}.\ref{leq-ax-6} the $+$ distributes to the left over $\inter$;
however this doesn't hold to the right, namely
$\rho_1+(\rho_2\inter\rho_3) \neq (\rho_1+\rho_2)\inter(\rho_1+\rho_3)$: take $\rho_1\equiv \record{l_1:\sigma_1, l_2:\sigma_2}$,
$\rho_2 \equiv \record{l_1:\sigma'_1}$ and $\rho_3 \equiv \record{l_2:\sigma'_2}$, with $\sigma_1 \neq \sigma'_1$ and $\sigma_2 \neq \sigma'_2$. Then by Lemma \ref{lem:normal-recordTypes}.\ref{lem:normal-recordTypes-b} we have:
$\rho_1+(\rho_2\inter\rho_3) = \rho_1 + \record{l_1:\sigma'_1, l_2:\sigma'_2} = \record{l_1:\sigma'_1, l_2:\sigma'_2}$,
while
$ (\rho_1+\rho_3)\inter(\rho_2+\rho_3) = \record{l_1:\sigma_1, l_2:\sigma'_2} \inter\record{l_1:\sigma'_1, l_2:\sigma_2} =
\record{l_1:\sigma_1 \inter \sigma'_1, l_2:\sigma_2 \inter\sigma'_2}.
$
The last example suggests that $(\rho_1+\rho_2)\inter(\rho_1+\rho_3) \leq \rho_1+(\rho_2\inter\rho_3)$.

On the other hand by \ref{def:typeInclusionRec-new}.\ref{leq-ax-11} $+$ is monotonic in its first argument. However we have $\rho_2\leq\rho_3 \not\Rightarrow \rho_1+\rho_2 \leq \rho_1+\rho_3$. Indeed: 
\[
\begin{array}{llll}
\record{l_0:\sigma_1,l_1:\sigma_2} + \record{l_1:\sigma_3, l_2:\sigma_4} & = & \record{l_0:\sigma_1,l_1:\sigma_3, l_2:\sigma_4} \\
& \not\leq &  \record{l_0:\sigma_0,l_1:\sigma_1, l_2:\sigma_4} & \mbox{if $\sigma_3\not\leq\sigma_1$}\\
& = & \record{l_0:\sigma_1,l_1:\sigma_2} + \record{l_2:\sigma_4}
\end{array}\]
even if $\record{l_1:\sigma_1, l_2:\sigma_2} \leq \record{l_2:\sigma_2}$.
From this we conclude that $+$ is not monotonic in its second argument. Comparing this with \ref{def:typeInclusionRec-new}.\ref{leq-ax-12} we
see that $\leq$ is not a precongruence w.r.t. $+$, while $=$, namely the symmetric closure of $\leq$, is a congruence.

Finally if one assumes (\ref{lem:normal-recordTypes-a})-(\ref{lem:normal-recordTypes-b}) of Lemma \ref{lem:normal-recordTypes} as axioms
then (pre)-congruence axioms in Definition \ref{def:typeInclusionRec-new} become derivable. 
The opposite is hardly provable and possibly false\footnote{Remark of an anonymous referee about our former attempt to derive
\ref{def:typeInclusionRec-new}.\ref{leq-ax-11} and \ref{def:typeInclusionRec-new}.\ref{leq-ax-12} from the other axioms.}.
\end{rem}

Lemma \ref{lem:normal-recordTypes}.\ref{lem:normal-recordTypes-a} states that subtyping among intersection of unary record types
subsumes subtyping in width and depth of ordinary record types from the literature. 
Lemma \ref{lem:normal-recordTypes}.\ref{lem:normal-recordTypes-b} shows that
the $+$ type constructor reflects at the level of types the operational behavior of the merge operator $\oplus$.
Lemma \ref{lem:normal-recordTypes}.\ref{lem:normal-recordTypes-c} says that any record type is equivalent to an intersection of unary record types; this
implies that types of the form $\rho_1 + \rho_2$ are eliminable in principle. 
However, they play a key role in typing mixins, motivating the issue of control of negative information in the synthesis process: 
see Section \ref{sec:synthesis}.
More properties of subtyping record types w.r.t. $+$ and $\inter$ are
listed in the next two lemmas. 

\begin{lem}\label{lem:typePlus}
$\rho_1 + \rho_2 = \rho_1\inter\rho_2 \Iff  \rho_1+\rho_2 \leq \rho_1$.
\end{lem}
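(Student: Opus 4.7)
The forward direction is immediate: since $\inter$ is the meet with respect to $\leq$ (from Definition \ref{def:typeInclusionInt}.\ref{def:typeInclusionInt-2}), $\rho_1 \inter \rho_2 \leq \rho_1$, so if $\rho_1 + \rho_2 = \rho_1 \inter \rho_2$ then $\rho_1 + \rho_2 \leq \rho_1$.

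For the backward direction, the plan is to reduce both $\rho_1 + \rho_2$ and $\rho_1 \inter \rho_2$ to their canonical form of intersection of unary record types, using Lemma \ref{lem:normal-recordTypes}.\ref{lem:normal-recordTypes-c} to write $\rho_1 = \Pair{l_i:\sigma_i \mid i \in I}$ and $\rho_2 = \Pair{l_j:\tau_j \mid j \in J}$, then show the normal forms coincide under the hypothesis. Explicitly, Lemma \ref{lem:normal-recordTypes}.\ref{lem:normal-recordTypes-b} yields
\[
\rho_1 + \rho_2 = \Pair{l_i:\sigma_i, l_j:\tau_j \mid i \in I \setminus J, \ j \in J}
\]
whose set of labels is $I \cup J$, while Lemma \ref{lem:normal-recordTypes}.\ref{lem:normal-recordTypes-ab} yields
\[
\rho_1 \inter \rho_2 = \Pair{l_i:\sigma_i, l_j:\tau_j, l_k:\sigma_k \inter \tau_k \mid i \in I \setminus J, \ j \in J \setminus I, \ k \in I \cap J}
\]
with the same set of labels. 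Thus the two types differ only on labels $l_k$ with $k \in I \cap J$: $\rho_1 + \rho_2$ assigns type $\tau_k$ while $\rho_1 \inter \rho_2$ assigns $\sigma_k \inter \tau_k$.

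Now I apply Lemma \ref{lem:normal-recordTypes}.\ref{lem:normal-recordTypes-a} to the assumption $\rho_1 + \rho_2 \leq \rho_1$: for every $i \in I$, the component assigned to $l_i$ in $\rho_1 + \rho_2$ must be a subtype of $\sigma_i$. For $i \in I \setminus J$ this is trivial, while for $k \in I \cap J$ it gives precisely $\tau_k \leq \sigma_k$, whence $\sigma_k \inter \tau_k = \tau_k$. Substituting this equality into the normal form of $\rho_1 \inter \rho_2$ (using that $=$ is a congruence with respect to $\inter$) shows it coincides with the normal form of $\rho_1 + \rho_2$, giving $\rho_1 + \rho_2 = \rho_1 \inter \rho_2$.

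The main obstacle is just bookkeeping of the label sets; once Lemma \ref{lem:normal-recordTypes} is in hand the argument is purely combinatorial, and the slightly subtle point is only that the hypothesis $\rho_1 + \rho_2 \leq \rho_1$ contains exactly the information needed to collapse $\sigma_k \inter \tau_k$ to $\tau_k$ on the overlapping labels.
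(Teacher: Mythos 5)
Your proof is correct and uses essentially the same argument as the paper: both reduce $\rho_1$, $\rho_2$ to intersections of unary record types via Lemma \ref{lem:normal-recordTypes}.\ref{lem:normal-recordTypes-c}, compare the normal forms of $\rho_1+\rho_2$ and $\rho_1\inter\rho_2$ componentwise with Lemma \ref{lem:normal-recordTypes}.\ref{lem:normal-recordTypes-a}, and observe that everything hinges on $\tau_k\leq\sigma_k$ (equivalently $\sigma_k\inter\tau_k=\tau_k$) on the overlapping labels. The only cosmetic difference is that the paper spells out the $\Rightarrow$ direction and reverses the chain for $\Leftarrow$, whereas you dispatch $\Rightarrow$ immediately from the meet property and spell out $\Leftarrow$; both are fine.
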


\begin{proof} W.l.o.g. by \ref{lem:normal-recordTypes}.\ref{lem:normal-recordTypes-c} 
let us assume that $\rho_1 = \record{l_i:\sigma_{1,i} \mid i\in I}$, $\rho_2 = \record{l_j:\sigma_{2,j} \mid j\in J}$ and
$\rho_3 = \record{l_k:\sigma_{3,k} \mid k\in K}$. 
If $\rho_1 + \rho_2 = \rho_1\inter\rho_2$ then by 
 \ref{lem:normal-recordTypes}.\ref{lem:normal-recordTypes-ab} and 
  \ref{lem:normal-recordTypes}.\ref{lem:normal-recordTypes-b}:
\begin{align*}
&\record{l_i: \sigma_{1,i}, l_j:\sigma_{2,j} \mid i \in I\setminus J, j\in J}\\
=\;&\record{l_i: \sigma_{1,i}, l_j:\sigma_{2,j}, l_k:\sigma_{1,k} \inter\sigma_{2,k}   \mid i \in I\setminus J, j\in J\setminus I, k\in I \cap I}
\end{align*}
Obviously $J = (J\setminus I) \cup (I\cap J)$ and from \ref{lem:normal-recordTypes}.\ref{lem:normal-recordTypes-a}
we deduce that $\sigma_{2,j} = \sigma_{1,j} \inter\sigma_{2,j}$ for all $j\in J$,
hence $\sigma_{2,j} \leq \sigma_{1,j}$ so we conclude that $\rho_1+\rho_2 \leq \rho_1$ again by 
\ref{lem:normal-recordTypes}.\ref{lem:normal-recordTypes-a}. Since all these implications can be reverted, we conclude.
\end{proof}

Let us define the map 
$\lbl: \TTR \to \wp(\Label)$ (where $\wp(\Label)$ is the powerset of $\Label$) by:
\[
\begin{array}{llll}
\lbl(\record{l:\sigma}) = \Set{l}, & \lbl(\rho_1 \inter \rho_2) = \lbl(\rho_1 + \rho_2) = \lbl(\rho_1) \cup \lbl(\rho_2).
\end{array}
\]

\begin{lem}\label{lem:label}
~\hfill
\begin{enumerate}
\item \label{lem:label-a}
	$\rho_1 = \rho_2 \Then \lbl(\rho_1) = \lbl(\rho_2)$,
\item \label{lem:label-b}
	$\lbl(\rho_1) \cap \lbl(\rho_2) = \emptyset \Then \rho_1 + \rho_2 = \rho_1 \inter \rho_2$,
\item \label{lem:label-bb}
	$\lbl(\rho_1) \subseteq \lbl(\rho_2)  \Then \rho_1 + \rho_2 = \rho_2$,
\item \label{lem:label-c}
	$\lbl(\rho_2) = \lbl(\rho_3) \Then (\rho_1 + \rho_2) \inter (\rho_1 + \rho_3) = \rho_1 + (\rho_2 \inter \rho_3)$.
\end{enumerate}
\end{lem}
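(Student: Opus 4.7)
The uniform strategy is to reduce every claim to a calculation in the canonical form guaranteed by Lemma~\ref{lem:normal-recordTypes}.\ref{lem:normal-recordTypes-c}, namely $\rho = \record{l_i:\sigma_i \mid i \in I}$ with the $l_i$ pairwise distinct. The technical glue is the following auxiliary observation, to be proved first by a short induction on $\rho$: if $\rho = \record{l_i:\sigma_i \mid i \in I}$ is in canonical form, then $\lbl(\rho) = \{l_i \mid i \in I\}$. The induction cases are immediate from the definition of $\lbl$ together with the canonical-form computations for $\inter$ and $+$ in Lemma~\ref{lem:normal-recordTypes}.\ref{lem:normal-recordTypes-ab} (whose label set is $I \cup J$) and~\ref{lem:normal-recordTypes}.\ref{lem:normal-recordTypes-b} (whose label set is $(I\setminus J) \cup J = I \cup J$, matching $\lbl(\rho_1+\rho_2) = \lbl(\rho_1) \cup \lbl(\rho_2)$).

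For (\ref{lem:label-a}), put both $\rho_1$ and $\rho_2$ in canonical form $\record{l_i:\sigma_i \mid i \in I_1}$ and $\record{l_j:\tau_j \mid j \in I_2}$. From $\rho_1 = \rho_2$, i.e.\ $\rho_1 \leq \rho_2$ and $\rho_2 \leq \rho_1$, Lemma~\ref{lem:normal-recordTypes}.\ref{lem:normal-recordTypes-a} applied in both directions yields $I_1 = I_2$ as index sets (in particular as label sets). By the auxiliary observation, $\lbl(\rho_1) = \lbl(\rho_2)$.

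Parts (\ref{lem:label-b})--(\ref{lem:label-c}) are then direct normal-form computations using Lemma~\ref{lem:normal-recordTypes}.\ref{lem:normal-recordTypes-ab} and~\ref{lem:normal-recordTypes}.\ref{lem:normal-recordTypes-b}, and the fact that the distribution axiom \ref{def:typeInclusionRec-new}.\ref{leq-ax-6} lets us push $+$ over $\inter$ on the left. For (\ref{lem:label-b}), writing $\rho_k$ in canonical form with index sets $I_k$, the hypothesis says $I_1 \cap I_2 = \emptyset$, so the formulas of Lemma~\ref{lem:normal-recordTypes}.\ref{lem:normal-recordTypes-ab} and~\ref{lem:normal-recordTypes}.\ref{lem:normal-recordTypes-b} give the same record $\record{l_i:\sigma_i, l_j:\tau_j \mid i \in I_1, j \in I_2}$. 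For (\ref{lem:label-bb}), $I_1 \subseteq I_2$ forces $I_1 \setminus I_2 = \emptyset$ in Lemma~\ref{lem:normal-recordTypes}.\ref{lem:normal-recordTypes-b}, leaving exactly $\rho_2$. For (\ref{lem:label-c}), since $\lbl(\rho_2) = \lbl(\rho_3)$ we may take a common index set $L$ for $\rho_2$ and $\rho_3$; the left-hand side then reduces, after idempotency of $\inter$ on the factor at labels in $I_1\setminus L$ and componentwise intersection on $L$, to $\record{l:\sigma_l^1 \mid l \in I_1\setminus L} \inter \record{l:\sigma_l^2 \inter \sigma_l^3 \mid l \in L}$, and the right-hand side first computes $\rho_2 \inter \rho_3 = \record{l:\sigma_l^2\inter\sigma_l^3 \mid l \in L}$ and then merges to the same record.

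The only real subtlety is (\ref{lem:label-a}), which requires the well-definedness of $\lbl$ on $=$-equivalence classes; once the auxiliary invariance of $\lbl$ under canonicalization is in place, the remaining parts are routine. The hypothesis $\lbl(\rho_2) = \lbl(\rho_3)$ in (\ref{lem:label-c}) is exactly what prevents the pathology of Remark~\ref{rem:interPlus}, where right-distributivity of $+$ over $\inter$ fails precisely because the inner operands have differing label sets.
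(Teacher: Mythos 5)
Your proposal is correct and follows essentially the same route as the paper: reduce everything to the canonical form of Lemma~\ref{lem:normal-recordTypes}.\ref{lem:normal-recordTypes-c} and compute with the normal-form formulas \ref{lem:normal-recordTypes}.\ref{lem:normal-recordTypes-a}, \ref{lem:normal-recordTypes}.\ref{lem:normal-recordTypes-ab} and \ref{lem:normal-recordTypes}.\ref{lem:normal-recordTypes-b}. The one addition --- your explicit auxiliary observation that the syntactically defined $\lbl$ agrees with the label set of the canonical form --- is a useful clarification of a step the paper leaves implicit (it is needed both for part~(\ref{lem:label-a}) and to transfer the hypotheses of parts~(\ref{lem:label-b})--(\ref{lem:label-c}) to the canonical representatives), but it does not change the substance of the argument.
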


\begin{ProofNoQed} Part (\ref{lem:label-a}) follows by \ref{lem:normal-recordTypes}.\ref{lem:normal-recordTypes-a}, and
(\ref{lem:label-b}) is a consequence of the fact that if $\lbl(\rho_1) \cap \lbl(\rho_2) = \emptyset$ then equation
(\ref{eq:interPlus}) $\Pair{l:\sigma} + \Pair{l':\tau} =  \Pair{l:\sigma} \inter \Pair{l':\tau}$
repeatedly applies. Part (\ref{lem:label-bb}) follows by
 \ref{lem:normal-recordTypes}.\ref{lem:normal-recordTypes-b}.
 
To see part (\ref{lem:label-c}) we can assume by \ref{lem:normal-recordTypes}.\ref{lem:normal-recordTypes-c} that
$\rho_1 = \record{l_i:\sigma_{1,i} \mid i\in I}$, $\rho_2 = \record{l_j:\sigma_{2,j} \mid j\in J}$ and
$\rho_3 = \record{l_k:\sigma_{3,k} \mid k\in K}$. Since $\lbl(\rho_2) = \lbl(\rho_3)$ we have that $J = K$, so that:
\[\begin{array}{llll}
\rho_1 + \rho_2 & = & \record{l_i:\sigma_{1,i}, l_j:\sigma_{2,j}  \mid i\in I\setminus J, j\in J} & 
	\mbox{by \ref{lem:normal-recordTypes}.\ref{lem:normal-recordTypes-b}}\\ [1mm]
\rho_1 + \rho_3 & = & \record{l_i:\sigma_{1,i}, l_j:\sigma_{3,j}  \mid i\in I\setminus J, j\in J} & 
	\mbox{by \ref{lem:normal-recordTypes}.\ref{lem:normal-recordTypes-b}}\\ [1mm]
\rho_2 \inter \rho_3 & = & \record{l_j:\sigma_{2,j}\inter\sigma_{3,j}  \mid j \in J } & 
	\mbox{by \ref{lem:normal-recordTypes}.\ref{lem:normal-recordTypes-ab}}
\end{array}\]
Again by \ref{lem:normal-recordTypes}.\ref{lem:normal-recordTypes-ab}, \ref{lem:normal-recordTypes}.\ref{lem:normal-recordTypes-b}
and the fact that $\sigma_{1,i} \inter \sigma_{1,i} = \sigma_{1,i}$ 
we conclude that $(\rho_1 + \rho_2) \inter (\rho_1 + \rho_3)$ and $\rho_1 + (\rho_2 \inter \rho_3)$ are both equal to
\[
  \record{l_i:\sigma_{1,i}, l_j:\sigma_{2,j}\inter\sigma_{3,j} \mid  i\in I\setminus J, j\in J}.\tag*{\qEd}
\]
\end{ProofNoQed}

About Lemma \ref{lem:label}.\ref{lem:label-b} above note that 
condition $\lbl(\rho_1) \cap \lbl(\rho_2) = \emptyset$ is essential, since $\inter$ is commutative while
$\rho_1 + \rho_2 \neq \rho_2 + \rho_1$ in general, as it immediately follows by 
Lemma \ref{lem:normal-recordTypes}.\ref{lem:normal-recordTypes-b}.

%
%

\medskip
We come now to the type assignment system. A {\em basis} (also called a context in the literature) 
is a finite set $\Context = \Set{x_1:\sigma_n,\ldots, x_n:\sigma_n}$, where the variables $x_i$ are pairwise distinct; we set
$\dom(\Context) = \Set{x \mid \exists \, \sigma.~x:\sigma \in\Context}$ and we write $\Context, x:\sigma$ for $\Context\cup\Set{x:\sigma}$
where $x\not\in\dom(\Context)$. Then we consider the following extension of the
system in \cite{bcd}, also called {\bf BCD} in the literature.

\begin{defi}[Type Assignment]\label{def:typeAssignment}
The rules of the assignment system are:
\[
\scalebox{0.9}{\mbox{\ensuremath{\displaystyle{\begin{array}{ccc }
\prooftree
	x:\sigma \in \Context
	\justifies
	\Context \deduces x:\sigma
	\using (\Axiom)
\endprooftree 
&
\prooftree
	\Context, x:\sigma \deduces M: \tau
	\justifies
	\Context \deduces \lambda x.M: \sigma\to\tau
	\using (\ArrI)
\endprooftree 
&
\prooftree
	\Context \deduces M: \sigma\to\tau \quad \Context \deduces N:\sigma
	\justifies
	\Context \deduces MN:\tau
	\using (\ArrE)
\endprooftree 

\\ [6mm]

\prooftree
	\Context \deduces M: \sigma \quad 
	\Context \deduces M: \tau
	\justifies
	\Context \deduces M: \sigma\inter\tau
	\using (\inter)
\endprooftree 
&
\prooftree
	\justifies
	\Context \deduces M:\omega
	\using (\omega)
\endprooftree 
&
\prooftree
	\Context \deduces M: \sigma \qquad \sigma\leq\tau
	\justifies
	\Context \deduces M : \tau
	\using (\leq)
\endprooftree

\\ [6mm]

\prooftree
	\justifies
	\Context \deduces \record{ l_i = M_i \ | \ i \in I } : \record{}
	\using (\record{})
\endprooftree 

&

\prooftree
	\Context \deduces M_k: \sigma \quad 
	k \in I
	\justifies
	\Context \deduces \record{ l_i = M_i \ | \ i \in I } : \Pair{l_k: \sigma}
	\using (\recRule)
\endprooftree 
&
\prooftree
	\Context \deduces M: \Pair{l:\sigma}
	\justifies
	\Context \deduces M.l : \sigma
	\using (\selRule)
\endprooftree 

\\ [6mm]

\multicolumn{3}{c}{
\prooftree
	\Context \deduces M: \rho_1 \qquad \Context \deduces R: \rho_2 \quad (*)
	\justifies
	\Context \deduces M \oplus R : \rho_1 + \rho_2
	\using (+)
\endprooftree 
}

\end{array}}}}}\]
where $(*)$ in rule $(+)$ is the side condition: $\lbl(R) = \lbl(\rho_2)$. 
\end{defi}

\noindent
Using Lemma \ref{lem:normal-recordTypes}.\ref{lem:normal-recordTypes-a}, the following rule is easily shown to be admissible:
\[
\prooftree
	\Context \deduces M_j:\sigma_j \qquad \forall j \in J \subseteq I
	\justifies
	\Context \deduces \record{l_i=M_i \mid  i\in I}: \Pair{l_j:\sigma_i \mid  j\in J} 
\using (\recRule')
\endprooftree 
\]
Contrary to this, the side condition $(*)$ of rule $(+)$ is equivalent to ``exact'' record typing in \cite{BrachaC90}, disallowing record subtyping in width.
Such a condition is necessary for soundness of typing.
Indeed, suppose that $\Context\der M_0:\sigma$ and $\Context\der M'_0:\sigma'_0$ but $\Context \not\der M'_0:\sigma_0$; then without $(*)$
we could derive:
\[
\prooftree
	\Context \der \record{l_0=M_0} : \Pair{l_0:\sigma_0} \qquad 
	\Context \der \record{l_0=M'_0,l_1:\sigma_1} : \Pair{l_1:\sigma_1}
\justifies
	\Context \der \record{l_0=M_0}\Override \record{l_0=M'_0,l:_1:\sigma_1} :  \record{l_0:\sigma_0, l_1:\sigma_1}
\endprooftree
\]
from which we obtain that $\Context \der (\record{l_0=M_0}\Override \record{l_0=M'_0,l:_1:\sigma_1}).l_0:\sigma_0$ breaking subject reduction, since
$(\record{l_0=M_0}\Override \record{l_0=M'_0,l:_1:\sigma_1}).l_0 \reduces^* M'_0$. The essential point is that proving that
$\Context\der N: \record{l:\sigma}$ doesn't imply that $l'\not\in\lbl(R')$ for any $l'\neq l$, which follows only by the uncomputable (not even recursively enumerable)
statement that $\Context\not\der N: \record{l':\omega}$, a negative information.

This explains the restriction to record terms as the second argument of $\Override$: 
in fact, allowing $M\Override N$ to be well formed for an arbitrary $N$ we might have $N\equiv x$ in $\lambda x.\,(M\Override x)$.
But extending $\lbl$ to all terms in $\lambdaR$ is not possible without severely limiting the expressiveness of the assignment system. In fact to say that
$\lbl(N) = \lbl(R)$ if $N\reduces^* R$ would make the $\lbl$ function non computable; on the other hand putting $\lbl(x)=\emptyset$, which is the only 
reasonable and conservative choice as we do not know possible substitutions for $x$ in $\lambda x.\,(M\Override x)$, implies that the latter term
has type $\omega\to\omega = \omega$ at best.

As a final remark, let us observe that we do not adopt exact typing of records in general, but only for typing the right-hand side of $\oplus$-terms, a feature that will be essential when typing mixins.

The following two lemmas are standard after \cite{bcd}.

\begin{lem}\label{lem:arrow-prop}
\[ \bigcap_{i\in I} (\sigma_i \to \tau_i) \leq \sigma \to \tau \Then
	\exists J \subseteq I.\;  \sigma \leq \bigcap_{j\in J} \sigma_j \And 
	      \bigcap_{j\in J}\tau_j \leq \tau \]
\end{lem}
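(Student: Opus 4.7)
The plan is to derive this arrow-decomposition property by reducing it to the classical Barendregt-Coppo-Dezani result for the arrow-intersection fragment, and then running the standard induction.

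The first step is a conservativity observation: the record-type axioms of Definition~\ref{def:typeInclusionRec-new} do not create any new inequation between types built only from constants, $\omega$, $\to$ and $\inter$. I would make this precise by defining an erasure $\lfloor\cdot\rfloor:\TT\to\TT$ that leaves constants, $\omega$, $\to$, and $\inter$ untouched and rewrites every record subterm ($\record{}$, $\record{l:\sigma}$, $\rho_1+\rho_2$, $\rho_1\inter\rho_2$ at record level) to $\omega$. By induction on the axioms and rules of Definitions~\ref{def:typeInclusionInt} and~\ref{def:typeInclusionRec-new}, one checks that $\sigma\leq\tau$ implies $\lfloor\sigma\rfloor\leq_{BCD}\lfloor\tau\rfloor$ in the classical BCD system; each record axiom becomes an instance of $\omega\leq\omega$ or a rule deriving $\omega$ on both sides. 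Since $\lfloor\alpha\rfloor=\alpha$ for arrow-intersection types, the hypothesis $\bigcap_{i\in I}(\sigma_i\to\tau_i)\leq\sigma\to\tau$ transfers to the pure BCD system.

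The second step is the familiar inductive proof for BCD, on the derivation of the subtyping judgment. The case $\tau=\omega$ is handled by $J=\emptyset$, using the convention $\bigcap_{j\in\emptyset}\tau_j=\omega$. Reflexivity and the distribution axiom \ref{def:typeInclusionInt}.\ref{def:typeInclusionInt-4} both yield $J=I$; the contravariant arrow axiom \ref{def:typeInclusionInt}.\ref{def:typeInclusionInt-5} gives $|I|=1$ and $J=I$; projections (\ref{def:typeInclusionInt}.\ref{def:typeInclusionInt-2}) yield the appropriate singleton $J$. For meet introduction (\ref{def:typeInclusionInt}.\ref{def:typeInclusionInt-3}) the target is an intersection rather than a single arrow, so the interesting interaction is transitivity: if $\bigcap_{i\in I}(\sigma_i\to\tau_i)\leq B\leq\sigma\to\tau$, I would bring $B$ to a canonical form as a meet of arrows plus $\omega$-components, apply the induction hypothesis to the upper inequation to extract a witness set of arrow-components of $B$, then apply it again to each such component to pull the witnesses back into subsets of $I$, and finally take their union.

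The chief obstacle is controlling transitivity in the inductive proof, since the subtyping relation is defined as the least preorder closed under the axioms and hence admits arbitrary chaining. A clean way to sidestep this is to replace derivation induction with a semantic characterization: define canonical normal forms of arrow-intersection types (a meet of arrows with distinct sources obtained by exhaustive application of \ref{def:typeInclusionInt}.\ref{def:typeInclusionInt-4}) and show that subtyping between normal forms is decidable by a direct structural rule that is exactly the statement of the lemma. Equivalence of $\leq$ with this structural presentation is a routine but non-trivial check; once established, the lemma is immediate. The conservativity erasure argument remains the delicate preliminary, because it must also account for the asymmetry of $+$ noted in Remark~\ref{rem:interPlus}, but since $+$ never occurs at top level in either side of an arrow-only inequation, each axiom translates soundly to $\omega=\omega$.
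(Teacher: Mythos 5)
Your second step (induction on the derivation, or equivalently a normal-form characterization of subtyping between intersections of arrows) is exactly the engine the paper uses: its proof of Lemma~\ref{lem:arrow-prop} is literally ``by induction over the proof of $\bigcap_{i\in I}(\sigma_i\to\tau_i)\leq\sigma\to\tau$''. The problem is your first step. The lemma is not a statement about the pure arrow--intersection fragment: $\sigma_i,\tau_i,\sigma,\tau$ range over all of $\TT$, and where the lemma is actually invoked (the $\beta$-case of Theorem~\ref{thm:subjectRed}, via Lemma~\ref{lem:generation}.\ref{gen:fun}) these types routinely contain record subterms such as $\record{\get:\Int}\to\Bool$. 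Your erasure $\lfloor\cdot\rfloor$ collapses every record subterm to $\omega$, so after transferring to BCD you can only conclude $\lfloor\sigma\rfloor\leq\bigcap_{j\in J}\lfloor\sigma_j\rfloor$ and $\bigcap_{j\in J}\lfloor\tau_j\rfloor\leq\lfloor\tau\rfloor$, i.e.\ inequations between the \emph{erased} types. These do not lift back to the originals: for $\sigma=\record{l:a}$ and $\sigma_j=\record{l:b}$ with incomparable constants $a,b$ we get $\lfloor\sigma\rfloor=\omega=\lfloor\sigma_j\rfloor$, yet $\sigma\not\leq\sigma_j$ by Lemma~\ref{lem:normal-recordTypes}.\ref{lem:normal-recordTypes-a}. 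So the reduction proves a strictly weaker statement than the one needed.

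The instinct behind the reduction --- that the record axioms cannot interfere --- is right, but it should be exploited differently: no axiom of Definition~\ref{def:typeInclusionRec-new} has an arrow type on its right-hand side, so in a derivation induction for a judgment whose conclusion is $\ldots\leq\sigma\to\tau$ the record axioms simply never occur as the final step. Hence the induction can, and must, be carried out directly in the full system, keeping the $\sigma_j,\tau_j$ intact even when they are record types. The transitivity case then still demands the care you correctly identify --- showing that any intermediate $B$ with $\bigcap_i(\sigma_i\to\tau_i)\leq B\leq\sigma\to\tau$ and $\tau\neq\omega$ is itself equal to an intersection of arrows whose sources and targets live in $\TT$ (records included), applying the induction hypothesis twice and composing the witness sets --- but this normal-form argument has to be stated for $\TT$, not for the erased fragment. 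Dropping the erasure and keeping the rest of your plan recovers the paper's proof.
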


\begin{proof}
By induction over the proof of $\bigcap_{i\in I} (\sigma_i \to \tau_i) \leq \sigma \to \tau$.
\end{proof}

\begin{lem}\label{lem:weak-leq}
The following rule is admissible:
\[
\prooftree
	\Context, x:\tau \deduces M:\sigma \quad \tau' \leq \tau
\justifies
	\Context, x:\tau' \deduces M:\sigma
\endprooftree
\]
\end{lem}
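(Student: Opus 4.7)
The plan is to proceed by straightforward induction on the derivation of $\Context, x:\tau \deduces M:\sigma$, showing that every rule application can be reproduced with the weakened basis $\Context, x:\tau'$. Since the basis only enters the typing rules through the axiom rule $(\Axiom)$ and is merely propagated or extended in all other rules, the only case that requires real work is the axiom case applied to the variable $x$ itself.

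First I would handle the axiom case $\Context, x:\tau \deduces y:\sigma$ with $y:\sigma$ either being $x:\tau$ or belonging to $\Context$. If $y\neq x$, the rule $(\Axiom)$ applies verbatim with the new basis. If $y = x$ and $\sigma = \tau$, I would instantiate $(\Axiom)$ to derive $\Context, x:\tau' \deduces x:\tau'$ and then apply rule $(\leq)$ using the hypothesis $\tau' \leq \tau$ to conclude $\Context, x:\tau' \deduces x:\tau$. Next I would dispatch the $(\omega)$ rule, which carries no premises and so is trivially preserved.

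The inductive cases for $(\ArrI)$, $(\ArrE)$, $(\inter)$, $(\leq)$, $(\record{})$, $(\recRule)$, $(\selRule)$ and $(+)$ all follow the same schema: apply the induction hypothesis to each premise to switch the basis from $\Context, x:\tau$ to $\Context, x:\tau'$, then reapply the same rule with the same conclusion. For $(\ArrI)$ one must observe that in passing to $\Context, x:\tau, y:\sigma_1$ the weakening still targets only the $x$-entry, so by $\alpha$-conversion we may assume $y \neq x$ and the induction hypothesis gives $\Context, x:\tau', y:\sigma_1 \deduces N:\sigma_2$. For rule $(+)$ the side condition $\lbl(R)=\lbl(\rho_2)$ is a syntactic condition on the term $R$ and the type $\rho_2$, independent of the basis, so it remains satisfied after the basis change.

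No step presents any real obstacle; if anything, the only subtlety worth flagging is the axiom case on $x$, where one genuinely needs the assumption $\tau'\leq\tau$ and the availability of the subsumption rule $(\leq)$ in the system—without $(\leq)$, the lemma would fail, because $\Context, x:\tau' \deduces x:\tau'$ cannot in general be turned into $\Context, x:\tau' \deduces x:\tau$. Everything else is bookkeeping.
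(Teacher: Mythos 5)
Your proof is correct and follows exactly the paper's argument: induction on the derivation, with the only non-trivial case being the axiom applied to $x$ itself, resolved by deriving $\Context, x:\tau' \deduces x:\tau'$ and applying $(\leq)$ with $\tau'\leq\tau$. The remaining cases are routine bookkeeping, as you note.
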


\newenvironment{bprooftree}
 {\begin{adjustbox}{raise=\depth}\begin{prooftree}}
 {\end{prooftree}\end{adjustbox}}
\begin{ProofNoQed}
 By induction over the derivation of $\Context, x:\tau \deduces M:\sigma$. 
The only non-trivial case is when  $\Context, x:\tau \deduces M:\sigma$ is an instance of $(\Axiom)$ and $M \equiv x$.
In this case $\sigma = \tau$ and we replace the axiom by the inference:
\[
\begin{bprooftree} \Context, x:\tau' \deduces x:\tau' \quad \tau'\leq \tau \justifies
  \Context, x:\tau' \deduces x:\tau \using (\leq)
\end{bprooftree}
\tag*{\qEd}
\]
\end{ProofNoQed}

\begin{lem}[Generation]\label{lem:generation}
Let $\sigma \neq \omega$:
\begin{enumerate}

\item\label{gen:var} 
	$\Context \deduces x: \sigma  \Iff  \exists \tau.\; x:\tau \in \Context \And \tau \leq \sigma$,

\item\label{gen:fun} 
	$\Context \deduces \lambda x.M : \sigma \Iff  
	\exists \; I, (\sigma_i)_{i \in I}, (\tau_i)_{i \in I}.~~\Context, x: \sigma_i \deduces M: \tau_i \And
	\bigcap_{i\in I}  (\sigma_i \arrow \tau_i) \leq \sigma$,

\item\label{gen:app} 
	$\Context \deduces MN : \sigma  \Iff  \exists \; \tau.~
	\Context \deduces M:\tau\to\sigma \And \Context \deduces N:\tau$,
	
\item\label{gen:rec} 
	$\Context \deduces \record{l_i=M_i \mid  i\in I}: \sigma  \Iff
	\forall i \in I \ \exists \; \sigma_i.\; \Context \deduces M_i:\sigma_i \And \record{l_i:\sigma_i \mid i\in I} \leq \sigma$,

\item\label{gen:sel} 
	$\Context \deduces M.l: \sigma  \Iff \Context \deduces M: \record{l:\sigma}$,

\item\label{gen:merge}
	 $\Context \deduces M \oplus R : \sigma  \Iff 
	 \exists \: \rho_1,\rho_2.~ \Context \deduces M:\rho_1 \And \Context \deduces R: \rho_2 \And
	 \lbl(R)  = \lbl(\rho_2) \And \rho_1+\rho_2 \leq \sigma$.
\end{enumerate}
\end{lem}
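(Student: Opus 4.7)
The plan is to prove each of the six biconditionals by handling both directions. The ``$\Leftarrow$'' direction in each case is a straightforward application of the relevant term-formation rule, possibly followed by $(\inter)$ and $(\leq)$—for instance, (\ref{gen:merge}) uses rule $(+)$ followed by weakening via $(\leq)$, and (\ref{gen:fun}) applies $\ArrI$ for each $i$, combines with $(\inter)$, then weakens. For the ``$\Rightarrow$'' directions I would run a simultaneous induction on the derivation of $\Context \deduces M : \sigma$, case-splitting on the last rule applied. Since $M$ has a fixed syntactic shape in each clause, most rules are excluded by inspection; the three uniformly applicable rules $(\omega)$, $(\leq)$, $(\inter)$ must be handled in every clause.

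For the uniform cases: $(\omega)$ is ruled out by the hypothesis $\sigma \neq \omega$. The $(\leq)$-case folds into the inductive hypothesis, since the final subtyping condition in each clause absorbs the outer subsumption by transitivity of $\leq$ and, where needed, covariance of $\to$ (axiom \ref{def:typeInclusionInt}.\ref{def:typeInclusionInt-5}) or of $\record{l:\cdot}$ (axiom \ref{def:typeInclusionRec-new}.\ref{leq-ax-3}). The $(\inter)$-case combines witnesses from two sub-derivations: for (\ref{gen:var}) uniqueness of the context entry forces the two witnesses to coincide; for (\ref{gen:app}) one takes $\tau = \tau_1 \inter \tau_2$ and appeals to the distributive axiom \ref{def:typeInclusionInt}.\ref{def:typeInclusionInt-4} to retype $M$; for (\ref{gen:sel}) equation (\ref{eq:interRecord}) fuses $\record{l:\sigma_1} \inter \record{l:\sigma_2}$ into $\record{l:\sigma_1 \inter \sigma_2}$; for (\ref{gen:fun}) and (\ref{gen:rec}) the index sets are unioned and the subtyping in the record case follows from Lemma \ref{lem:normal-recordTypes}.\ref{lem:normal-recordTypes-ab}.

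The main obstacle will be the $(\inter)$-case of clause (\ref{gen:merge}), where rule $(+)$ demands the exact label match $\lbl(R) = \lbl(\rho_2)$ and $+$ is not monotonic in its second argument (Remark \ref{rem:interPlus}). Given witnesses $(\rho_1, \rho_2)$ and $(\rho_1', \rho_2')$ produced by the inductive hypothesis on the two sub-derivations of $\Context \deduces M \oplus R : \sigma_1$ and $\Context \deduces M \oplus R : \sigma_2$, I would choose $\rho_1 \inter \rho_1'$ and $\rho_2 \inter \rho_2'$ as the new witnesses. The label side-condition is preserved because $\lbl(\rho_2) = \lbl(R) = \lbl(\rho_2')$ entails $\lbl(\rho_2 \inter \rho_2') = \lbl(R)$. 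For the subtyping $(\rho_1 \inter \rho_1') + (\rho_2 \inter \rho_2') \leq \sigma_1 \inter \sigma_2$, monotonicity of $+$ in its first argument (axiom \ref{def:typeInclusionRec-new}.\ref{leq-ax-11}) yields $(\rho_1 \inter \rho_1') + \rho_2 \leq \rho_1 + \rho_2 \leq \sigma_1$ and analogously $(\rho_1 \inter \rho_1') + \rho_2' \leq \sigma_2$; then, since $\lbl(\rho_2) = \lbl(\rho_2')$, Lemma \ref{lem:label}.\ref{lem:label-c} collapses the intersection of these two inequalities to the required $(\rho_1 \inter \rho_1') + (\rho_2 \inter \rho_2') \leq \sigma_1 \inter \sigma_2$. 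The minor degeneracy when $\sigma_1$ or $\sigma_2$ happens to equal $\omega$ is absorbed by discarding that sub-derivation, as it contributes no informative witness.
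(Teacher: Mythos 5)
Your proposal is correct and follows essentially the same route as the paper: both reduce the forward direction to analyzing the $(\inter)$/$(\leq)$ inferences sitting below the structural-rule subderivations, and both resolve the critical merge case by intersecting the witnesses $\rho_{1},\rho_{1}'$ and $\rho_{2},\rho_{2}'$, using monotonicity of $+$ in its first argument (Definition \ref{def:typeInclusionRec-new}.\ref{leq-ax-11}) together with Lemma \ref{lem:label}.\ref{lem:label-c} under the shared label condition $\lbl(\rho_2)=\lbl(R)=\lbl(\rho_2')$. The only cosmetic difference is that the paper combines all subderivations in one $n$-ary step after commuting $(\inter)$ past $(\leq)$, whereas you fold them in pairwise through the induction.
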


\begin{ProofNoQed} All the if parts are obvious. For the only if parts
first observe that there is a one-to-one correspondence between the term constructors and the rules in the type system but in case of
rules $(\omega)$, $ (\inter)$ and $(\leq)$. Indeed, rule $(\record{})$ is no exception as the typing
$\Context \deduces \record{ l_i = M_i \ | \ i \in I } : \record{}$ is derivable using $(\recRule)$ and $(\leq)$ in all cases but when $I = \emptyset$, in
which case $(\record{})$ becomes the axiom $\Context\deduces \record{}:\record{}$ and $\record{}$ is a nullary operator.

Disregarding rule $(\omega)$ because of the hypothesis $\sigma \neq \omega$, all the statements in this lemma depend on the remark
that any derivation $\Der$ of $\Context \deduces M:\sigma$ consists of a finite set of subderivations  $\Der_i$ of judgments
$\Context \deduces M:\sigma_i$ such that each $\Der_i$  ends by the rule corresponding to the main term constructor of $M$, and that
in $\Der$ all the inferences below the $\Der_i$ are instances of either rule $(\inter)$ or $(\leq)$. By noting that inferences
using $(\inter)$ and $(\leq)$ commute that is:
{\small
\[
\prooftree
	\prooftree
		\Context \deduces M: \sigma' \quad \sigma'\leq\sigma
	\justifies
		\Context \deduces M: \sigma
	\using (\leq)
	\endprooftree
	\quad
	\Context \deduces M:\tau
\justifies
	\Context \deduces M:\sigma\inter\tau
\using (\inter)
\endprooftree
~~\mbox{becomes}~~
\prooftree
	\prooftree
		\Context \deduces M: \sigma' \quad \Context \deduces M:\tau
	\justifies
		\Context \deduces M: \sigma'\inter\tau
	\using (\inter)
	\endprooftree
	\quad
	\sigma'\inter\tau\leq\sigma\inter\tau
\justifies
	\Context \deduces M:\sigma\inter\tau
\using (\leq)
\endprooftree
\]
}
we can freely assume that all $(\inter)$ inferences precede $(\leq)$ rules, concluding that $\bigcap_i \sigma_i \leq \sigma$.

Given that all the statements of this lemma are deduced by reading backward rules $(\Axiom)$, $(\ArrI)$, $(\ArrE)$, $(\record{})$,
$(\recRule)$, $(\selRule)$ and $(+)$. All cases are either standard from \cite{bcd} or are easy extensions thereof, but that of rule
$(+)$ in $(\ref{gen:merge})$ above. In this case we know that in the derivation of $\Context \deduces M \oplus R : \sigma$ 
there are subderivations ending by the inference:
\[
\prooftree
	\Context \deduces M:\rho_{i,1} \quad \Context \deduces R:\rho_{i,2} \quad \lbl(\rho_{i,2}) = \lbl(R)
\justifies
	\Context \deduces M \oplus R : \rho_{i,1} + \rho_{i,2}
\using (+)
\endprooftree
\]
where the side condition  $\lbl(\rho_{i,2}) = \lbl(R)$ holds for all $i\in I$; then by the remark above
we have that $\bigcap_{i\in I}( \rho_{i,1} + \rho_{i,2}) \leq \sigma$. Now taking $\rho_1 = \bigcap_{i\in I}  \rho_{i,1}$ and 
$\rho_2 = \bigcap_{i\in I} \rho_{i,2}$ we have that:
\[
\begin{array}[b]{rcll}
\sigma & \geq & \bigcap_{i\in I}( \rho_{i,1} + \rho_{i,2}) \\
& \geq & \bigcap_{i\in I}( \rho_{1} + \rho_{i,2}) & 
		\mbox{by Def. \ref{def:typeInclusionRec-new}.\ref{leq-ax-11}, since $\rho_{i,1} \geq \rho_1$ for all $i\in I$} \\ 
& = & \rho_1 + \rho_2 & 
	\mbox{by \ref{lem:label}.\ref{lem:label-c} since $\lbl(\rho_2) = \lbl(\rho_{i,2})$ for all $i\in I$.}
    \tag*{\qEd}
\end{array}\]

\end{ProofNoQed}

\begin{lem}[Substitution]\label{lem:substitution}
\[ \Context, x:\sigma \deduces M:\tau \And \Context\deduces N:\sigma \Then \Context \deduces M\Subst{N}{x}:\tau. \]
\end{lem}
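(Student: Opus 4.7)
The plan is to proceed by induction on the derivation of $\Context, x:\sigma \deduces M:\tau$, with a case analysis on the last rule applied. Most cases are routine: we apply the induction hypothesis to the premises and reconstruct the derivation using the same rule, observing that substitution commutes with the main term constructor in each case.

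For the base case $(\Axiom)$: if $M \equiv x$ then $\tau = \sigma$ (since the convention $\Context, x:\sigma$ requires $x \notin \dom(\Context)$), and $M\Subst{N}{x} \equiv N$, so $\Context \deduces N:\sigma$ supplies the conclusion directly. If $M \equiv y$ with $y \neq x$, then $M\Subst{N}{x} \equiv y$ and the assumption $x:\sigma$ is unused, so $\Context \deduces y:\tau$ follows. Rules $(\omega)$, $(\inter)$, $(\leq)$, $(\record{})$, $(\recRule)$, $(\selRule)$, and $(\ArrE)$ all reduce immediately to the induction hypothesis, since substitution commutes with the corresponding term constructor (application, record formation, field selection).

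The case of $(\ArrI)$ with $M \equiv \lambda y.M'$ requires $\alpha$-conversion to assume $y \neq x$ and $y \notin \fv(N)$, so that $(\lambda y.M')\Subst{N}{x} \equiv \lambda y.(M'\Subst{N}{x})$. To apply the induction hypothesis to $\Context, x:\sigma, y:\sigma' \deduces M':\tau'$, I need $\Context, y:\sigma' \deduces N:\sigma$, which follows from $\Context \deduces N:\sigma$ by a straightforward weakening property (provable by an independent induction on derivations, analogous to Lemma \ref{lem:weak-leq}).

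The most delicate case is rule $(+)$, because of the side condition $\lbl(R) = \lbl(\rho_2)$. Here $M \equiv M' \oplus R$ with $R$ syntactically a record, say $R \equiv \record{l_i = P_i \mid i \in I}$. Crucially, $M\Subst{N}{x} \equiv M'\Subst{N}{x} \oplus \record{l_i = P_i\Subst{N}{x} \mid i \in I}$: the substituted right-hand operand is again a record with exactly the same set of labels, so $\lbl(R\Subst{N}{x}) = \lbl(R) = \lbl(\rho_2)$ and the side condition carries over. The induction hypothesis applied to the two premises then yields $\Context \deduces M'\Subst{N}{x}:\rho_1$ and $\Context \deduces R\Subst{N}{x}:\rho_2$, and re-applying $(+)$ closes the case. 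This is precisely the point where the syntactic restriction forcing the right-hand operand of $\oplus$ to be a record pays dividends: substitution preserves the label structure of the operand, whereas for an arbitrary term we would have no way of recovering the exactness condition after substitution.
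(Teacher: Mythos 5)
Your proof is correct and follows the same route as the paper, which simply states the result ``by induction over the derivation of $\Context, x:\sigma \deduces M:\tau$'' together with the observation that $x\notin\dom(\Context)$ and $x\notin\fv(N)$. Your additional care with the $(\ArrI)$ case (weakening) and the $(+)$ case (substitution preserves the label set of the record operand, so the side condition $\lbl(R)=\lbl(\rho_2)$ survives) fills in details the paper leaves implicit, but introduces no new idea.
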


\begin{proof} By induction over the derivation of $\Context, x:\sigma \deduces M:\tau$. Observe that the writing
$\Context, x:\sigma$ implies that $x\not\in\Context$; on the other hand if $\Context\deduces N:\sigma$ is derivable then
$\fv(N) \subseteq \Context$, hence $x\not\in\fv(N)$.
\end{proof}

\begin{thm}[Subject reduction]\label{thm:subjectRed}
$\Context \deduces M:\sigma \And M \reduces N \Then \Context \deduces N:\sigma$.
\end{thm}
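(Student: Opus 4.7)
The plan is to proceed by induction on the structure of the derivation $M \reduces N$. Since $\reduces$ is defined as the compatible closure of the three base reductions $(\beta)$, $(\redSel)$ and $(\redMergeThree)$, we get three base cases and one compatibility case for each term constructor ($\lambda$, application, record fields, field selection, merge). The congruence cases are routine applications of the generation lemma (Lemma~\ref{lem:generation}) followed by the induction hypothesis and re-application of the corresponding typing rule; the only minor care is to preserve the side condition $\lbl(R) = \lbl(\rho_2)$ in the $(+)$ rule when reducing inside the right operand of $\oplus$, but this is automatic because reduction preserves the outer record shape (and hence $\lbl(R)$) of record redexes. Before starting, I dispose of the case $\sigma = \omega$ uniformly: rule $(\omega)$ gives $\Context \deduces N:\omega$ for any $N$, so we may assume $\sigma \neq \omega$ throughout and apply the generation lemma.

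For the $(\beta)$ base case $(\lambda x.M)N \reduces M\Subst{N}{x}$, I apply Lemma~\ref{lem:generation}.\ref{gen:app} to obtain some $\tau$ with $\Context \deduces \lambda x.M:\tau\to\sigma$ and $\Context \deduces N:\tau$. By Lemma~\ref{lem:generation}.\ref{gen:fun}, there exist $(\sigma_i)_{i\in I}$ and $(\tau_i)_{i\in I}$ with $\Context, x:\sigma_i \deduces M:\tau_i$ and $\bigcap_{i\in I}(\sigma_i\to\tau_i)\leq \tau\to\sigma$. By Lemma~\ref{lem:arrow-prop}, there is $J\subseteq I$ with $\tau \leq \bigcap_{j\in J}\sigma_j$ and $\bigcap_{j\in J}\tau_j \leq \sigma$. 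For each $j \in J$, Lemma~\ref{lem:weak-leq} gives $\Context, x:\tau \deduces M:\tau_j$, then Lemma~\ref{lem:substitution} gives $\Context \deduces M\Subst{N}{x}:\tau_j$. Combining with $(\inter)$ and $(\leq)$ yields $\Context \deduces M\Subst{N}{x}:\sigma$.

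For the $(\redSel)$ case $\record{l_i=M_i \mid i\in I}.l_j \reduces M_j$ with $j\in I$, Lemma~\ref{lem:generation}.\ref{gen:sel} gives $\Context \deduces \record{l_i=M_i \mid i\in I} : \record{l_j:\sigma}$; Lemma~\ref{lem:generation}.\ref{gen:rec} then provides types $\sigma_i$ with $\Context \deduces M_i:\sigma_i$ and $\record{l_i:\sigma_i\mid i\in I}\leq \record{l_j:\sigma}$. By Lemma~\ref{lem:normal-recordTypes}.\ref{lem:normal-recordTypes-a} this forces $\sigma_j \leq \sigma$, and rule $(\leq)$ delivers $\Context \deduces M_j:\sigma$.

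The $(\redMergeThree)$ case is where I expect the most bookkeeping, and is the main obstacle. Let $R_1 \equiv \record{l_i=M_i\mid i\in I}$, $R_2\equiv \record{l_j=N_j\mid j\in J}$ and let $R_3$ be the reduct. By Lemma~\ref{lem:generation}.\ref{gen:merge} there are $\rho_1,\rho_2$ with $\Context \deduces R_1:\rho_1$, $\Context \deduces R_2:\rho_2$, $\lbl(\rho_2)=\lbl(R_2)=\{l_j\mid j\in J\}$ and $\rho_1+\rho_2\leq\sigma$. By Lemma~\ref{lem:normal-recordTypes}.\ref{lem:normal-recordTypes-c} I can put both record types in normal form $\rho_1 = \record{l_i:\sigma_i\mid i\in I'}$ and $\rho_2 = \record{l_j:\tau_j\mid j\in J}$ (using the $\lbl$ constraint on $\rho_2$). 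Lemma~\ref{lem:normal-recordTypes}.\ref{lem:normal-recordTypes-b} then gives
\[
\rho_1 + \rho_2 \;=\; \record{l_i:\sigma_i,\, l_j:\tau_j \mid i\in I'\setminus J,\; j\in J}.
\]
Lemma~\ref{lem:generation}.\ref{gen:rec} applied to the premises yields $\Context \deduces M_i:\sigma_i$ for $i\in I'$ (noting $I'\subseteq I$ by the same normal-form argument) and $\Context \deduces N_j:\tau_j$ for $j\in J$. The admissible rule $(\recRule')$, applied with the index set $(I\setminus J) \cup J$ and keeping only the components with $i\in I'\setminus J$ and $j\in J$, produces a derivation of $\Context \deduces R_3 : \rho_1+\rho_2$, whence $(\leq)$ gives $\Context \deduces R_3:\sigma$. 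The delicate point here — and the reason I flagged this as the main obstacle — is that the exact-typing side condition $(*)$ on $(+)$ is what makes the labels of $\rho_2$ match those of $R_2$ precisely, which is exactly what one needs so that the $\rho_1+\rho_2$ computed via Lemma~\ref{lem:normal-recordTypes}.\ref{lem:normal-recordTypes-b} really is inhabited by the reduct $R_3$.
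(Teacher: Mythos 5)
Your proposal is correct and follows essentially the same route as the paper's proof: case analysis on the reduction rules via the Generation Lemma, with the $(\beta)$ and $(\redSel)$ cases handled identically and the merge case resting, as in the paper, on the exact-typing side condition forcing $\lbl(\rho_2)$ to coincide with the labels of the right operand. The only (cosmetic) difference is in the merge case, where you normalize $\rho_1$ and $\rho_2$ first and derive $\rho_1+\rho_2$ for the reduct directly via $(\recRule')$, whereas the paper types the reduct with $\record{l_i:\sigma_i \mid i\in I}+\rho_2$ and then invokes monotonicity of $+$ in its first argument (Definition~\ref{def:typeInclusionRec-new}.\ref{leq-ax-11}); both variants need the small subsumption step relating the types produced by Lemma~\ref{lem:generation}.\ref{gen:rec} to the components of the normal forms, which you gloss but correctly signal.
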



\begin{proof} 
The proof is by cases of reduction rules, using Lemma \ref{lem:generation}. 
\begin{enumerate}
\item Case $M \equiv  (\lambda x.M')N'$.
Then $N \equiv M' \Subst{N'}{x}$ by rule ($\beta$). 
By Lemma \ref{lem:generation}.\ref{gen:app}, 
there exists $\tau$ such that $\Gamma \deduces (\lambda x.M'): \tau \to \sigma$ and
$\Gamma \deduces N' : \tau$.
By Lemma \ref{lem:generation}.\ref{gen:fun}, 
there exist $I, \sigma_i, \tau_i$ such that 
$\Gamma, x: \tau_i \deduces M': \sigma_i$ for all $i\in I$ and 
$\bigcap_{i\in I}  (\tau_i \arrow \sigma_i) \leq \tau \to \sigma$. By Lemma \ref{lem:arrow-prop} 
this implies that there is $J \subseteq I$ such that $\tau \leq \bigcap_{j\in J}  \tau_j$ and
$\bigcap_{j\in J} \sigma_j\leq \sigma$.
By Lemma \ref{lem:weak-leq} this implies that $\Gamma, x: \tau \deduces M': \sigma_j$ for all $j\in J$, so that
by rule $(\inter)$ we have $\Gamma, x: \tau \deduces M': \bigcap_{j\in J} \sigma_j$ and hence
$\Gamma, x: \tau \deduces M': \sigma$ by $(\leq)$. From this and $\Gamma \deduces N' : \tau$ we get
$\Gamma \deduces M' \Subst{N'}{x}  : \sigma$ by Lemma \ref{lem:substitution}.

\item Case $M \equiv \Pair{l_i=M_i  \mid  i\in I}.l_j$.
Then $N \equiv M_j$ for some $j\in I$ by rule ($\redSel$).
By Lemma \ref{lem:generation}.\ref{gen:sel} 
$\Gamma \deduces M: \record{l_j:\sigma}$, hence by \ref{lem:generation}.\ref{gen:rec}
for all $i\in I$ there exists $\sigma_i$ such that $\Gamma \deduces M_i:\sigma_i$ and
$\record{l_i:\sigma_i \mid i\in I} \leq \record{l_j:\sigma}$. 
Now by the fact that $j\in I$ and by 
Lemma \ref{lem:normal-recordTypes}.\ref{lem:normal-recordTypes-a}  it follows that 
$\Gamma \deduces M_j:\sigma_j$ and $\sigma_j \leq \sigma$, hence
$\Gamma \deduces M_j:\sigma$ by rule $(\leq)$.

\item Case $M \equiv \Pair{l_i=M_i  \mid  i\in I} \Override \Pair{l_j=N_j  \mid \ j\in J}$.
Then by rule ($\redMergeThree$):
\[ N \equiv \Pair{l_i=M_i, \ l_j=N_j \mid  i\in I\setminus J, \  j\in J}.\]
By Lemma \ref{lem:generation}.\ref{gen:merge} there exist the record types $\rho_1,\rho_2$ such that:
\begin{enumerate}
\item \label{SR-a}
	$\Gamma \deduces \Pair{l_i=M_i  \mid  i\in I}:\rho_1$,
\item \label{SR-b}
	$\Gamma \deduces \Pair{l_j=N_j  \mid  j\in J}:\rho_2$,
\item \label{SR-c}
	$\lbl(\rho_2) = \lbl(\Pair{l_j=N_j  \mid \ j\in J}) = \Set{l_j \mid j\in J}$,
\item \label{SR-d}
	$\rho_1 + \rho_2 \leq \sigma$.
\end{enumerate}
By (\ref{SR-a}) and (\ref{SR-b}) and by Lemma \ref{lem:generation}.\ref{gen:rec} we have that
for all $i\in I$ there exist $\sigma_i$ such that $\Gamma \deduces M_i:\sigma_i$, with
$\record{l_i:\sigma_i \mid i\in I} \leq \rho_1$, and similarly for all $j\in J$ there are $\tau_j$ such that
$\Gamma \deduces N_j:\tau_j$ and $\record{l_j:\tau_j \mid j\in J} \leq \rho_2$.

Since $\Gamma \deduces M_i:\sigma_i$ for all $i\in I$, a fortiori it holds for all $i\in I\setminus J$. On the other hand,
by Lemma \ref{lem:normal-recordTypes}.\ref{lem:normal-recordTypes-c} we know that
$\rho_2 = \record{l_k:\tau'_k \mid k\in K}$ for some $K$ and $\tau'_k$, so that $\record{l_j:\tau_j \mid j\in J} \leq \rho_2$
implies $J \supseteq K$ and $\tau_k \leq \tau'_k$ for all $k\in K$ by Lemma  \ref{lem:normal-recordTypes}.\ref{lem:normal-recordTypes-a}.
It follows that $J = K$ by (\ref{SR-c}), and $\Gamma \deduces N_j:\tau'_j$ for all $j\in J$ by rule $(\leq)$. Then by rule $ (\recRule)$ and $(\inter)$
we conclude that
\[ \Gamma \deduces N: \record{l_i : \sigma_i, \ l_j : \tau'_j \mid  i\in I\setminus J, \  j\in J}. \]
Now $\record{l_i : \sigma_i, \ l_j : \tau'_j \mid  i\in I\setminus J, \  j\in J} = \record{l_i:\sigma_i \mid i\in I} + \rho_2$
by Lemma \ref{lem:normal-recordTypes}.\ref{lem:normal-recordTypes-b}, but since
\begin{align*}
  \record{l_i:\sigma_i \mid i\in I} \leq \rho_1
\end{align*}
we have by Definition \ref{def:typeInclusionRec-new}.\ref{leq-ax-11}
that $\record{l_i:\sigma_i \mid i\in I} + \rho_2 \leq \rho_1 + \rho_2$; then we conclude that
$\Gamma\deduces N:\sigma$ by (\ref{SR-d}) and rule $(\leq)$. \qedhere
\end{enumerate}
\end{proof}



\newcommand{\ClassCmb}{{\mathcal C}}
\newcommand{\MixinCmb}{{\mathcal M}}

\subsection{Class and Mixin combinators}\label{subsec:Mixin}

The following definition of classes and mixins is inspired by \cite{CookHC90} and \cite{BrachaC90} respectively, 
though with some departures to be discussed below. To make the description more concrete, in the examples we add constants to $\lambdaR$.

Recall that a {\em combinator} is a term in $\lambdaR^0$, namely a closed term. Let $\Y$ be  
Curry's fixed point combinator: $\lambda f.(\lambda x.f(xx))(\lambda x.f(xx))$ 
(the actual definition of $\Y$ is immaterial however, since all fixed point combinators have the same B\"ohm tree as a consequence of 
\cite{Barendregt84} Lemma 6.5.3, and hence have the same types in {\bf BCD} by the approximation theorem: see e.g.~\cite{BDS13} Theorem 12.1.17).

\begin{defi}
Let $\myClass, \State$ and $\argClass$ be (term) variables and $\Y$ be a fixed point combinator; then we define the following sets of combinators:
\[
\begin{array}{llll}
\text{Class:}  & C & ::= & \Y (\lambda \,\myClass \ \lambda \, \State. \,  \record{l_i = N_i \mid i \in I}) \\ [1mm]
\text{Mixin:} & M & ::= & \lambda \, \argClass. \, \Y (\lambda \,\myClass \, \lambda \, \State. \, (\argClass \; \State) \Override \record{l_i = N_i \mid i \in I})\\ [1mm]
\end{array}
\]
We define $\ClassCmb$ and $\MixinCmb$ as the sets of classes and mixins respectively.
\end{defi}

To illustrate this definition let us use the abbreviation $\letin{x = N} M$ for $M\Subst{N}{x}$. 
Then a  class combinator $C\in \ClassCmb$ can be written
in a more perspicuous way as follows:
\begin{equation}\label{eq:class-def}
C \equiv \Y(\lambda \,\myClass \ \lambda \, \State. \ \letin{\self = ( \myClass \; \State)}  \record{l_i = N_i \mid i \in I}).
\end{equation}

A  {\em class} is the fixed point of a function,  the {\em class definition}, mapping a recursive definition of the class itself and a state $S$, that is the value or a record of values in general, for the instance variables of the class, into a record $\record{l_i = N_i \mid i \in I}$ of {\em methods}.  A class $C$ is instantiated to an {\em object} $O \equiv C\,S$ by applying the class $C$ to
a state $S$. Hence we have:
\[
O \equiv C\,S \reduces^* \letin{\self = (C\,S)}  \record{l_i = N_i \mid i \in I},
\]
where the variable $\self$ is used in the method bodies $N_i$ to call other methods from the same object. Note that
the recursive parameter $\myClass$ might occur in the $N_i$ in subterms other than $( \myClass \; \State)$, and in particular
$N_i\Subst{C}{\myClass}$ might contain a subterm $C\,S'$, where $S'$ is a state possibly different than $S$; even $C$ itself might be returned as the value
of a method. 
Classes are the same as in \cite{CookHC90} \S 4, but for the explicit identification of $\self$ with $(\myClass \; \State)$.

We come now to typing of classes.
Let $R =  \record{l_i = N_i \mid i \in I}$, and suppose that $C \equiv  \Y (\lambda \,\myClass \ \lambda \, \State. \,  R) \in \ClassCmb$. To type $C$ we must find
a type $\sigma$ (a type of its state) and a sequence of types $\rho_1,\ldots , \rho_n \in \TTR$ such that for all $0<i< n$:
\[ 
\myClass: \sigma \to \rho_i, \State: \sigma \der R : \rho_{i+1}.
\]
Note that this is always possible for any $n$: in the worst case, we can take $\rho_i = \record{l_i:\omega \mid i\in I}$ for all $0 < i \leq n$. 
In general one has more expressive types, depending on the typings of the $N_i$ in $R$ (see example \ref{ex:class-type} below). It follows that:
\[
\der \lambda \,\myClass \ \lambda \, \State. \,  R: (\omega\to(\sigma\to\rho_1)) \inter \bigcap_{i = 1}^{n-1} ((\sigma \to \rho_i) \to (\sigma \to \rho_{i+1})),
\] 
and therefore, by using the fact that $\der \Y : (\omega\to \tau_1) \inter \cdots \inter (\tau_{n-1} \to \tau_n) \to \tau_n$ for
arbitrary types $\tau_1,\ldots,\tau_n$ (see e.g.~\cite{BDS13}, p. 586), we conclude that the typing of classes has the following shape (where $\rho =\rho_n$):
\begin{equation}\label{eq:class-typing}
\der  C \equiv \Y (\lambda \,\myClass \ \lambda \, \State. \,  \record{l_i = N_i \mid i \in I}): \sigma \to \rho
\end{equation}
In conclusion the type of a class $C$ is the arrow from the type of the state $\sigma$ to a type $\rho$ of its instances.

\begin{exa}\label{ex:class-type}

The class $\Nat$ has a method $\get$ returning the current state and a method $\succ$ returning the state of the instance object incremented by one. Beside there is a third method $\set$ whose body is the identity function:
\begin{align*}
&\Nat = \Y (\lambda \myClass. \lambda \state. \\
&\qquad \letin{\self = \myClass \ \state} \\
&\qquad \qquad \record{\get = \state, \set = \lambda \State'.\state', \succ = \self.\set(\self.\get + 1)}\\
\end{align*}

There are several ways in which functional state update can be implemented. One possibility for a method to update the current state of the object functionally is to return an updated instance of the object. As will become clear later on,
in the context of mixin applications it is unclear whether the updated object has to be an instance of the early-bound inner class ($\Nat$ in the above example) or an instance of the class modified by mixin applications. The former option loses information whenever an object that is an instance of a class modified by mixin applications is updated. The latter option breaks compatibility whenever a mixin incompatibly overwrites a method using record merge, which will be discussed later on.
In conclusion, the decision regarding the instantiation of the updated object cannot be made a-priori, but depends on the context in which the method that updates the object is called. Since in the functional setting two object instances of one class are distinguished only by the underlying state,
a method, such as $\set$, that modifies the underlying object state returns the updated state and leaves instantiation to the caller.

\medskip
To Type $\Nat$, let us call 
\[ R \equiv \record{\get = \state, \set = \lambda \State'.\state', \succ = \self.\set(\self.\get + 1)} \]
where $\self \equiv (\myClass \ \state)$. Now we can deduce:
\[ 
\myClass : \omega, \state : \Int \vdash R : \record{\get:\Int, \set:\Int\to\Int, \succ:\omega} = \rho_1.
\]
since $\myClass : \Int\to\omega, \state : \Int \vdash \myClass \ \state : \omega$ and therefore $\self.\set(\self.\get + 1)$ is just typable by
$\omega$; on the other hand the type of $\set$ is any type of the identity, so that $\Int\to\Int$ is a possibility. 

Given $\rho_1$ we consider the new assumption $\myClass : \Int \to \rho_1$ by which we obtain the typing $\self \equiv \myClass \ \state : \rho_1$
that is enough to get $\self.\set(\self.\get + 1): \Int$ and hence:
\[
\myClass : \Int \to \rho_1, \state : \Int \vdash R : \record{\get:\Int, \set:\Int\to\Int, \succ:\Int} = \rho_2.
\]
Eventually by (\ref{eq:class-typing}) we conclude that $\Nat: \Int \to \rho_2$.

\medskip
In case of $\Nat$ we have reached the type $\rho_2$ that doesn't contain any occurrence of $\omega$ in a finite number of steps. 
This is because method $\succ$ of class $\Nat$ returns a number; let us consider a variant $\Nat'$ of $\Nat$ having a method
$\succ$ that returns a new instance of the class, whose state has been incremented:

\begin{align*}
&\Nat' = \Y (\lambda \myClass. \lambda \state. \\
&\qquad \letin{\self = \myClass \ \state} \\
&\qquad \qquad \record{\get = \state,  \succ =  \myClass(\self.\get +1)}\\
\end{align*}
Let us call $R' \equiv \record{\get = \state,  \succ =  \myClass(\self.\get +1)}$, where $\self \equiv (\myClass \ \state)$ as above.
Then
\[
\myClass : \Int\to\omega, \state : \Int \vdash R' : \record{\get:\Int, \succ:\omega} = \rho'_1.
\]
By assuming $\myClass : \Int \to \rho'_1$ and $\state : \Int$, we have that $\self: \rho'_1$ and hence $\self.\get:\Int$. It follows that
$\myClass(\self.\get +1) : \rho'_1$ so that
\[
\myClass : \Int \to \rho'_1, \state : \Int \vdash R' : \record{\get:\Int, \succ:\rho'_1} = \rho'_2.
\]
In general, putting $\rho'_0 = \omega$ and $\rho'_{i+1} = \record{\get:\Int, \succ:\rho'_i}$ we have for all $i$:
\[
\myClass : \Int \to \rho'_i, \state : \Int \vdash R' : \record{\get:\Int, \succ:\rho'_i} = \rho'_{i+1}.
\]
From this by (\ref{eq:class-typing}) we conclude that $\Nat': \Int \to \rho'_i$ for all $i$.  We note that this time, differently than in case of $\Nat$, we cannot get rid of occurrences $\omega$ in the $\rho'_i$.

A variant of this typing uses the type constants $\Odd$ and $\Even$ called semantic types in \cite{JR13} where they are used in the intersection types $\Int \inter \Odd$ or $\Int \inter \Even$
(see also below Section \ref{subsec:semanticTypes}). Let us suppose
that we have axioms $x: \Int \inter\Odd \vdash x + 1: \Int \inter\Even$ and $x: \Int \inter\Even \vdash x + 1:\Int \inter\Odd$ added to the typing system. Then a more interesting type for $\Nat'$ is
\begin{align*}
\Nat' : \quad&(\Int \inter\Odd \to \record{\get:\Int \inter\Odd, \succ:\record{\get:\Int \inter\Even}}) ~ \inter \\ 
	 &(\Int \inter\Even \to \record{\get:\Int \inter\Even, \succ:\record{\get:\Int \inter\Odd}}) .
\end{align*}

Although $\Nat'$ is closer to what is done with object-oriented programming, in the context of mixin application, introduced in the remainder of this section, we shall not consider this kind of recursive definition in this work, as previously explained.

\end{exa}

\medskip A {\em mixin} $M \in \MixinCmb$ is a combinator such that, if $C\in \ClassCmb$ then $M\,C$ reduces to a new class $C' \in \ClassCmb$.
Writing $M$ in a more explicit way we obtain:
\begin{align*}\label{eq:mixin-def}
M \equiv  \lambda \, \argClass. \, \Y ( \lambda \,\myClass \, \lambda \, \State. \,&\letin{ \super = (\argClass\;\State)} \\
	&\letin{\self = ( \myClass \; \State)}\\ 
	&\super \Override \record{l_i = N_i \mid i \in I})
\end{align*}
In words, a mixin merges an instance $C\,S$ of the input class $C$ with a new state $S$ together with a {\em difference} record 
$R \equiv \record{l_i = N_i \mid i \in I}$, that would be written $\Delta(C\,S)$ in terms of \cite{BrachaC90}. Note that our mixins are not the same as class modificators (also called wrappers e.g.~in \cite{BrachaThesis}). Wrappers bind $\super$ to the unmodified class definition applied to $\self$ \emph{without} taking the fixed point. In our case, $\super$ is simply an instance of the unmodified class.
The effect is that we have a static (or early) binding instead of dynamic (or late) binding of $\self$.

Let $M \equiv  \lambda \, \argClass. \, \Y (\lambda \,\myClass \, \lambda \, \State. \, (\argClass \; \State) \Override R) \in \MixinCmb$; 
to type $M$ we have to find types $\sigma^1, \sigma^2, \rho^1$ and a sequence $\rho^2_1, \ldots, \rho^2_n \in \TTR$ of record types such that for all $1 \leq i < n$ it is true that
$\lbl(R) = \lbl(\rho^2_{i})$ and setting
\begin{align*}
\Context_0 &= \Set{\argClass: \sigma^1 \to \rho^1, \myClass: \omega, \State: \sigma^1\inter\sigma^2}\\
\Context_{i} &= \Set{\argClass: \sigma^1 \to \rho^1, \myClass: (\sigma^1\inter\sigma^2) \to \rho^1 +  \rho_{i}^2, \State: \sigma^1\inter\sigma^2} \text{ for all } 1 \leq i < n
\end{align*}
we may deduce for all $0 \leq i < n$:
\[
\prooftree 
	\prooftree
		\prooftree
			\Context_i \der \State: \sigma^1\inter\sigma^2
		\justifies
			\Context_i \der \State: \sigma^1
		\using (\leq)
		\endprooftree
	\justifies
		\Context_i \der \argClass \; \State:\rho^1
		\using (\ArrE)
	\endprooftree
	\qquad
	\Context_i  \der R: \rho_{i+1}^2 \qquad \lbl(R) = \lbl(\rho^2_{i+1})
\justifies
	\Context_i \der (\argClass \; \State) \Override R: \rho^1+\rho_{i+1}^2
\using (+)
\endprooftree
\]
Hence for all $0\leq i<n$ we can derive the typing judgment:
\begin{eqnarray*}
\lefteqn{\argClass: \sigma^1 \to \rho^1 \der \lambda \,\myClass \, \lambda \, \State. \, (\argClass \; \State) \Override R:} \\ [2mm]
& & \hspace{4.5cm}((\sigma^1\inter\sigma^2) \to (\rho^1 +  \rho_i^2)) \to (\sigma^1\inter\sigma^2) \to ( \rho^1+\rho_{i+1}^2)
\end{eqnarray*}
and therefore, by reasoning as for classes, we get (setting $\rho^2 = \rho_{n}^2$):
\begin{multline}\label{eq:mixin-typing}
\der  M \equiv \lambda \, \argClass. \, \Y (\lambda \,\myClass \, \lambda \, \State. \, (\argClass \; \State) \Override R): \\
(\sigma^1 \to \rho^1) \to (\sigma^1\inter\sigma^2) \to ( \rho^1+\rho^2)
\end{multline}
Spelling out this type, we can say that $\sigma^1$ is a type of the state of the argument-class of $M$; $\sigma^1\inter\sigma^2$ is the type
of the state of the resulting class, that refines $\sigma^1$. Type $\rho^1$ expresses the requirements of $M$ about the methods of the argument-class, i.e.~what is assumed to hold for the usages of $\super$ and $\argClass$ in $R$ to be properly typed; $\rho^1+\rho^2$ is a type of the record of methods of the refined class, resulting from the merge of the methods of the argument-class with those of the difference $R$; since in general there will be overridden methods, whose types might be incompatible, the $+$ type constructor cannot be replaced by intersection. 
On the other hand, $M$ preserves typings of any label that is not in $R$.
\begin{lem}
\label{lem:mixin-label-preservation}
For any $\sigma$ and $\rho = \record{l : \tau}$ such that $l \not\in \lbl(R)$ we have 
$$\der M \equiv \lambda \, \argClass. \, \Y (\lambda \,\myClass \, \lambda \, \State. \, (\argClass \; \State) \Override R): (\sigma \to \rho) \to (\sigma \to \rho)$$
\end{lem}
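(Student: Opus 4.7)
The plan is to specialize the general mixin typing scheme (\ref{eq:mixin-typing}) with a carefully chosen instantiation so that the resulting merge type subtypes $\rho$. Write $R \equiv \record{l_i = N_i \mid i \in I}$. The hypothesis $l \notin \lbl(R)$ means $l \notin \Set{l_i \mid i \in I}$, which is exactly the side condition that makes record merge behave like disjoint intersection on label $l$.

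First I would apply (\ref{eq:mixin-typing}) with $\sigma^1 := \sigma$, $\sigma^2 := \omega$ (so $\sigma^1 \inter \sigma^2 = \sigma$), $\rho^1 := \rho = \record{l:\tau}$, and the trivial choice $\rho^2 := \record{l_i : \omega \mid i \in I}$. The side condition $\lbl(R) = \lbl(\rho^2)$ holds by construction. For the sequence $\rho^2_1,\ldots,\rho^2_n$ in the derivation leading to (\ref{eq:mixin-typing}), a single iteration suffices: since each $N_i$ is trivially typable by $\omega$ using rule $(\omega)$, the judgment $\Context_0 \der R : \rho^2$ is derivable by $(\recRule)$ and $(\inter)$ without needing any assumption on $\myClass$. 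This gives
\[
\der M : (\sigma \to \rho) \to (\sigma \to (\rho^1 + \rho^2)).
\]

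Next I would compute $\rho^1 + \rho^2$ using Lemma \ref{lem:normal-recordTypes}.\ref{lem:normal-recordTypes-b}. Since $l \notin \lbl(R)$, the labels $\Set{l}$ and $\Set{l_i \mid i \in I}$ are disjoint, so
\[
\rho^1 + \rho^2 \;=\; \record{l:\tau} + \record{l_i:\omega \mid i \in I} \;=\; \record{l:\tau,\; l_i:\omega \mid i \in I}.
\]
By Lemma \ref{lem:normal-recordTypes}.\ref{lem:normal-recordTypes-a} this type is $\leq \record{l:\tau} = \rho$. Applying contravariance/covariance of $\to$ (Definition \ref{def:typeInclusionInt}.\ref{def:typeInclusionInt-5}) and the subsumption rule $(\leq)$, we conclude
\[
\der M : (\sigma \to \rho) \to (\sigma \to \rho),
\]
as required.

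The argument is essentially bookkeeping over (\ref{eq:mixin-typing}); the only genuinely delicate point is the need to respect the exactness side condition $\lbl(R) = \lbl(\rho^2)$ of rule $(+)$, which is why we cannot simply type $R$ by $\record{}$ and must keep all its labels with type $\omega$. The hypothesis $l \notin \lbl(R)$ is used in exactly one place, namely to ensure that the merge preserves the $l$ component rather than overwriting it, via Lemma \ref{lem:normal-recordTypes}.\ref{lem:normal-recordTypes-b}. I expect no real obstacle beyond checking that the chosen instantiation indeed satisfies all side conditions of the iterative typing scheme for $\Y$ used in deriving (\ref{eq:mixin-typing}).
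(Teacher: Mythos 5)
Your proposal is correct and follows essentially the same route as the paper: instantiate the general derivation behind (\ref{eq:mixin-typing}) with $\rho^1=\rho$, $\rho^2=\record{l_i:\omega\mid i\in I}$ and a single $\Y$-iteration (the paper takes $\sigma^1=\sigma^2=\sigma$ where you take $\sigma^2=\omega$, which is immaterial), then conclude from $\rho+\rho^2=\record{l:\tau,\,l_i:\omega\mid i\in I}\leq\rho$ by disjointness of labels. No gaps.
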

\begin{proof}
Let $\{l_1, \ldots, l_n\} = \lbl(R)$. We have $\der R : \record{l_1 : \omega, \ldots, l_n : \omega}$. Since $l \not\in \lbl(R)$ we obtain $\rho + \record{l_1 : \omega, \ldots, l_n : \omega} =  \record{l : \tau, l_1 : \omega, \ldots, l_n : \omega} \leq \rho$. Using the above general type derivation with $\sigma^1 = \sigma^2 = \sigma$, $\rho^1 = \rho$, $\rho^2 = \record{l_1 : \omega, \ldots, l_n : \omega}$ and $\der \Y : (\omega \to (\sigma \to \rho)) \to (\sigma \to \rho)$ we obtain $\der M : (\sigma \to \rho) \to (\sigma \to \rho + \rho^2) \leq (\sigma \to \rho) \to (\sigma \to \rho)$.
\end{proof}

\begin{exa}\label{ex:mixin-type}
Let us consider the mixin $\Comparable$ adding a method $\compare$ to its argument class, which is supposed to have a method $\get$:
\begin{align*}
&\Comparable = \lambda \argClass. \Y (\lambda \myClass. \lambda \state. \\
&\qquad \letin{\super = \argClass \ \state} \\
&\qquad \letin{\self = \myClass \ \state} \\
&\qquad \qquad \super \oplus \record{\compare = \lambda o.(o.\get == \self.\get)})
\end{align*}
We write $==$ for a suitable equality operator, which is distinct from the symbol $=$ used in our calculus to associate labels to their values in a record.

For the sake of readability, we take the types $\sigma^1$ and $\sigma^2$ of the respective states of the argument class and the resulting class to be just $\Int$ and simply write $\Int$ for $\Int\inter\Int$ which is the type we assume for $\state$. If we set
\[\Context_0 = \Set{\argClass:\Int\to\record{\get:\Int}, \myClass:\Int\to\omega, \state:\Int }\] 
we have $\Context_0\vdash \super:\record{\get:\Int}$ but $\Context_0\vdash \self:\omega$ so that
$\Context_0\vdash \lambda o.(o.\get == \self.\get):\tau \to \omega$ for any type $\tau$ given to $o$; 
but $\tau \to \omega = \omega$ by Definition
\ref{def:typeInclusionInt}, so that we conclude
\[
\Context_0\vdash \super \oplus \record{\compare = \lambda o.(o.\get == \self.\get)} : \record{\get:\Int} + \record{\compare:\omega}.
\]
Now, taking
\[
\Context_1 = \Set{\argClass:\Int\to\record{\get:\Int}, \myClass:\Int\to \record{\get:\Int} + \record{\compare:\omega}, \state:\Int }
\]
and using the fact that $ \record{\get:\Int} + \record{\compare:\omega} =  \record{\get:\Int} \inter \record{\compare:\omega} \leq  \record{\get:\Int}$
we have $\Context_1\vdash \self:\record{\get:\Int}$, that implies  $\Context_1\vdash \self.\get:\Int$
and therefore $\Context_1 \vdash \lambda o.(o.\get == \self.\get): \record{\get:\Int} \to \Bool$. By (\ref{eq:mixin-typing}) we conclude that
$\Comparable$ has type
\[
(\Int\to\record{\get:\Int}) \to \Int \to \record{\get:\Int} + \record{\compare:\record{\get:\Int} \to \Bool}.
\]

We finally observe that if the argument class $C$ has type $\Int\to\record{\get:\Int, \compare:\tau}$ for some type $\tau$, then we have
that $\Int\to\record{\get:\Int, \compare:\tau} \leq \Int\to\record{\get:\Int}$, so that the class $C' \equiv \Comparable\, C$ has the correct type
$\Int \to \record{\get:\Int} + \record{\compare:\record{\get:\Int} \to \Bool}$ as the method $\super.\compare$ is overridden in 
$\super \oplus \record{\compare = \lambda o.(o.\get == \self.\get)}$. This is 
because, even by keeping the typing $\super: \record{\get:\Int, \compare:\tau}$ through the derivation, we obtain that $\super \oplus \record{\compare = \lambda o.(o.\get == \self.\get)}$ has type
\[\record{\get:\Int, \compare:\tau} + \record{\compare:\record{\get:\Int} \to \Bool} = \record{\get:\Int, \compare:\record{\get:\Int} \to \Bool}\]
which is the same as $\record{\get:\Int} + \record{\compare:\record{\get:\Int} \to \Bool}$.
For a more general typing of $\Comparable$ see Appendix \ref{app:running}.

\end{exa}


	\section{Bounded Combinatory Logic}\label{sec:bcl}
	
	Our main goal is synthesize meaningful mixin compositions.
For this purpose, we use the logical programming language given by inhabitation in $\bclkc$ (bounded combinatory logic with constructors). Conveniently, necessary type information can be inferred from intersection types for $\lambdaR$.
 $\bclkc$ is an extension of bounded combinatory logic $\bclk$~\cite{RehofEtAlCSL12} by covariant constructors. Constructors can be utilized to encode record types and useful features of $+$ while extending the existing synthesis framework (CL)S~\cite{BDDMR14}. In this section, we present $\bclkc$ along with its fundamental properties, in particular decidability of inhabitation.

\subsection{\texorpdfstring{$\bclkc$}{BCLk(TC)}}

\emph{Combinatory terms} are formed by application of combinators from a \emph{repository} (combinatory logic context) $\Delta$.
\begin{defi}[Combinatory Term]
$E, E' ::= C\; |\; (E\; E') \quad \text{ where } C \in \dom(\Delta)$
\end{defi}
We create repositories of typed combinators that can be considered logic programs for the existing synthesis framework (CL)S~\cite{BDDMR14} to reason about semantics of such compositions. The underlying type system of (CL)S is the intersection type system {\bf BCD} \cite{bcd}, which we extend by covariant constructors. The extended type system $\TTC$, while suited for synthesis, is flexible enough to encode record types and features of $+$. 

\begin{defi}[Intersection Types with Constructors $\TTC$]
The set $\TTC$ is given by:
\[
\TTC \ni \sigma, \tau, \tau_1, \tau_2 \ ::= a \mid \alpha \mid \omega \mid \tau_1 \rightarrow \tau_2 \mid \tau_1 \cap \tau_2 \mid c(\tau)
\]
where $a$ ranges over constants, $\alpha$ over type variables and $c$ over unary constructors $\CC$.
\end{defi}

$\TTC$ adds the following two subtyping axioms to the {\bf BCD} system (cf. Definition \ref{def:typeInclusionInt})
$$\tau_1 \leq \tau_2 \Rightarrow
c(\tau_1) \leq c(\tau_2) \qquad \quad
c(\tau_1) \cap c(\tau_2) \leq 
c(\tau_1 \cap \tau_2)$$
The additional axioms ensure \emph{constructor distributivity}, i.e., $c(\tau_1) \cap c(\tau_2) = c(\tau_1 \cap \tau_2)$. 

The notion of \emph{level} of a type is used as a bound to ensure decidability of inhabitation, or equivalently termination of logic programs in (CL)S.

\begin{defi}[Level] We define the \emph{level} of a type as follows:
\begin{align*}
&\level(\omega) = \level(a) = \level(\alpha) = 0
&&\level(c(\tau)) = 1 + \level(\tau) \\
&\level(\sigma \to \tau) = 1 + \max(\level(\sigma), \level(\tau))
&&\level(\sigma \cap \tau) = \max(\level(\sigma), \level(\tau))
\end{align*}
We define the level of a substitution $S$ as $\level(S)=\max\{\level(S(\alpha)) \mid \alpha \in \dom(S)\}$.
\end{defi}

\begin{defi}[Type Assignment $\bclkc$]
\[
\begin{array}{ccc }
\prooftree
	C:\tau \in \Delta \qquad \level(S) \leq k
	\justifies
	\Delta \vdashk C:S(\tau)
	\using (\text{Var})
\endprooftree 
&\quad
\prooftree
	\Delta \vdashk E: \sigma\to\tau \qquad \Delta \vdashk E':\sigma
	\justifies
	\Delta \vdashk E E':\tau
	\using (\ArrE)
\endprooftree 

\\\\

\prooftree
	\Delta \vdashk E: \sigma \quad 
	\Delta \vdashk E: \tau
	\justifies
	\Delta \vdashk E: \sigma\inter\tau
	\using (\inter)
\endprooftree 
&\quad
\prooftree
	\Delta \vdashk E: \sigma \qquad \sigma\leq\tau
	\justifies
	\Delta \vdashk E : \tau
	\using (\leq)
\endprooftree 
\end{array}\]
\end{defi}

For a set of typed combinators $\Delta$ and a type $\tau \in \TTC$ we say $\tau$ is inhabited in $\Delta$, if there exists a combinatory term $E$ and a $k \in \NN$ such that $\Delta \vdashk E : \tau$.

\subsection{\texorpdfstring{$\bclkc$}{BCLk(TC)} Inhabitation}

In this section we extend the understanding of $\bcl$ inhabitation~\cite{RehofEtAlCSL12} to constructors. 
First, we extend the necessary notions of \emph{paths} and \emph{organized types} to constructors in the following way.

\begin{defi}[Path]
A \emph{path} $\pi$ is a type of the form:
$ \pi ::= a \mid \alpha \mid \sigma \to \pi \mid c(\omega) \mid c(\pi) $,
where $\alpha$ is a variable, $\tau$ is a type, $c$ is a constructor and $a$ is a constant.
\end{defi}

\begin{defi}[Paths in $\tau$] 
\label{def:ttc_paths_in_tau}
Given a type $\tau \in \TTC$, the set $\PP(\tau)$ of paths in $\tau$ is defined as
\begin{align*}
&\PP(a) = \{a\} && \PP(\sigma \to \tau) = \{\sigma \to \pi \mid \pi \in \PP(\tau)\}\\
&\PP(\alpha) = \{\alpha\} && \PP(\sigma \cap \tau) = \PP(\sigma)\cup \PP(\tau)\\
& \PP(\omega) = \emptyset && \PP(c(\tau)) = \begin{cases} \{c(\omega)\} & \text{if } \PP(\tau) = \emptyset \\ \{c(\pi) \mid \pi \in \PP(\tau)\} & \text{else} \end{cases}
\end{align*}
\end{defi}
\noindent Observe that $\PP(\tau) = \emptyset$ iff $\tau=\omega$.

\begin{defi}[Organized Type]
A type $\tau$ is called organized, if $\tau \equiv \bigcap_{i \in I} \pi_i$, where $\pi_i$ for $i \in I$ are paths.
\end{defi}

\noindent For brevity, we sometimes write $\bigcap \PP(\tau)$ for $\pi_1\cap\ldots\cap\pi_n$ where $\PP(\tau)=\{\pi_1,\ldots,\pi_n \}$. If $\PP(\tau)=\emptyset$, then we set $\bigcap \PP(\tau)\equiv\omega$.

\begin{lem}
\label{lem:ttc_path_intersection}
Given a type $\tau \in \TTC$, the type $\bigcap \PP(\tau)$ is organized with $\bigcap \PP(\tau) = \tau$.
\end{lem}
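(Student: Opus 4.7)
The plan is to prove the statement by structural induction on $\tau \in \TTC$. The organized-ness claim will essentially come for free once we verify that every element of $\PP(\tau)$ is, by its syntactic form, a path (a routine check on Definition \ref{def:ttc_paths_in_tau}), so the real content is the subtyping equality $\bigcap \PP(\tau) = \tau$.

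For the base cases $\tau \in \{a, \alpha\}$, $\PP(\tau) = \{\tau\}$, and for $\tau = \omega$ the set $\PP(\omega)$ is empty and we use the convention $\bigcap \emptyset \equiv \omega$. For $\tau = \sigma_1 \cap \sigma_2$, the definition gives $\PP(\tau) = \PP(\sigma_1) \cup \PP(\sigma_2)$, and the inductive hypotheses combined with associativity, commutativity and idempotence of $\cap$ (all standard in \textbf{BCD}) yield the claim.

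The two interesting cases are $\tau = \sigma \to \tau'$ and $\tau = c(\tau')$, and in both the crux is a distributivity identity. For the arrow case, split on whether $\PP(\tau') = \emptyset$: if so, then $\tau' = \omega$ (observed right after Definition \ref{def:ttc_paths_in_tau}), hence $\tau = \sigma \to \omega = \omega$ using \ref{def:typeInclusionInt}.\ref{def:typeInclusionInt-1} and \ref{def:typeInclusionInt}.\ref{def:typeInclusionInt-5}, while simultaneously $\PP(\tau) = \emptyset$, so both sides equal $\omega$. Otherwise, by the induction hypothesis $\tau' = \bigcap \PP(\tau')$, and an iterated application of the \textbf{BCD} equality $(\sigma \to \tau_1) \cap (\sigma \to \tau_2) = \sigma \to (\tau_1 \cap \tau_2)$ (obtained from \ref{def:typeInclusionInt}.\ref{def:typeInclusionInt-4} and \ref{def:typeInclusionInt}.\ref{def:typeInclusionInt-5}) gives
\[
\bigcap \PP(\sigma \to \tau') \;=\; \bigcap_{\pi \in \PP(\tau')} (\sigma \to \pi) \;=\; \sigma \to \bigcap \PP(\tau') \;=\; \sigma \to \tau'.
\]
The constructor case is entirely analogous: if $\PP(\tau') = \emptyset$, then by definition $\PP(c(\tau')) = \{c(\omega)\}$ and $\tau' = \omega$, so $\bigcap \PP(c(\tau')) = c(\omega) = c(\tau')$; otherwise, using the newly-added constructor distributivity axiom $c(\tau_1) \cap c(\tau_2) = c(\tau_1 \cap \tau_2)$ iteratively together with the induction hypothesis, we obtain $\bigcap_{\pi \in \PP(\tau')} c(\pi) = c(\bigcap \PP(\tau')) = c(\tau')$.

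The step I expect to be the main nuisance is purely bookkeeping: the $\omega$-edge cases in the arrow and constructor clauses, where one must recognize that $\PP(\tau') = \emptyset$ forces $\tau' = \omega$ and handle the empty intersection convention consistently with the \textbf{BCD} identities $\omega = \sigma \to \omega$ (and, in our extension, the absence of a corresponding axiom $\omega = c(\omega)$, which is why $c(\omega)$ must appear explicitly as a path in Definition \ref{def:ttc_paths_in_tau} rather than being collapsed to $\omega$). No genuinely new subtyping reasoning is required beyond distributivity of $\to$ and $c$ over $\cap$.
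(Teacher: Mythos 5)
Your proposal is correct and follows essentially the same route as the paper's proof: induction on the structure of $\tau$, with the same case split on whether $\PP$ of the subterm is empty, the same use of the distributivity of $\to$ and $c$ over $\cap$, and the same handling of the $\omega$ edge cases (including the observation that $\PP(\tau')=\emptyset$ forces $\tau'=\omega$ and that $c(\omega)$ is kept as an explicit path). No substantive differences.
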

\noindent The detailed proof by induction of Lemma \ref{lem:ttc_path_intersection} is in the appendix. 

\begin{lem}
\label{lem:ttc_path_intersection_size}
Given a type $\tau \in \TTC$ we have $|\bigcap \PP(\tau)| \leq |\tau|^2$, where $|\cdot|$ denotes the number of nodes in the syntax tree of a given type.
\end{lem}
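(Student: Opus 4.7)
The plan is to proceed by structural induction on $\tau$, proving the target bound $|\bigcap \PP(\tau)| \leq |\tau|^2$ in tandem with the auxiliary bound $|\PP(\tau)| \leq |\tau|$, since the blowup in the arrow case is driven by how many paths $\tau$ has, not merely by the size of its intersected form.

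First I would establish the auxiliary bound $|\PP(\tau)|\leq|\tau|$ by a straightforward induction reading off Definition~\ref{def:ttc_paths_in_tau}: constants and variables contribute one path of size one; $\omega$ contributes zero; at an intersection the counts add; for $\sigma\to\tau'$ and $c(\tau')$ with $\PP(\tau')\neq\emptyset$ the count equals $|\PP(\tau')|$, which is no greater than $|\tau'|\leq|\tau|$; and for $c(\tau')$ with $\PP(\tau')=\emptyset$ the count is $1\leq|\tau|$.

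Next I would analyse the quantity $f(\tau) := |\bigcap \PP(\tau)|$ case by case, writing it, whenever $\PP(\tau)=\{\pi_1,\ldots,\pi_p\}$ is nonempty, as $f(\tau)=\sum_i |\pi_i| + (p-1)$. The constant/variable/$\omega$ cases are immediate ($f\leq 1$). In the intersection case $\tau=\sigma_1\cap\sigma_2$, one has $f(\tau)\leq f(\sigma_1)+f(\sigma_2)+1$, which the induction hypothesis and the elementary inequality $s_1^2+s_2^2+1\leq(s_1+s_2+1)^2$ turn into $f(\tau)\leq|\tau|^2$. In the constructor case $\tau=c(\tau')$, either $\PP(\tau')$ is empty and $f(\tau)=|c(\omega)|=2\leq|\tau|^2$, or $f(\tau)=f(\tau')+|\PP(\tau')|\leq|\tau'|^2+|\tau'|\leq(1+|\tau'|)^2=|\tau|^2$.

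The main obstacle, as expected, is the arrow case $\tau=\sigma\to\tau'$ with $\PP(\tau')\neq\emptyset$, because the left-hand type $\sigma$ is duplicated once for each path in $\PP(\tau')$. Expanding the definition and substituting $\sum_i|\pi_i|=f(\tau')-|\PP(\tau')|+1$ I obtain
\[
f(\tau)\;=\;|\PP(\tau')|\,(|\sigma|+1)\;+\;f(\tau').
\]
Applying the induction hypothesis $f(\tau')\leq|\tau'|^2$ together with the auxiliary bound $|\PP(\tau')|\leq|\tau'|$ gives
\[
f(\tau)\;\leq\;|\tau'|\,(|\sigma|+1)+|\tau'|^{2}\;=\;|\tau'|\,(|\sigma|+1+|\tau'|)\;=\;|\tau'|\cdot|\tau|\;\leq\;|\tau|^{2},
\]
where the last step uses $|\tau'|\leq|\tau|$. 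The case $\PP(\tau')=\emptyset$ is trivial since then $\PP(\tau)=\emptyset$ and $f(\tau)=1$. This completes the induction and establishes the quadratic bound.
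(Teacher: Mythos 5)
Your proposal is correct and follows essentially the same route as the paper: first establish the auxiliary bound $|\PP(\tau)| \leq |\tau|$ by induction, then bound $|\bigcap\PP(\tau)|$ by structural induction, with the arrow and constructor cases resolved by exactly the identities $|\bigcap\PP(\sigma\to\tau')| = (|\sigma|+1)\cdot|\PP(\tau')| + |\bigcap\PP(\tau')|$ and $|\bigcap\PP(c(\tau'))| = |\PP(\tau')| + |\bigcap\PP(\tau')|$. You merely spell out the base and intersection cases that the paper dismisses as trivial; the substance is identical.
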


\begin{ProofNoQed}
By induction using Definition \ref{def:ttc_paths_in_tau} we have $|\PP(\tau)| \leq |\tau|$.
The only non-trivial cases for the inductive proof of the main statement are (assuming $\PP(\tau) \neq \emptyset$)
\begin{align*}
|\bigcap \PP(c(\tau))| &= \sum\limits_{\pi \in \PP(\tau)} (|\pi|+2) - 1 = |\PP(\tau)| + |\bigcap \PP(\tau)| \leq |\tau| + |\tau|^2 \leq |c(\tau)|^2\\
|\bigcap \PP(\sigma \to \tau)| &= \sum\limits_{\pi \in \PP(\tau)} (|\sigma| + |\pi|+2) - 1 = (|\sigma|+1) \cdot |\PP(\tau)| + |\bigcap \PP(\tau)| \leq |\sigma \to \tau|^2 \tag*{\qEd}
\end{align*}

\end{ProofNoQed}

Due to Lemma \ref{lem:ttc_path_intersection_size}, for any intersection type there exists an equivalent organized intersection type computable in polynomial time. 
Note that organized types are not necessarily normalized~\cite{Hindley82} or strict~\cite{Bakel11}. However, organized types have the following property known from normalized types. 

\begin{lem}\label{lem:ttc_path_set_subtyping}
Given two types $\sigma, \tau \in \TTC$, we have $\sigma \leq \tau$ iff for each path $\pi \in \PP(\tau)$ there exists a path $\pi' \in \PP(\sigma)$ such that $\pi' \leq \pi$ and
\begin{itemize}
\item If $\pi \equiv \alpha$ (resp. $a$), then $\pi' \equiv \alpha$ (resp. $a$).
\item If $\pi \equiv \sigma_2 \to \tau_2$, then $\pi' \equiv \sigma_1 \to \tau_1$ such that $\sigma_2 \leq \sigma_1$ and $\tau_1 \leq \tau_2$.
\item If $\pi \equiv c(\tau_1)$, then $\pi' \equiv c(\sigma_1)$ such that $\sigma_1 \leq \tau_1$.
\end{itemize}
\end{lem}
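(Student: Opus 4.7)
The plan is to treat the two directions separately, reducing both to statements about a single path on the right-hand side.

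The ``if'' direction is immediate from Lemma~\ref{lem:ttc_path_intersection}: since $\sigma = \bigcap \PP(\sigma)$, we have $\sigma \leq \pi'$ for every $\pi' \in \PP(\sigma)$; hence, given any $\pi \in \PP(\tau)$, the hypothesis supplies $\pi' \in \PP(\sigma)$ with $\pi' \leq \pi$, so $\sigma \leq \pi$ by transitivity. Taking the meet over all $\pi \in \PP(\tau)$ and applying Lemma~\ref{lem:ttc_path_intersection} once more yields $\sigma \leq \bigcap \PP(\tau) = \tau$.

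For the ``only if'' direction, since $\tau = \bigcap \PP(\tau)$ gives $\sigma \leq \pi$ for every $\pi \in \PP(\tau)$, it suffices to prove the following single-path claim: whenever $\sigma \leq \pi$ with $\pi$ a path, some $\pi' \in \PP(\sigma)$ satisfies $\pi' \leq \pi$ with matching top-level shape. I would proceed by induction on $\pi$. The atomic cases $\pi = a$ and $\pi = \alpha$ reduce to small inversion lemmas, provable by induction on the derivation of $\sigma \leq a$ (resp.\ $\sigma \leq \alpha$), stating that constants and variables on the right of $\leq$ must already occur as paths of $\sigma$ — a consequence of the fact that atoms are minimal and can be generated only via reflexivity and intersection elimination. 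For $\pi \equiv \sigma' \to \pi''$, a preliminary shape-separation argument (showing that atomic or constructor-headed paths cannot help establish a subtyping to a non-$\omega$ arrow type) reduces the situation to intersections of arrow paths of $\PP(\sigma)$, to which Lemma~\ref{lem:arrow-prop} applies: it produces a nonempty $J$ with $\{\rho_j \to \mu_j\}_{j \in J} \subseteq \PP(\sigma)$, $\sigma' \leq \bigcap_{j \in J} \rho_j$ and $\bigcap_{j \in J} \mu_j \leq \pi''$. Applying the induction hypothesis to the strictly smaller path $\pi''$ then singles out some $\mu_{j_0} \leq \pi''$, making $\rho_{j_0} \to \mu_{j_0}$ the desired witness.

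The remaining case $\pi \equiv c(\chi)$ is the principal obstacle. It calls for a constructor analogue of Lemma~\ref{lem:arrow-prop}: any derivation of $\sigma \leq c(\chi)$ can be reorganised to exhibit a nonempty family $\{c(\nu_j)\}_{j \in J} \subseteq \PP(\sigma)$ of constructor paths sharing the \emph{same} constructor symbol $c$ with $\bigcap_{j \in J} \nu_j \leq \chi$. I would prove this by induction on the subtyping derivation, exploiting the two new axioms $\tau_1 \leq \tau_2 \Rightarrow c(\tau_1) \leq c(\tau_2)$ and $c(\tau_1) \cap c(\tau_2) \leq c(\tau_1 \cap \tau_2)$, together with the shape-separation observation that no path of arrow, constant, or variable shape can be $\leq c(\chi)$. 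Once this constructor inversion is in place, the outer induction on $\chi$ (with base case $\chi = \omega$, where any $c(\nu) \in \PP(\sigma)$ is automatically $\leq c(\omega)$) collapses $\bigcap_{j \in J} \nu_j \leq \chi$ to a single witness $\nu_{j_0}$, and $c(\nu_{j_0}) \in \PP(\sigma)$ fulfils the structural clause. The hard part will be taming the interaction of the two new constructor axioms with intersection elimination and the ambient \textbf{BCD} equalities, in a way that mirrors the delicate arrow inversion argument of \cite{bcd}.
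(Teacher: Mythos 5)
Your overall architecture---reduce to single paths on the right, then run a structural induction on the path $\pi$, discharging each shape via a dedicated inversion (``beta-soundness'') lemma---is a legitimate alternative in principle, and several of its ingredients are right: the ``if'' direction is exactly as you describe; since $\PP(\pi')=\{\pi'\}$ for any path $\pi'$, the induction hypothesis does collapse $\bigcap_{j}\mu_j\leq\pi''$ (resp.\ $\bigcap_j\nu_j\leq\chi$) to a single witness; and the contravariance bookkeeping in the arrow case is correct.

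The gap is in how you intend to establish the inversion lemmas themselves. You propose to prove each of them (atomic, arrow, constructor) ``by induction on the subtyping derivation'', but every one of these inductions breaks at the transitivity rule, not at the interaction of the constructor axioms with \textbf{BCD} that you flag as the hard part. Take the constructor inversion: from $\sigma\leq\mu\leq c(\chi)$ the induction hypothesis applies to the right premise and yields constructor paths $c(\nu_j)\in\PP(\mu)$ with $\bigcap_j\nu_j\leq\chi$; to continue you must convert these into constructor paths of $\sigma$, i.e.\ you need path information about the judgment $\sigma\leq\mu$, whose right-hand side is an arbitrary type and hence outside the scope of the lemma being proved, while the recombined judgment $\sigma\leq c(\bigcap_j\nu_j)$ is not a subderivation of the one you started from, so the induction is not well-founded on it. The same obstruction already appears in the simplest atomic case $\sigma\leq\mu\leq a$. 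The standard repair is to generalize the statement to an arbitrary right-hand side and prove the full path correspondence in a single induction on the derivation of $\sigma\leq\tau$: then both premises of transitivity fall under the induction hypothesis and the two path correspondences simply compose. That is precisely what the paper does (its Case 5 of the inductive step), with all the axioms handled uniformly by observing $\PP(\sigma)\supseteq\PP(\tau)$. Once you make this generalization, your separate inversion lemmas become corollaries rather than stepping stones, and the outer structural induction on $\pi$ is no longer needed.
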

\noindent The detailed proof by induction of Lemma \ref{lem:ttc_path_set_subtyping} is in the appendix.

We say a path $\pi$ has the \emph{arity} of at least $m$ if $\pi \equiv \sigma_1 \to \ldots \to \sigma_m \to \tau$ for some $\sigma_1, \ldots \sigma_m, \tau \in \TTC$.
Additionally, for such a path $\pi$ we define $\arg_i(\pi)=\sigma_i$ for $1 \leq i \leq m$ and $\tgt_m(\pi)=\tau$.
We define $\PP_m(\tau)$ as the set of paths in $\tau$ having arities of at least $m$.

$\bclk$ inhabitation is $(k+2)$-\textsc{ExpTime} complete~\cite{RehofEtAlCSL12}. The upper bound is derived by constructing an alternating Turing machine based on the \emph{path lemma}~\cite[Lemma~11]{RehofEtAlCSL12}. Accordingly, we formulate a path lemma for $\bclkc$. Let $\atoms(\tau)$ be the set of constants, variables and constructor names occurring in $\tau$. Additionally, for a substitution $S$ let $\atoms(S)=\bigcup \{\atoms(S(\alpha)) \mid \alpha \in \dom(S)\}$, and for a repository $\Delta$ let $\atoms(\Delta) = \bigcup \{\atoms(\tau) \mid C : \tau \in \Delta\}$.

\begin{lem}[Path Lemma for $\bclkc$] 
\label{lem:blckc_path_lemma}
The following are equivalent conditions:
\begin{enumerate}
\item $\Delta \vdashk C\, E_1\, \ldots E_m : \tau$
\item There exists a set of paths \\
$P \subseteq \PP_m(\bigcap \{S(\Delta(C)) \mid \level(S) \leq k, \atoms(S) \subseteq \atoms(\Delta) \cup \atoms(\tau)\})$ such that
\begin{enumerate}
\item $\bigcap_{\pi \in P} \tgt_m(\pi) \leq \tau$
\item $\Delta \vdashk E_i : \bigcap_{\pi \in P} \arg_i(\pi)$ for $1 \leq i \leq m$
\end{enumerate}
\end{enumerate}
\end{lem}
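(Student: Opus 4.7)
The plan is to prove the two implications separately, following the strategy of the original $\bclk$ path lemma from Rehof et al.~\cite{RehofEtAlCSL12}, but adapted so that the constructor cases of Lemma \ref{lem:ttc_path_set_subtyping} carry the weight of the subtyping decomposition.

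For the direction $(2) \Rightarrow (1)$, which is the routine half, I would proceed as follows. By the definition of $P \subseteq \PP_m(\bigcap \{S(\Delta(C)) \mid \ldots\})$, for each $\pi \in P$ there is a substitution $S_\pi$ with $\level(S_\pi) \leq k$ and $\atoms(S_\pi) \subseteq \atoms(\Delta) \cup \atoms(\tau)$ such that $\pi \in \PP(S_\pi(\Delta(C)))$, hence $S_\pi(\Delta(C)) \leq \pi$. Applying (Var) and $(\leq)$ gives $\Delta \vdashk C : \pi$ for every $\pi \in P$, and then $(\inter)$ yields $\Delta \vdashk C : \bigcap_{\pi \in P} \pi$. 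Since each $\pi \in P$ has arity at least $m$, using distributivity of $\to$ over $\inter$ and Definition \ref{def:typeInclusionInt}.\ref{def:typeInclusionInt-4} we obtain
\[
\textstyle\bigcap_{\pi \in P} \pi \;\leq\; \bigcap_{\pi \in P} \arg_1(\pi) \to \cdots \to \bigcap_{\pi \in P} \arg_m(\pi) \to \bigcap_{\pi \in P} \tgt_m(\pi).
\]
Then $m$ applications of (ArrE) using the hypothesis $\Delta \vdashk E_i : \bigcap_{\pi \in P} \arg_i(\pi)$, followed by $(\leq)$ using (a), deliver $\Delta \vdashk C\,E_1\cdots E_m : \tau$.

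For the harder direction $(1) \Rightarrow (2)$, I would first establish a generation lemma for $\bclkc$: from $\Delta \vdashk C\, E_1\cdots E_m : \tau$ I extract types $\sigma_1,\ldots,\sigma_m$ and $\tau' \leq \tau$ together with substitutions $S_1,\ldots,S_n$ of level at most $k$ such that $\Delta \vdashk E_i : \sigma_i$ and $\bigcap_{j=1}^n S_j(\Delta(C)) \leq \sigma_1 \to \cdots \to \sigma_m \to \tau'$. This is done by iterating the standard inversion argument used in Lemma \ref{lem:generation} (commuting $(\inter)$ and $(\leq)$ past application nodes) together with the arrow-splitting Lemma \ref{lem:arrow-prop}, which already handles arrow subtyping under intersection. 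Next, for every path $\pi_0 \in \PP(\tau')$ one has $\sigma_1 \to \cdots \to \sigma_m \to \pi_0 \in \PP_m(\sigma_1 \to \cdots \to \sigma_m \to \tau')$, so by Lemma \ref{lem:ttc_path_set_subtyping} there is a path in $\bigcap_j S_j(\Delta(C))$ of arity at least $m$ that is pointwise below it (with $\sigma_i \leq \arg_i(\pi')$ and $\tgt_m(\pi') \leq \pi_0$). Collecting one such $\pi'$ for each $\pi_0 \in \PP(\tau')$ yields a candidate $P$ satisfying $\Delta \vdashk E_i : \bigcap_{\pi \in P}\arg_i(\pi)$ and $\bigcap_{\pi \in P}\tgt_m(\pi) \leq \bigcap \PP(\tau') = \tau' \leq \tau$.

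The main obstacle is that so far the substitutions $S_j$ are arbitrary, whereas (2) demands $\atoms(S_j) \subseteq \atoms(\Delta) \cup \atoms(\tau)$. To close this gap I would adapt the atom-collapsing argument from \cite{RehofEtAlCSL12}: define a substitution $\iota$ on atoms that replaces every constant or constructor name outside $\atoms(\Delta) \cup \atoms(\tau)$ by a fixed canonical element of $\atoms(\Delta) \cup \atoms(\tau)$ (and is the identity elsewhere), and show that $\iota$ preserves both the level bound and the relevant subtyping $\iota(\sigma) \leq \iota(\tau')$ for any pair $\sigma \leq \tau'$ in which $\tau'$ contains only atoms from $\atoms(\Delta) \cup \atoms(\tau)$. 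The arrow and $\inter$ cases are standard; the new case is constructors, but since Lemma \ref{lem:ttc_path_set_subtyping} processes each $c(\tau_1) \leq c(\tau_2)$ by pushing subtyping under $c$ (which is covariant and distributes over $\inter$), the collapse propagates through constructor layers without interference. Applying $\iota$ to each $S_j$ produces substitutions with the required atom restriction that still dominate the same set of paths, completing the argument.
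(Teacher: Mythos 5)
Your proposal is correct, and both directions rest on the same key ingredients as the paper's proof: distributivity of $\to$ over $\inter$ for $(2)\Rightarrow(1)$, Lemma~\ref{lem:ttc_path_set_subtyping} for decomposing subtyping into paths, and the atom-restriction argument imported from \cite{RehofEtAlCSL12}. The organization of $(1)\Rightarrow(2)$ differs, though: the paper performs a direct induction on the derivation of $\Delta\vdashk C\,E_1\ldots E_m:\tau$, treating the last rule case by case (in the $(\ArrE)$ case it \emph{filters} the path set obtained from the induction hypothesis, keeping only the paths $\pi$ with $\sigma\leq\arg_m(\pi)$), whereas you first flatten the derivation via a generation lemma into a single subtyping $\bigcap_j S_j(\Delta(C))\leq\sigma_1\to\cdots\to\sigma_m\to\tau'$ and then extract $P$ in one shot by applying Lemma~\ref{lem:ttc_path_set_subtyping} along the whole arrow spine. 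Your factoring makes the structure of the argument more transparent, at the cost of having to prove the generation lemma separately (which is itself the same commuting-of-$(\inter)$-and-$(\leq)$ induction, so no work is really saved) and of having to iterate Lemma~\ref{lem:ttc_path_set_subtyping} through $m$ arrow layers, using that the tail of a path is again a (singleton) path --- a step you should make explicit. On the atom restriction, the paper simply invokes the analogue of \cite[Prop.~8]{RehofEtAlCSL12} as a ``wlog''; your explicit collapsing substitution $\iota$ is the right idea, but note that a foreign \emph{constructor} name cannot in general be sent to an arbitrary canonical atom of $\atoms(\Delta)\cup\atoms(\tau)$ (there may be no constructor name of the right sort there); it is safer to collapse the entire foreign subterm $c(\cdot)$ to $\omega$, which preserves levels, preserves $\leq$ in the needed direction, and fixes $\tau$. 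This is a presentational wrinkle, not a gap.
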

\noindent The detailed proof of Lemma \ref{lem:blckc_path_lemma}, which is a slight extension of~\cite[Lemma 11]{RehofEtAlCSL12}, is in the appendix.

In the above Lemma \ref{lem:blckc_path_lemma} we can bound the size of the set $P$ of paths of level at most $k$. Let $\exp_k$ be the iterated exponential function, i.e.~$\exp_0(n) = n$ and $\exp_{k+1}(n) = 2^{\exp_k(n)}$.

\begin{lem}
\label{lem:polynomial_size_bound}
There exists a polynomial $p$ such that (modulo $=$) for any $k \in \NN$ the number of level-$k$ types
over $n$ atoms is at most $\exp_{k+1}(p(n))$, and the size of such types is at most $\exp_k(p(n))$.
\end{lem}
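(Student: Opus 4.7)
The plan is to reduce both bounds to a recursive count of paths. By Lemma~\ref{lem:ttc_path_intersection} every type is equivalent to its organized form, so each equivalence class (modulo $=$) of types of level at most $k$ is determined by the subset of level-$\leq k$ paths appearing in it. I let $T_k$ and $P_k$ denote the numbers of equivalence classes of types and of paths, respectively, of level at most $k$, and let $S_k$ be the maximum organized size of such a type. At once, $P_k \leq T_k$ (each path is a type) and $T_k \leq 2^{P_k}$ (each type is determined by its set of paths).

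Next I read off the recursion from Definition~\ref{def:ttc_paths_in_tau}: a path of level at most $k+1$ is either a path of level at most $k$, a term $c(\pi)$ with $c$ one of the at most $n$ constructors occurring among the atoms and $\pi$ a path of level at most $k$, or a term $\sigma\to\pi$ with $\sigma$ and $\pi$ both of level at most $k$. Summing the three cases,
\begin{equation*}
P_{k+1} \leq P_k + n\cdot P_k + T_k\cdot P_k \leq (n+2)\, T_k\, P_k \leq (n+2)\, T_k^{2},
\end{equation*}
hence $T_{k+1} \leq 2^{(n+2)\, T_k^{2}}$. An organized level-$(k+1)$ path has size at most $2 S_k + 1$, so $S_{k+1} \leq P_{k+1}\cdot(2 S_k + 1)$.

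I then induct on $k$ to bound $T_k \leq \exp_{k+1}(p(n))$ and $S_k \leq \exp_k(p(n))$ for a polynomial $p$ chosen with enough slack. The base case $k=0$ is immediate, since organized level-$0$ types are intersections of distinct atoms, giving $T_0 \leq 2^n = \exp_1(n)$ and $S_0 \leq n$. For the inductive step, the recursion unfolds to
\begin{equation*}
T_{k+1} \leq 2^{(n+2)\,\exp_{k+1}(p(n))^{2}} = 2^{\,2^{\,2\exp_k(p(n))\,+\,\log_2(n+2)}},
\end{equation*}
and the identity $\exp_k(x+1)\geq 2\exp_k(x)$ (valid for $k\geq 1$) lets a bounded constant increase of the argument to $\exp_k$ absorb both the leading factor $2$ and the additive $\log_2(n+2)$; this yields $T_{k+1}\leq\exp_{k+2}(p(n))$ once $p$ is chosen with a large enough leading coefficient. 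The size induction is analogous, using $S_{k+1}\leq(n+2)\,\exp_{k+1}(p(n))^{2}\cdot(2\exp_k(p(n))+1)$ and the same absorption trick.

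The main technical obstacle is calibrating $p$ so that the bounded additive corrections introduced at each inductive step do not accumulate beyond the slack $p$ provides. The tower structure helps: polynomial factors and fixed additive shifts in the argument of $\exp_k$ translate into comparable shifts at the next level, so a polynomial $p$ of sufficiently large leading constant (with the case $k=0$ handled separately, as $\exp_0$ is the identity and the doubling property fails there) makes the uniform argument go through.
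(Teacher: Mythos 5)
Your counting recursion is essentially the paper's: classify level-$(k{+}1)$ paths by their head (an atom, an arrow whose components have level at most $k$, or a constructor applied to a level-$\leq k$ path), bound the number of paths by $(n+2)T_k^2$ (the paper gets $3n T_k^2$), bound types by $2^{P_{k+1}}$ via the organized form from Lemma~\ref{lem:ttc_path_intersection}, and bound sizes by $S_{k+1}\leq P_{k+1}\cdot O(S_k)$. That part is correct and matches the paper's proof.

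The gap is in closing the induction. Your hypothesis and conclusion use the \emph{same} polynomial $p$, so the step must derive $T_{k+1}\leq\exp_{k+2}(p(n))=2^{\exp_{k+1}(p(n))}$ from $T_{k+1}\leq 2^{(n+2)\exp_{k+1}(p(n))^2}$, which along your chain of inequalities would require $(n+2)\exp_{k+1}(p(n))^2\leq\exp_{k+1}(p(n))$ --- false for every $p$. What your absorption argument actually yields is $T_{k+1}\leq\exp_{k+2}(p(n)+c)$ for some constant $c>0$, and $\exp_{k+2}(p(n)+c)>\exp_{k+2}(p(n))$ no matter how large the leading coefficient of $p$ is: enlarging $p$ cannot retroactively shrink the argument the hypothesis already committed to. Nor does the remark that fixed additive shifts \emph{translate into comparable shifts at the next level} rescue this: if every step costs a comparable shift, after $k$ steps the argument is $p(n)+ck$, which is not of the form $\exp_{k+1}(q(n))$ uniformly in $k$. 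The repair --- which is exactly what the paper's bound $\exp_{k+1}\bigl(p(n)+\sum_{i=0}^{k-1}p(n)/2^i\bigr)\leq\exp_{k+1}(3p(n))$ encodes --- is to strengthen the induction hypothesis to $T_k\leq\exp_{k+1}(q_k)$ with a level-dependent argument $q_k$ growing by \emph{geometrically decreasing} increments: because $\exp_k$ amplifies its argument so violently, the increment needed at level $k$ to absorb the squaring and the factor $n+2$ shrinks rapidly with $k$, so $\sum_k (q_{k+1}-q_k)$ converges and all $q_k$ stay below one fixed polynomial, which is then renamed $p$. The same bookkeeping is needed for the size bound $S_k\leq\exp_k(p(n))$, where your step likewise produces $\exp_k(p(n)+c)$ rather than $\exp_k(p(n))$. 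With that strengthened hypothesis your argument goes through; as written, the uniform claim is not established.
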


\begin{proof}
Since normalizing (i.e.~recursively organizing) a type does not increase its level, we only need to consider normalized types. Fix the set of atoms $\AA$ with $|\AA|=n$. Let $\PP^k$ denote the set of paths of level at most $k$, and $\TT^k$ the set of normalized types of level at most $k$. We have
\begin{align*}
& |\AA| = n \leq n|\TT^k|^2 \text{ and }\\ 
 & |\{\sigma \to \pi \mid \sigma \in \TT^k, \pi \in \PP^k\}| \leq |\TT^k| \cdot |\PP^k| \leq n|\TT^k|^2 \text{ and }\\
 & |\{c(\pi) \mid c \in \AA, \pi \in \PP^k\}| \leq |\AA| \cdot |\PP^k| \text{ therefore }\\
 & |\PP^{k+1}| \leq |\AA| + |\{\sigma \to \pi \mid \sigma \in \TT^k, \pi \in \PP^k\}| + |\{c(\pi) \mid c \in \AA, \pi \in \PP^k\}| \leq 3n|\TT^k|^2 \text{ and }\\
 & |\TT^{k+1}| \leq |\{\bigcap P \mid P \subseteq \PP^{k+1}\}| \leq 2^{3n|\TT^k|^2}
\end{align*}
By induction on $k$ we have that there exists a polynomial $p$ such that 
$$|\TT^k| \leq \exp_{k+1}(p(n)+\sum\limits_{i=0}^{k-1}\frac{p(n)}{2^i}) \leq \exp_{k+1}(3p(n))$$
Let $s_k$ be the maximal size of a type of level $k$ with atoms in $\AA$. We have 
$$s_{k+1} \leq |\PP^{k+1}|\cdot(2s_{k}+2) \leq 3n|\TT^{k}|^2\cdot(2s_{k}+2) $$
By induction on $k$ we have that there exists a polynomial $p$ such that $s_k \leq \exp_k(p(n))$.
\end{proof}
For the interested reader, only the rank of a given type influences the height of the exponentiation tower.

Using Lemma \ref{lem:blckc_path_lemma}  we can decide  $\bclkc$ inhabitation by the alternating Turing machine shown in Figure \ref{fig:atm_bclkc}.

\begin{figure}[!ht]
\caption{Alternating Turing machine deciding inhabitation in $\bclkc$}
\label{fig:atm_bclkc}
\[
\begin{array}{ll} 
 & \mathit{Input}: \ \  \Delta, \tau, k \\ 
  & \mathit{loop:} \\
1 & \mbox{\sc choose} \ (C:\sigma) \in \Delta \\
2 & \sigma' := \bigcap \{S(\sigma) \mid\level(S) \leq k, \atoms(S) \subseteq \atoms(\Delta) \cup \atoms(\tau)\} \\
3 & \mbox{\sc choose} \ m \in \{0,\ldots, \text{maximal arity of paths in } \sigma' \} \\
4 & \mbox{\sc choose} \ P \subseteq \mathbb{P}_m(\sigma') \\
  & \\
5 & \mbox{\sc if} \ (\bigcap_{\pi \in P} \mathit{tgt}_m(\pi) \le \tau ) \ \mbox{\sc then} \\
6 & \ \ \mbox{\sc if } (m=0) \mbox{ {\sc then accept}}  \\
7 & \ \ \mbox{{\sc else} } \\
8 & \ \ \ \ \mbox{\sc forall}(i = 1 \ldots m) \\
9 &  \ \ \ \ \ \ \tau := \bigcap_{\pi \in P} \mathit{arg}_i(\pi) \\
10 & \ \ \ \ \ \ \mbox{\sc goto } \mathit{loop} \\
\end{array}
\]
\end{figure}

\begin{thm}
$\bclkc$ inhabitation in $(k+2)$-\textsc{ExpTime}.
\end{thm}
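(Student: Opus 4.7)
The plan is to show that the alternating Turing machine in Figure~\ref{fig:atm_bclkc} decides $\bclkc$ inhabitation correctly and runs in space $\exp_{k+1}(\mathrm{poly}(n))$, where $n$ is the input size. Correctness follows almost directly from the path lemma (Lemma~\ref{lem:blckc_path_lemma}): a judgment $\Delta \vdashk C\,E_1\cdots E_m:\tau$ holds iff there is a choice of combinator $C$, arity $m$, and path-set $P$ as quantified in the lemma, and the machine existentially chooses exactly these data in steps 1, 3, 4, then universally spawns subgoals for the $E_i$ in steps 8--10. So the only real work is to bound the resources and then translate alternating space into deterministic time.

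The key size estimates come from Lemma~\ref{lem:polynomial_size_bound}. Fix $n$ as the combined size of $\Delta$ and $\tau$. The set of atoms available to any substitution is bounded (and essentially fixed) by $\atoms(\Delta)\cup\atoms(\tau)$, so by Lemma~\ref{lem:polynomial_size_bound} every $S(\sigma)$ with $\level(S)\leq k$ has size at most $\exp_k(\mathrm{poly}(n))$, and there are at most $\exp_{k+1}(\mathrm{poly}(n))$ such types up to $=$. Hence the type $\sigma'$ constructed in step 2 has size at most $\exp_{k+1}(\mathrm{poly}(n))$, and the same bound holds for the chosen subset $P$, for each argument type $\arg_i(\pi)$, and therefore for the new goal $\bigcap_{\pi\in P}\arg_i(\pi)$ that is passed to the recursive call on line 9. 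Consequently the goal type manipulated by the machine never grows beyond size $\exp_{k+1}(\mathrm{poly}(n))$ across iterations, and each iteration performs only polynomial-time bookkeeping (computing $\sigma'$, enumerating $\PP_m(\sigma')$, and the subtyping check of line 5, which is polynomial in the sizes of its arguments by standard methods adapted to $\TTC$).

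Because only the current goal type needs to be maintained between iterations (the \textsc{forall} on line 8 spawns independent universal branches), the machine runs in alternating space $\exp_{k+1}(\mathrm{poly}(n))$. By the Chandra--Kozen--Stockmeyer theorem, $\textsc{ASpace}(s(n)) \subseteq \bigcup_{c>0}\textsc{DTime}(2^{c\cdot s(n)})$, so an $\exp_{k+1}(\mathrm{poly}(n))$-space ATM is simulated by a $2^{\exp_{k+1}(\mathrm{poly}(n))} = \exp_{k+2}(\mathrm{poly}(n))$-time deterministic machine, which gives the claimed $(k+2)$-\textsc{ExpTime} upper bound.

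The main obstacle I expect is not the asymptotic counting but rather the bookkeeping around step 2: one must argue that although $\sigma'$ is defined as an intersection over an astronomically large indexed family of substitutions, it suffices, up to $=$, to range over substitutions whose codomains consist of organized types with atoms drawn from the polynomially-sized set $\atoms(\Delta)\cup\atoms(\tau)$, so that the bound from Lemma~\ref{lem:polynomial_size_bound} actually applies. This is already implicit in the formulation of Lemma~\ref{lem:blckc_path_lemma}, but should be stated explicitly; once it is, the rest of the proof is essentially the same resource analysis as in the $\bclk$ case of~\cite{RehofEtAlCSL12}, with constructors contributing only a benign $+1$ to the level on each wrap (so that Lemma~\ref{lem:polynomial_size_bound} delivers the same tower shape as without constructors).
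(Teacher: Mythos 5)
Your proposal is correct and follows essentially the same route as the paper: correctness of the alternating Turing machine via the path lemma (Lemma~\ref{lem:blckc_path_lemma}), a bound of $\exp_{k+1}(\mathrm{poly}(n))$ on the size of the organized $\sigma'$ via Lemmas~\ref{lem:polynomial_size_bound} and~\ref{lem:ttc_path_intersection_size}, and the $\textsc{ASpace}(f)=\textsc{DTime}(2^{\calO(f)})$ identity to conclude. Your explicit remark that the intersection in step~2 need only range over substitutions into organized types over $\atoms(\Delta)\cup\atoms(\tau)$, up to $=$, is a useful elaboration of what the paper leaves implicit, but it does not change the argument.
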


\begin{proof}
The alternating Turing machine in Figure \ref{fig:atm_bclkc} directly implements Lemma \ref{lem:blckc_path_lemma} and is therefore sound and complete. To show that the machine operates in alternating $(k + 1)$-\textsc{ExpSpace}, we need to bound the size of (organized) $\sigma'$, which is $(n \cdot \exp_{k+1}(p(n)) \cdot \exp_{k}(p(n)))^2$ due to Lemmas \ref{lem:polynomial_size_bound} and \ref{lem:ttc_path_intersection_size}. By the identity \textsc{ASpace}$(f(n))$ $=$ \textsc{DTime}$(2^{\calO(f(n))})$ \cite{chandra1981alternation}, $\bclkc$ inhabitation is in $(k + 2)$-\textsc{ExpTime}.
\end{proof}


	\section{Mixin Composition Synthesis by Type Inhabitation}\label{sec:synthesis}
	
	In this section, we present an encoding of record types by $\TTC$ types that capture mixin semantics. We use the encoding to define a repository of typed combinators from classes and mixins that can be used to synthesize meaningful mixin compositions by means of $\bclkc$ inhabitation. In the following we fix a finite set of labels $\mathcal{L} \subseteq \Label$ that are used in the particular domain of interest for mixin composition synthesis.

\subsection{Records as Unary Covariant Distributing Constructors}
First, we need to encode record types as $\TTC$ types.
We define constructors $\rrecord{\cdot}$ and $l(\cdot)$ for $l \in \mathcal{L}$ to represent record types using the following partial translation function $\interpret{\cdot} \colon \TT \to \TTC$ as follows:
\begin{align*}
\interpret{\omega} & = \omega
& \interpret{a} & = a \\
\interpret{\sigma \to \tau} & = \interpret{\sigma} \to \interpret{\tau} 
&\interpret{\sigma \cap \tau} & = \interpret{\sigma} \cap \interpret{\tau} \\
\interpret{\record{l : \tau}} & = \rrecord{l(\interpret{\tau})}
& \interpret{\record{}} & = \rrecord{\omega}
\end{align*}
By definition, we have $\interpret{\record{l_i : \tau_i \mid i \in I}} = \interpret{\bigcap_{i \in I} \record{l_i : \tau_i}} = \bigcap_{i \in I} \rrecord{l_i(\interpret{\tau_i})}$ if $I \neq \emptyset$.
\begin{lem}
\label{lem:interpret-bijection}
For any $\sigma, \tau \in \TT$ such that $\interpret{\sigma}$ and $\interpret{\tau}$ are defined we have $\sigma = \tau$ iff $\interpret{\sigma} = \interpret{\tau}$.
\end{lem}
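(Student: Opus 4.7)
The plan is to establish the stronger biconditional $\sigma \leq \tau \Iff \interpret{\sigma} \leq \interpret{\tau}$ whenever $\interpret{\sigma}$ and $\interpret{\tau}$ are defined. Since $=$ on both sides is the symmetric closure of $\leq$, the lemma then follows by two applications of this equivalence.

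For the forward implication I would argue by induction on the derivation of $\sigma \leq \tau$. The axioms of Definition \ref{def:typeInclusionInt} lift verbatim because $\interpret{\cdot}$ commutes with $\omega$, $\to$, and $\cap$. The record-specific axioms not involving $+$—namely \ref{def:typeInclusionRec-new}.\ref{leq-ax-1}, \ref{def:typeInclusionRec-new}.\ref{leq-ax-2}, \ref{def:typeInclusionRec-new}.\ref{leq-ax-3}—each translate to a derivable $\TTC$-inequality using the covariance and distributivity of the unary constructors $\rrecord{\cdot}$ and $l(\cdot)$. For instance, \ref{def:typeInclusionRec-new}.\ref{leq-ax-2} is witnessed by
\[
\rrecord{l(\interpret{\sigma})} \cap \rrecord{l(\interpret{\tau})} = \rrecord{l(\interpret{\sigma}) \cap l(\interpret{\tau})} = \rrecord{l(\interpret{\sigma} \cap \interpret{\tau})},
\]
a double use of constructor distributivity. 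For derivation steps whose intermediate types contain $+$, I would first normalize them via Lemma \ref{lem:normal-recordTypes}.\ref{lem:normal-recordTypes-c} into $+$-free intersections of unary record types, on which $\interpret{\cdot}$ is defined and the above case analysis applies.

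For the backward implication I would use the path characterization in Lemma \ref{lem:ttc_path_set_subtyping}. By Lemma \ref{lem:normal-recordTypes}.\ref{lem:normal-recordTypes-c}, I may WLOG put $\sigma$ and $\tau$ into the normal form where every record subterm is an intersection of unary records, and then interpret them. Under $\interpret{\cdot}$, every path in $\interpret{\record{l_i:\sigma_i \mid i \in I}}$ has the shape $\rrecord{l_i(\pi)}$ with $\pi \in \PP(\interpret{\sigma_i})$, so the path structure cleanly reflects the label/value structure of the original record type. A path-matching argument from $\interpret{\sigma} \leq_\TTC \interpret{\tau}$ then yields, by induction on type structure, pointwise $\TT$-subtyping between the corresponding components; combining these with Lemma \ref{lem:normal-recordTypes}.\ref{lem:normal-recordTypes-a} (for record width/depth) and the usual BCD arrow-intersection inversion gives $\sigma \leq \tau$ in $\TT$.

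The main obstacle is the backward direction, where I must guarantee that the translation does not collapse distinct $\TT$-types. The key point is that $\rrecord{\cdot}$ and $l(\cdot)$ are fresh unary constructors disjoint from $\to$ and from all base atoms, and by Lemma \ref{lem:ttc_path_set_subtyping} a $\TTC$-path can only be dominated by a path with the same outermost constructor. Hence each occurrence of $\rrecord{\cdot}$ or $l(\cdot)$ in $\interpret{\tau}$ marks an uneliminable syntactic position, and the path-by-path correspondence with $\tau$ is forced, which is what makes the backward induction go through.
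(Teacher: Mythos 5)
Your proposal is correct and, for the forward direction, coincides with the paper's own (very terse) proof: a routine induction on the $\leq$-derivation using that $\interpret{\cdot}$ is homomorphic w.r.t.\ $\to$ and $\cap$ and that the constructors $\rrecord{\cdot}$ and $l(\cdot)$ are covariant and distribute over $\cap$ — your treatment of axiom \ref{def:typeInclusionRec-new}.\ref{leq-ax-2} is exactly the intended calculation. Where you genuinely add something is in the two places the paper waves away as ``routine.'' First, you notice that a derivation between $+$-free types may still pass through $+$-containing intermediates via transitivity, and you patch this by normalizing intermediates with Lemma \ref{lem:normal-recordTypes}.\ref{lem:normal-recordTypes-c}; this is the right fix, though to be fully rigorous one should note that the replacement derivation through the normal form need not be structurally smaller, so the induction is better phrased on a normal-form characterization of $\leq$ rather than on raw derivations. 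Second, for the reflection direction ($\interpret{\sigma}\leq\interpret{\tau}\Rightarrow\sigma\leq\tau$) you switch from derivation induction to the path characterization of Lemma \ref{lem:ttc_path_set_subtyping} combined with Lemma \ref{lem:normal-recordTypes}.\ref{lem:normal-recordTypes-a}; this is arguably the more defensible route, since a naive induction on the $\TTC$-derivation would again run into intermediate types outside the image of $\interpret{\cdot}$, and your observation that a $\TTC$-path can only be dominated by a path with the same outermost constructor is precisely what makes the translation reflect subtyping. In short: same skeleton as the paper, but your version makes explicit the two points where ``routine'' actually needs an argument.
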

\begin{proof}
Routine induction on $\leq$ derivation observing that $\sigma, \tau$ do not contain $+$, $\interpret{\cdot}$ is homomorphic wrt. $\to$ and $\cap$, and atomic records are covariant and distribute over $\cap$.
\end{proof}

The translation function $\interpret{\cdot}$ is not defined for types containing $+$. Since $+$ has non-monotonic properties, it cannot be immediately represented by a covariant type constructor. Simply applying Lemma \ref{lem:normal-recordTypes}.\ref{lem:normal-recordTypes-c} would require to consider all arguments a mixin could possibly be applied to. Such an unwieldy specification would not be adequate for synthesis.
There are two possibilities to deal with this problem. 
The first option is extending the type-system used for inhabitation. Here, the main difficulty is that existing inhabitation algorithms rely on the separation of intersections into paths \cite{RehofEtAlCSL12}. As demonstrated in the remark accompanying Lemma \ref{lem:typePlus}, it becomes unclear how to perform such a separation in the presence of the non-monotonic $+$ operation.
The second option, pursued in the rest of this section, is to use the expressiveness of 
schematism provided by $\bclkc$. Specifically, encoding particular $\TT$ types containing $+$ that capture mixin semantics as $\TTC$ types suited for $\bclkc$ inhabitation. Ultimately, we are able to achieve completeness (cf. Theorem \ref{thm:partial_completeness}) wrt. particular mixin typings (cf. Property $(\star)$) which restrict the general shape (\ref{eq:mixin-typing}). This restriction allows for a concise schematic mixin specification suited for synthesis by $\bclkc$ inhabitation.

Let $M \equiv  \lambda \, \argClass. \, \Y (\lambda \,\myClass \, \lambda \, \State. \, (\argClass \; \State) \Override R) \in \MixinCmb$ be such that for some $\sigma, \rho_1, \rho_2$, which do not contain $+$, with $\lbl(\rho_2)=\lbl(R)$ and all $\rho \in \TTR$ we have 
\begin{align*}
\vdash M : (\sigma \to \rho \cap \rho_1) \to (\sigma \to \rho+\rho_2) \tag{$\star$}
\end{align*}
We define the $\TTC$ type 
$$\tau_M \equiv \big((\interpret{\sigma} \to \interpret{\rho_1}) \to (\interpret{\sigma} \to \interpret{\rho_2})\big) \cap \bigcap\limits_{l \in \calL \setminus \lbl(R)} \big((\interpret{\sigma} \to \rrecord{l(\alpha_l)}) \to (\interpret{\sigma} \to \rrecord{l(\alpha_l)})\big)$$
Note that $\tau_M$ contains only labels $l\in\calL$ and type variables $\alpha_l$ where $l\in\calL$ because $\sigma,\rho_1,\rho_2$ do not contain type variables (cf. Definition~\ref{def:intertype:lambdaR}).

\begin{lem}[Translation Soundness]
\label{lem:enc_soundness}
Assume $(\star)$.\\
For any substitution $S$ and type $\tau \in \TT$ such that $\interpret{\tau} = S(\tau_M)$ we have $\vdash M : \tau$.
\end{lem}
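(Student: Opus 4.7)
The idea is to exhibit an explicit $\TT$-type $\hat\tau$ with $\interpret{\hat\tau} = S(\tau_M)$, conclude $\tau = \hat\tau$ via Lemma~\ref{lem:interpret-bijection}, and then type each conjunct of $\hat\tau$ separately using $(\star)$ and Lemma~\ref{lem:mixin-label-preservation}. Finally, an application of $(\cap)$ combines the conjunct derivations.

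\textbf{Decomposing $\tau$.} Applying $S$ to $\tau_M$ yields
\begin{align*}
S(\tau_M) \;=\;& \bigl(\interpret{\sigma\to\rho_1}\to\interpret{\sigma\to\rho_2}\bigr) \\
  & {}\cap \bigcap_{l\in\calL\setminus\lbl(R)}\bigl((\interpret{\sigma}\to\rrecord{l(S(\alpha_l))})\to(\interpret{\sigma}\to\rrecord{l(S(\alpha_l))})\bigr).
\end{align*}
Since, by assumption, $S(\tau_M)=\interpret{\tau}$ lies in the image of $\interpret{\cdot}$, the shape of that image (where any occurrence of $\rrecord{l(\cdot)}$ arises exactly by translating a record type) forces each $S(\alpha_l)$ to be $\interpret{\hat\alpha_l}$ for some $\hat\alpha_l \in \TT$. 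Setting
\begin{align*}
\hat\tau \;=\;& \bigl((\sigma\to\rho_1)\to(\sigma\to\rho_2)\bigr) \\
 & {}\cap \bigcap_{l\in\calL\setminus\lbl(R)}\bigl((\sigma\to\record{l:\hat\alpha_l})\to(\sigma\to\record{l:\hat\alpha_l})\bigr),
\end{align*}
the homomorphism of $\interpret{\cdot}$ with respect to $\to$ and $\cap$ yields $\interpret{\hat\tau}=S(\tau_M)=\interpret{\tau}$, and Lemma~\ref{lem:interpret-bijection} then gives $\hat\tau=\tau$. Hence it suffices to prove $\vdash M:\hat\tau$.

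\textbf{Typing each conjunct.} For the first conjunct, instantiate $(\star)$ with $\rho := \record{}$: by Def.~\ref{def:typeInclusionRec-new}.\ref{leq-ax-4} we have $\record{}+\rho_2=\rho_2$, and by Lemma~\ref{lem:normal-recordTypes}.\ref{lem:normal-recordTypes-a} every record type satisfies $\rho_1\leq\record{}$, so $\record{}\cap\rho_1=\rho_1$. Thus $\vdash M:(\sigma\to\rho_1)\to(\sigma\to\rho_2)$. For each remaining $l\in\calL\setminus\lbl(R)$, Lemma~\ref{lem:mixin-label-preservation} applied with the given $\sigma$ and $\rho=\record{l:\hat\alpha_l}$ (using $l\notin\lbl(R)$) directly yields $\vdash M:(\sigma\to\record{l:\hat\alpha_l})\to(\sigma\to\record{l:\hat\alpha_l})$. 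Collecting these derivations via rule $(\cap)$ gives $\vdash M:\hat\tau=\tau$, as required.

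\textbf{Main obstacle.} The only delicate step is the decomposition claim: that $\interpret{\tau}=S(\tau_M)$ actually forces each $S(\alpha_l)$ to be the interpretation of some $\hat\alpha_l\in\TT$, rather than an arbitrary $\TTC$-type. This is a structural/injectivity argument about the range of $\interpret{\cdot}$ (underwritten by Lemma~\ref{lem:interpret-bijection}), which requires noting that $\TT$ has no type variables and that the constructors $\rrecord{\cdot}$ and $l(\cdot)$ occur in the image only in the stratified pattern $\rrecord{l(\interpret{\cdot})}$ or $\rrecord{\omega}$. Once this is established, the rest of the argument reduces to two direct applications of results already proven.
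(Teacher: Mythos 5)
Your proposal is correct and follows essentially the same route as the paper's proof: observe that $S(\tau_M)$ must be the translation of a $\TT$-type of the displayed shape, instantiate $(\star)$ with $\rho=\record{}$ for the first conjunct, apply Lemma~\ref{lem:mixin-label-preservation} for the label-preservation conjuncts, and combine with $(\cap)$. The only difference is that you spell out the decomposition step (via Lemma~\ref{lem:interpret-bijection} and the shape of the image of $\interpret{\cdot}$) in more detail than the paper, which simply asserts it.
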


\begin{proof}
We necessarily have \\
$S(\tau_M) = \interpret{\big((\sigma \to \rho_1) \to (\sigma \to \rho_2)\big) \cap \bigcap\limits_{l \in \calL \setminus \lbl(R)} \big((\sigma \to \record{l : \tau_l}) \to (\sigma \to \record{l : \tau_l})\big)}$ for some $\tau_l$ for $l \in \calL \setminus \lbl(R)$. Due to $(\star)$ with $\rho = \record{}$ we have $\vdash M : (\sigma \to \rho_1) \to (\sigma \to \rho_2)$, and by Lemma \ref{lem:mixin-label-preservation} we have $\vdash M : (\sigma \to \record{l : \tau_l}) \to (\sigma \to \record{l : \tau_l})$ for any $l \in \calL \setminus \lbl(R)$, thus showing the claim.
\end{proof}

Translation Soundness forces instantiations of $\tau_M$ to remain within typings of $M$ in $\lambdaR$. \emph{Negative information}, i.e.~information about labels absent in $R$, is encoded by explicitly capturing all positive information, excluding labels in $\lbl(R)$, in instances of $$\bigcap\limits_{l \in \calL \setminus \lbl(R)} \big((\interpret{\sigma} \to \rrecord{l(\alpha_l)}) \to (\interpret{\sigma} \to \rrecord{l(\alpha_l)})\big)$$
Schematism, i.e.~the possibility to instantiate type variables $\alpha_l$, is essential to capture mixin behavior.
This encoding is possible under the assumption of a finite set of labels $\calL$, which is valid because synthesis does not introduce new labels. The encoding overhead is polynomially bounded by product of the number of mixin combinators and the number of labels.

\begin{lem}[Translation Completeness]
\label{lem:enc_completeness}
Assume $(\star)$.\\
For any $\rho \in \TTR$ with $\lbl(\rho) \subseteq \calL$ there exists a substitution $S$ and a type $\tau \in \TT$ such that $S(\tau_M) \leq \interpret{\tau}$ and $\tau = (\sigma \to \rho \cap \rho_1) \to (\sigma \to \rho+\rho_2)$.
\end{lem}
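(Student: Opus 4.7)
The plan is to produce the substitution $S$ and the representative $\tau$ explicitly, then verify $S(\tau_M) \leq \interpret{\tau}$ by splitting both sides into conjuncts of arrows via distributivity and matching them pairwise through contravariance.

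First, I would use Lemma~\ref{lem:normal-recordTypes}.\ref{lem:normal-recordTypes-c} to write $\rho = \record{l_i : \sigma_i \mid i \in I}$ with pairwise distinct labels, and put $\rho_1, \rho_2$ into the analogous normal form. Since $\lbl(\rho_2) = \lbl(R)$, Lemma~\ref{lem:normal-recordTypes}.\ref{lem:normal-recordTypes-b} rewrites $\rho + \rho_2$ as the $+$-free type $\record{l_i : \sigma_i \mid i \in I \setminus \lbl(R)} \cap \rho_2$. I then take $\tau$ to be $(\sigma \to \rho \cap \rho_1) \to (\sigma \to \record{l_i : \sigma_i \mid i \in I \setminus \lbl(R)} \cap \rho_2)$; this representative is $+$-free (because $\sigma, \rho_1, \rho_2$ are), so $\interpret{\tau}$ is defined and $\tau$ equals $(\sigma \to \rho \cap \rho_1) \to (\sigma \to \rho+\rho_2)$ as required. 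Next I would define $S$ by setting, for each $l \in \calL \setminus \lbl(R)$, $S(\alpha_l) = \interpret{\sigma_i}$ if $l = l_i$ for some $i \in I$ (the hypothesis $\lbl(\rho) \subseteq \calL$ makes this well-defined; note that when $l_i \in \lbl(R)$ no variable $\alpha_{l_i}$ occurs in $\tau_M$ anyway), and $S(\alpha_l) = \omega$ otherwise.

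The verification proceeds by expanding $\interpret{\tau}$ using distributivity of $\to$ and $\rrecord{\cdot}$ over $\cap$. The codomain decomposes as $(\interpret{\sigma} \to \interpret{\rho_2}) \cap \bigcap_{i \in I \setminus \lbl(R)} (\interpret{\sigma} \to \rrecord{l_i(\interpret{\sigma_i})})$, and pushing the intersection past the outer arrow (using $(A \to B_1) \cap (A \to B_2) = A \to (B_1 \cap B_2)$) presents $\interpret{\tau}$ as the intersection of $(\interpret{\sigma} \to \interpret{\rho \cap \rho_1}) \to (\interpret{\sigma} \to \interpret{\rho_2})$ together with, for each $i \in I \setminus \lbl(R)$, an arrow $(\interpret{\sigma} \to \interpret{\rho \cap \rho_1}) \to (\interpret{\sigma} \to \rrecord{l_i(\interpret{\sigma_i})})$. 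Each such conjunct is dominated by a conjunct of $S(\tau_M)$: the first is handled by the principal component $(\interpret{\sigma} \to \interpret{\rho_1}) \to (\interpret{\sigma} \to \interpret{\rho_2})$ of $\tau_M$ using $\interpret{\rho \cap \rho_1} \leq \interpret{\rho_1}$ and contravariance of $\to$; each of the latter is handled by the schematic component $(\interpret{\sigma} \to \rrecord{l_i(\interpret{\sigma_i})}) \to (\interpret{\sigma} \to \rrecord{l_i(\interpret{\sigma_i})})$ produced by our choice $S(\alpha_{l_i}) = \interpret{\sigma_i}$, where contravariance reduces the inequality to $\interpret{\rho \cap \rho_1} \leq \rrecord{l_i(\interpret{\sigma_i})}$, which holds by Lemma~\ref{lem:normal-recordTypes}.\ref{lem:normal-recordTypes-a} since $l_i : \sigma_i$ appears in $\rho$.

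The main obstacle, and the reason the hypothesis $\lbl(\rho) \subseteq \calL$ is indispensable, is ensuring that every surviving label of $\rho$ has a matching schematic slot $\alpha_{l}$ in $\tau_M$: labels of $\rho$ inside $\lbl(R)$ are overridden by $\rho_2$ in the merge and are absorbed by the main conjunct of $\tau_M$, while labels outside $\lbl(R)$ lie in $\calL \setminus \lbl(R)$ by hypothesis and therefore each have a slot to instantiate. Variables $\alpha_l$ associated to labels not occurring in $\rho$ are vacuously handled by sending them to $\omega$, so their conjuncts in $S(\tau_M)$ contribute no constraints but also create no obstacles.
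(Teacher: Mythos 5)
Your proposal is correct and follows essentially the same route as the paper: the same substitution $S$ (sending $\alpha_l$ to the translated field type when $l$ occurs in $\rho$ outside $\lbl(R)$, and to $\omega$ otherwise), the same normalization of $\rho$ via Lemma~\ref{lem:normal-recordTypes}.\ref{lem:normal-recordTypes-c}, and the same elimination of $+$ by identifying $\rho+\rho_2$ with an intersection. The only difference is presentational: the paper collapses the conjuncts of $S(\tau_M)$ into one arrow with intersected domain and codomain and then applies contravariance once, whereas you decompose $\interpret{\tau}$ into conjuncts and match each against a conjunct of $S(\tau_M)$ — both verifications are sound and use the same subtyping facts.
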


\begin{ProofNoQed}
Let $\rho = \bigcap\limits_{l \in L} \record{l : \tau_l}$ for some $L \subseteq \calL$. By Lemma \ref{lem:normal-recordTypes}.\ref{lem:normal-recordTypes-c} we may assume for each $l \in L$ that $\tau_l$ does not contain $+$. We define 
$$S(\alpha_l) = \begin{cases}
\interpret{\tau_l} & \text{if } l \in L \setminus \lbl(R)\\
\omega & \text{else}
\end{cases}$$
\noindent and successively applying Lemma \ref{lem:interpret-bijection} obtain
\begin{align*}
S(\tau_M) & \leq \interpret{\big((\sigma \to \rho_1) \to (\sigma \to \rho_2)\big) \cap \bigcap\limits_{l \in L \setminus \lbl(R)} \big((\sigma \to \record{l : \tau_l}) \to (\sigma \to \record{l : \tau_l})\big)}\\
& \leq \interpret{\big((\sigma \to \bigcap\limits_{l \in L \setminus \lbl(R)} \record{l : \tau_l} \cap \rho_1) \to (\sigma \to \bigcap\limits_{l \in L \setminus \lbl(R)} \record{l : \tau_l} \cap \rho_2)\big)} \\
& \leq \interpret{\big((\sigma \to \rho \cap \rho_1) \to (\sigma \to \underbrace{\bigcap\limits_{l \in L \setminus \lbl(R)} \record{l : \tau_l} \cap \rho_2}_{= \rho + \rho_2 \text{ by Lem. } \ref{lem:label}.\ref{lem:label-b}})\big)}
\tag*{\qEd}
\end{align*}
\end{ProofNoQed}

Translation Completeness ensures that each type $\lambdaR$ of $M$ according to $(\star)$ is captured by some instance of $\tau_M$.

Using the above translation properties, we construct a repository of typed combinators representing classes and mixins.

\subsection{Mixin Composition}
In this section we denote type assignment in $\lambdaR$ by $\vdashr$ and fix the following ingredients:
\begin{itemize}
\item A finite set of classes $\mathcal{C}$.
\item For each $C \in \mathcal{C}$ types $\sigma_C \in \TT, \rho_C \in \TTR$ such that $\interpret{\sigma_C \to \rho_C}$ is defined and\\$\vdashr C : \sigma_C \to \rho_C$.
\item A finite set of mixins $\mathcal{M}$.
\item For each $M \in \mathcal{M}$ types $\sigma_M \in \TT$ and $\rho_M^1, \rho_M^2 \in \TTR$ such that $\interpret{\sigma_M}, \interpret{\rho_M^1}, \interpret{\rho_M^2}$ are defined and for all types $\rho \in \TTR$ we have 
$\vdashr M : (\sigma_M \to \rho \cap \rho_M^1) \to (\sigma_M \to \rho + \rho_M^2)$.
\item For each $M \in \mathcal{M}$ the non-empty set of labels $L_M = \lbl(\rho_M^2) \subseteq \mathcal{L}$ defined by $M$.
\end{itemize}

We translate given classes and mixins to the following repository $\DeltaLCM$ of combinators
\begin{align*}
\DeltaLCM = & \{C : \interpret{\sigma_C \to \rho_C} \mid C \in \mathcal{C}\} \\
& \cup \{M : \big((\interpret{\sigma_M} \to \interpret{\rho_M^1}) \to (\interpret{\sigma_M} \to \interpret{\rho_M^2})\big) \\
& \quad \cap \bigcap\limits_{l \in \calL \setminus L_M} \big((\interpret{\sigma_M} \to \rrecord{l(\alpha_l)}) \to (\interpret{\sigma_M} \to \rrecord{l(\alpha_l)})\big) \mid M \in \mathcal{M}\}
\end{align*}

Note that we use identifiers $C$ for classes and $M$ for mixins just as symbolic names in the repository, while they are also typable closed terms in $\lambdaR$.

To simplify notation, we introduce the infix metaoperator $\pipe$ such that $x \pipe f = f\ x$. It is left associative and has the lowest precedence. Accordingly, $x \pipe f \pipe g = g\ (f\ x)$.

Although types in $\DeltaLCM$ do not contain record-merge,  types of mixin compositions in $\bclkc$ are sound, which is shown in the following Theorem \ref{thm:soundness}. 

\begin{thm}[Soundness]
\label{thm:soundness}
Let $M_1, \ldots, M_n \in \mathcal{M}$ be mixins, let $C \in \mathcal{C}$ be a class, let $\sigma \in \TT, \rho \in \TTR$ be types such that $\interpret{\sigma \to \rho}$ is defined, and let $k \in \NN$.\\
If $\DeltaLCM \vdashk C \pipe M_1 \pipe \ldots \pipe M_n : \interpret{\sigma \to \rho}$, then $\vdashr C \pipe M_1 \pipe M_2 \pipe \ldots \pipe M_n : \sigma \to \rho$.
\end{thm}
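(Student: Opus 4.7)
The plan is to argue by induction on $n$, peeling off one mixin application at a time with the Path Lemma (Lemma~\ref{lem:blckc_path_lemma}) and lifting each step via the assumed $\lambdaR$ mixin typing. As a preliminary I would extend Lemma~\ref{lem:interpret-bijection} from $=$ to $\leq$, proving that for $\sigma,\tau \in \TT$ with both $\interpret{\sigma}$ and $\interpret{\tau}$ defined one has $\sigma \leq \tau$ iff $\interpret{\sigma} \leq \interpret{\tau}$; the same routine induction on the subtyping axioms works, since $\interpret{\cdot}$ is homomorphic on $\to$ and $\cap$ and the constructors $\rrecord{\cdot}$ and $l(\cdot)$ are covariant and distribute over $\cap$.

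In the base case $n = 0$, since $\sigma_C \to \rho_C$ contains no type variables, all $\bclkc$ substitutions fix $\interpret{\sigma_C \to \rho_C}$, so the derivation reduces by $(\cap)$ and $(\leq)$ to $\interpret{\sigma_C \to \rho_C} \leq \interpret{\sigma \to \rho}$; the preliminary transfers this to $\lambdaR$ and, combined with the hypothesis $\vdashr C : \sigma_C \to \rho_C$ and subsumption, yields $\vdashr C : \sigma \to \rho$.

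For the inductive step, write the term as $M_{n+1}\,X$ with $X = C \pipe M_1 \pipe \ldots \pipe M_n$, and apply the Path Lemma with $m=1$ to obtain a set $P$ of arity-$1$ paths from substitution instances of $\tau_{M_{n+1}}$ such that $\DeltaLCM \vdashk X : \bigcap_{\pi \in P}\arg_1(\pi)$ and $\bigcap_{\pi \in P}\tgt_1(\pi) \leq \interpret{\sigma \to \rho}$. Each path originates either from the principal conjunct of $\tau_{M_{n+1}}$ (contributing content about $\rho^1_{M_{n+1}}$ and $\rho^2_{M_{n+1}}$) or from a label-preservation conjunct for some $l \in \calL \setminus L_{M_{n+1}}$ under a substitution $\alpha_l \mapsto \tau_l$. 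Collect the used labels into $L' \subseteq \calL \setminus L_{M_{n+1}}$ and choose, for each $l \in L'$, a $\lambdaR$ type $\tau_l^*$ with $\interpret{\tau_l^*} \leq \tau_l$: when $\tau_l$ lies in the image of $\interpret{\cdot}$ take its preimage, and use Lemma~\ref{lem:ttc_path_set_subtyping} to argue that label-preservation paths whose $\tau_l$ falls outside the image cannot contribute usefully to the Path-Lemma inequality because $\interpret{\sigma \to \rho}$ admits only image-shaped paths. Set $\rho' = \bigcap_{l \in L'}\record{l : \tau_l^*}$.

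Using distributivity of $\to$ over $\cap$ and covariance of $\rrecord{l(\cdot)}$, one rewrites $\bigcap_{\pi \in P}\arg_1(\pi) \leq \interpret{\sigma_{M_{n+1}} \to \rho' \cap \rho^1_{M_{n+1}}}$, so by $(\leq)$ and the induction hypothesis $\vdashr X : \sigma_{M_{n+1}} \to \rho' \cap \rho^1_{M_{n+1}}$. Instantiating the mixin-typing assumption with $\rho = \rho'$ yields $\vdashr M_{n+1} : (\sigma_{M_{n+1}} \to \rho' \cap \rho^1_{M_{n+1}}) \to (\sigma_{M_{n+1}} \to \rho' + \rho^2_{M_{n+1}})$, whence $(\ArrE)$ gives $\vdashr M_{n+1}\,X : \sigma_{M_{n+1}} \to \rho' + \rho^2_{M_{n+1}}$. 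Since $\lbl(\rho') \cap \lbl(\rho^2_{M_{n+1}}) = \emptyset$, Lemma~\ref{lem:label}.\ref{lem:label-b} simplifies $\rho' + \rho^2_{M_{n+1}}$ to $\rho' \cap \rho^2_{M_{n+1}}$, and pushing the Path-Lemma inequality $\bigcap_{\pi \in P}\tgt_1(\pi) \leq \interpret{\sigma \to \rho}$ back to $\lambdaR$ via the preliminary gives $\sigma_{M_{n+1}} \to \rho' \cap \rho^2_{M_{n+1}} \leq \sigma \to \rho$; a final $(\leq)$ closes the induction. The main obstacle is the treatment of $\bclkc$ substitutions for the label variables $\alpha_l$ that may lie outside the image of $\interpret{\cdot}$: these cannot be translated back to $\lambdaR$ directly, but the combination of image-shape of the goal, covariance of $\rrecord{l(\cdot)}$, and parametricity of $(\star)$ in the "extra" record $\rho$ is what makes the synthesized $\rho'$ admissible in $\lambdaR$.
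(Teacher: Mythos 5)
Your inductive skeleton is the same as the paper's, but the paper's proof is far more economical: it just invokes Lemma~\ref{lem:enc_soundness} (Translation Soundness) at each step together with $(\ArrE)$, packaging exactly the ``lift one mixin application back to $\lambdaR$'' content that you re-derive by hand from the Path Lemma (Lemma~\ref{lem:blckc_path_lemma}). Re-deriving it is legitimate, and you correctly identify the delicate point --- substitutions $S$ with $S(\alpha_l)$ outside the image of $\interpret{\cdot}$ --- but your resolution of it does not work. First, the claim that such label-preservation paths ``cannot contribute usefully'' is false: a non-image instance can strictly dominate an image-shaped goal path, e.g.\ $S(\alpha_l) = (\beta \to a) \to b \leq (\omega \to a) \to b$ because $\beta \leq \omega$; and since distinct labels are distinct unary constructors, the label-$l$ preservation path may be the \emph{only} element of $P$ whose target can dominate a goal path of the form $\interpret{\sigma} \to \rrecord{l(\cdots)}$ for $l \notin L_{M_{n+1}}$, so discarding it destroys the inequality $\bigcap_{\pi \in P}\tgt_1(\pi) \leq \interpret{\sigma \to \rho}$ that you need at the end. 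Second, your requirement $\interpret{\tau_l^*} \leq \tau_l$ points the wrong way for the step $\bigcap_{\pi\in P}\arg_1(\pi) \leq \interpret{\sigma_{M_{n+1}}\to\rho'\cap\rho^1_{M_{n+1}}}$: after distributing $\to$ over $\cap$ this needs $\rrecord{l(S(\alpha_l))} \leq \rrecord{l(\interpret{\tau_l^*})}$, i.e.\ $S(\alpha_l)\leq\interpret{\tau_l^*}$; only the exact-preimage case survives, because there the two types coincide.

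The repair is to read $\tau_l^*$ off the \emph{goal} rather than off $S$: for each used label $l$ let $\interpret{\tau_l^*}$ be the intersection of those paths $\pi_{\mathrm{goal}}$ occurring under $\rrecord{l(\cdot)}$ in $\interpret{\rho}$ that are dominated through the label-$l$ preservation conjunct (and $\omega$ if there are none). Every path of an image type is again an image type, so this is a legitimate $+$-free $\TT$ type; by Lemma~\ref{lem:ttc_path_set_subtyping} each used pair satisfies $S(\alpha_l) \leq \pi \leq \pi_{\mathrm{goal}}$, hence $S(\alpha_l)\leq\interpret{\tau_l^*}$ (the direction needed to apply the induction hypothesis to $X$) and $\interpret{\tau_l^*}\leq\pi_{\mathrm{goal}}$ (the direction needed for the final subtyping), after which the rest of your argument, including the $\leq$-extension of Lemma~\ref{lem:interpret-bijection}, goes through. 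It is worth noting that the paper's own one-line appeal to Lemma~\ref{lem:enc_soundness} quietly presupposes that the relevant instances $S(\tau_M)$ are images of $\TT$ types, so you are confronting a genuine subtlety --- it just has to be resolved by sandwiching from the goal side rather than by discarding paths.
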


\begin{proof}
Induction on $n$ using Lemma \ref{lem:enc_soundness} and $(\ArrE)$.
\end{proof}

Complementarily, types of mixin compositions in $\bclkc$ are complete wrt. $\lambdaR$ type assumptions for classes and mixins listed above. Specifically, we show in the following Theorem \ref{thm:partial_completeness} that, assuming arguably natural typings of classes and mixins, we can find corresponding $\bclkc$-counterparts of $\lambdaR$ type derivations.

\begin{thm}[Partial Completeness]
\label{thm:partial_completeness}
Let $\Gamma \subseteq \{x_C : \sigma_C \to \rho_C \mid C \in \mathcal{C}\} \cup \{x_M^\rho : (\sigma_M \to \rho \cap \rho_M^1) \to (\sigma_M \to \rho + \rho_M^2) \mid M \in \mathcal{M}, \rho \in \TTR, \interpret{\rho} \text{ is defined}\}$ be a finite context and let $\sigma \in \TT, \rho \in \TTR$ be types such that $\interpret{\sigma \to \rho}$ is defined. Let $k=  \max(\{\level(\interpret{\rho_C}) \mid C \in \calC\} \cup \{\level(\interpret{\rho_M^2}) \mid M \in \calM\} \cup \{\level(\interpret{\rho})\})$. Let $M_1, \ldots, M_n \in \mathcal{M}$ be mixins.\\
If $\Gamma \vdashr x_C \pipe x_{M_1}^{\rho_1} \pipe \ldots \pipe x_{M_n}^{\rho_n} : \sigma \to \rho$, then $\DeltaLCM \vdashk C \pipe M_1 \pipe \ldots \pipe M_n : \interpret{\sigma \to \rho}$.
\end{thm}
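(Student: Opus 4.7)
\medskip
\noindent
\textbf{Proof plan.} I would proceed by induction on $n$. The base case $n=0$ reduces to showing $\DeltaLCM \vdashk C : \interpret{\sigma \to \rho}$ whenever $\Gamma \vdashr x_C : \sigma \to \rho$. By Lemma~\ref{lem:generation}.\ref{gen:var} and the form of $\Gamma$ this forces $\sigma_C \to \rho_C \leq \sigma \to \rho$. A preliminary step I would establish first (from Lemma~\ref{lem:interpret-bijection} together with the homomorphism properties of $\interpret{\cdot}$ wrt.~$\cap$ and $\to$) is that $\interpret{\cdot}$ is monotone wrt.~$\leq$ on its domain: $\sigma \leq \tau$ iff $\sigma = \sigma \cap \tau$ iff $\interpret{\sigma} = \interpret{\sigma} \cap \interpret{\tau}$ iff $\interpret{\sigma} \leq \interpret{\tau}$. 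Combined with $C : \interpret{\sigma_C \to \rho_C} \in \DeltaLCM$, the (Var) rule with the identity substitution (of level $0 \leq k$) followed by $(\leq)$ delivers the base case.

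\medskip
\noindent
For the inductive step with $n \geq 1$, view the term as $x_{M_n}^{\rho_n}\,E'$ where $E' \equiv x_C \pipe x_{M_1}^{\rho_1}\pipe \cdots \pipe x_{M_{n-1}}^{\rho_{n-1}}$. Applying Lemma~\ref{lem:generation}.\ref{gen:app} yields a type $\tau$ with $\Gamma \vdashr x_{M_n}^{\rho_n} : \tau \to \sigma \to \rho$ and $\Gamma \vdashr E' : \tau$. Using Lemma~\ref{lem:generation}.\ref{gen:var} on $x_{M_n}^{\rho_n}$ and its unique typing in $\Gamma$, we get $(\sigma_{M_n} \to \rho_n \cap \rho_{M_n}^1) \to (\sigma_{M_n} \to \rho_n + \rho_{M_n}^2) \leq \tau \to \sigma \to \rho$, and Lemma~\ref{lem:arrow-prop} splits this into $\tau \leq \sigma_{M_n} \to \rho_n \cap \rho_{M_n}^1$ and $\sigma_{M_n} \to \rho_n + \rho_{M_n}^2 \leq \sigma \to \rho$. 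By subsumption we may assume $\Gamma \vdashr E' : \sigma_{M_n} \to \rho_n \cap \rho_{M_n}^1$, and the induction hypothesis yields $\DeltaLCM \vdashk C \pipe M_1 \pipe \cdots \pipe M_{n-1} : \interpret{\sigma_{M_n} \to \rho_n \cap \rho_{M_n}^1}$. Now apply Lemma~\ref{lem:enc_completeness} to $M_n$ and $\rho_n$: it produces a substitution $S$ and a $+$-free type $\tau'$ equal to $(\sigma_{M_n} \to \rho_n \cap \rho_{M_n}^1) \to (\sigma_{M_n} \to \rho_n + \rho_{M_n}^2)$, with $S(\tau_{M_n}) \leq \interpret{\tau'}$. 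Note that $+$ is eliminated from $\rho_n + \rho_{M_n}^2$ via Lemma~\ref{lem:normal-recordTypes}.\ref{lem:normal-recordTypes-b}. Using (Var), $(\leq)$ and $(\ArrE)$ gives $\DeltaLCM \vdashk C \pipe M_1 \pipe \cdots \pipe M_n : \interpret{\sigma_{M_n} \to \rho_n + \rho_{M_n}^2}$, and a final $(\leq)$ using monotonicity of $\interpret{\cdot}$ and $\sigma_{M_n} \to \rho_n + \rho_{M_n}^2 \leq \sigma \to \rho$ finishes the step.

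\medskip
\noindent
\textbf{Main obstacle.} The delicate point is the level bound $\level(S) \leq k$ required by the (Var) rule when instantiating the schematic mixin type. The substitution produced by Lemma~\ref{lem:enc_completeness} sends each free variable $\alpha_l$ to $\interpret{\tau_l}$, where $\tau_l$ is the value at label $l$ in the intermediate record $\rho_n$. I would argue that each such $\tau_l$ actually originates either from $\rho_C$ or from some $\rho_{M_j}^2$ with $j<n$: indeed, iterating Lemma~\ref{lem:normal-recordTypes}.\ref{lem:normal-recordTypes-b} to read off the labels of the successive intermediates, any field in $\rho_n \cap \rho_{M_n}^1$ is either freshly introduced by a previous $\rho_{M_j}^2$ (whose value is a subtype of the corresponding component of $\rho_{M_j}^2$) or inherited from $\rho_C$ (with value a subtype of a component of $\rho_C$). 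Hence $\level(\interpret{\tau_l})$ is dominated by $\max(\{\level(\interpret{\rho_C})\} \cup \{\level(\interpret{\rho_{M_j}^2}) \mid j < n\})$, which is at most the $k$ chosen in the statement; the remaining $\alpha_l$ mapped to $\omega$ contribute level $0$. This matches exactly the choice of $k$ in the hypothesis and closes the induction.
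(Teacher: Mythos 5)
Your overall route coincides with the paper's: its proof is exactly an induction on $n$ that instantiates each mixin combinator with the substitution $\alpha_l \mapsto \interpret{\tau_l}$ for $l \in L \setminus L_M$ and $\alpha_l \mapsto \omega$ otherwise, and closes each step with Lemma~\ref{lem:enc_completeness}, $(\ArrE)$ and $(\leq)$. Your base case, your use of the generation and arrow lemmas to extract $\tau \leq \sigma_{M_n} \to \rho_n \cap \rho_{M_n}^1$ and $\sigma_{M_n} \to \rho_n + \rho_{M_n}^2 \leq \sigma \to \rho$, and your observation that $\interpret{\cdot}$ is monotone w.r.t.\ $\leq$ on its domain are a faithful (and welcome) elaboration of that sketch.

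The one place where your argument does not go through as written is the level bound in your final paragraph. From the typing hypothesis you only learn that the component $\tau_l$ of $\rho_n$ at a label $l$ is a \emph{supertype} (not a subtype) of the corresponding component of the accumulated type $\rho_C + \rho_{M_1}^2 + \cdots + \rho_{M_{n-1}}^2$, and in either direction subtyping gives no control over $\level(\interpret{\tau_l})$: $\level$ is a purely syntactic measure, so $\rho_n$ may, for instance, annotate $l$ with $\omega \to \cdots \to \omega$, which equals $\omega$ but has arbitrary level, without affecting derivability of $\Gamma \vdashr x_C \pipe \cdots \pipe x_{M_n}^{\rho_n} : \sigma \to \rho$. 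The same problem recurs when you invoke the induction hypothesis at the target $\sigma_{M_n} \to \rho_n \cap \rho_{M_n}^1$, whose associated bound involves $\level(\interpret{\rho_n})$ and $\level(\interpret{\rho_{M_n}^1})$, neither of which occurs in the $k$ of the statement. The repair is to discard the annotations $\rho_j$ supplied by $\Gamma$ and run the induction with the canonical intermediates $\rho^{(0)} = \rho_C$ and $\rho^{(j)} = \rho^{(j-1)} + \rho_{M_j}^2$: by Lemma~\ref{lem:normal-recordTypes}.\ref{lem:normal-recordTypes-b} every component of $\rho^{(j)}$ is literally a component of $\rho_C$ or of some $\rho_{M_i}^2$, so the induced substitutions have level at most $k$ by construction, while $\rho^{(j)} \leq \rho_{j+1} \cap \rho_{M_{j+1}}^1$ (hence preservation of the $\lambdaR$-typing and of the subtyping needed for the final $(\leq)$ step) follows from the original derivation together with monotonicity of $+$ in its first argument (Definition~\ref{def:typeInclusionRec-new}.\ref{leq-ax-11}). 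The paper's own one-line proof is silent on this point, so you have correctly isolated the only delicate step; it just needs this adjustment to be sound.
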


\begin{proof} Induction on $n$ choosing for each $x_{M}^{\rho}$ where $\rho = \bigcap\limits_{l \in L} \record{l : \tau_{l}}$ the substitution $\alpha_l \mapsto \interpret{\tau_{l}}$ for $l \in L \setminus L_M$ and $\alpha_l \mapsto \omega$ otherwise to type $M \in \DeltaLCM$ and using $(\ArrE)$, $(\leq)$, Lemma \ref{lem:enc_completeness}.
\end{proof}

\noindent Note that Theorem \ref{thm:partial_completeness} defines a bound $k$ based on the input. In the following, we use this bound ensuring that the provided examples are in fact computable. We extend the running example by the following mixin $\SuccTwice$.
\begin{align*}
&\SuccTwice = \lambda \argClass. \Y (\lambda \myClass. \lambda \state. \\
&\qquad \letin{\super = \argClass \ \state} \\
&\qquad \letin{\self = \myClass \ \state} \\
&\qquad \qquad \super \oplus \record{\succTwice = \letin{\super' = \argClass(\super.\succ)} \super'.succ})
\end{align*}
$\SuccTwice$ adds the method $\succTwice$ that is the twofold application of $\succ$. Note that the object $\super$ is updated functionally in $\succTwice$. Using the above translation, we obtain
\begin{align*}
&\Delta_{\{\get, \set, \succ, \succTwice, \compare\}}^{\{\Nat\},\{\SuccTwice, \Comparable\}} = \{\\
&\quad \Nat  :  \Int \to \rrecord{\get(\Int) \cap \set(\Int \to \Int) \cap \succ(\Int)}, \\
&\quad \Comparable  :  \big((\Int \to \rrecord{\get(\Int)}) \to (\Int \to \rrecord{\compare(\rrecord{\get(\Int)} \to \Bool)})\big)\\
&\qquad \cap \big((\Int \to \rrecord{\get(\alpha_\get)}) \to (\Int \to \rrecord{\get(\alpha_\get)})\big)\\
&\qquad \cap \big((\Int \to \rrecord{\set(\alpha_\set)}) \to (\Int \to \rrecord{\set(\alpha_\set)})\big)\\
&\qquad \cap \big((\Int \to \rrecord{\succ(\alpha_\succ)}) \to (\Int \to \rrecord{\succ(\alpha_\succ)})\big)\\
&\qquad \cap \big((\Int \to \rrecord{\succTwice(\alpha_\succTwice)}) \to (\Int \to \rrecord{\succTwice(\alpha_\succTwice)})\big),\\
&\quad \SuccTwice  :  \big((\Int \to \rrecord{\succ(\Int)}) \to (\Int \to \rrecord{\succTwice(\Int)})\big)\\
&\qquad \cap \big((\Int \to \rrecord{\get(\alpha_\get)}) \to (\Int \to \rrecord{\get(\alpha_\get)})\big)\\
&\qquad \cap \big((\Int \to \rrecord{\set(\alpha_\set)}) \to (\Int \to \rrecord{\set(\alpha_\set)})\big)\\
&\qquad \cap \big((\Int \to \rrecord{\succ(\alpha_\succ)}) \to (\Int \to \rrecord{\succ(\alpha_\succ)})\big)\\
&\qquad \cap \big((\Int \to \rrecord{\compare(\alpha_\compare)}) \to (\Int \to \rrecord{\compare(\alpha_\compare)})\big)\}
\end{align*}
We may ask inhabitation questions such as $$\Delta_{\{\get, \set, \succ, \succTwice, \compare\}}^{\{\Nat\},\{\SuccTwice, \Comparable\}} \vdashk ? : \interpret{\Int \to \record{\succ : \Int, \compare : \record{\get : \Int} \to \Bool, \succTwice : \Int}}$$
and obtain the combinatory term  \enquote{$\Nat \pipe \Comparable \pipe \SuccTwice$} as the synthesized result. From Theorem \ref{thm:soundness} we know $$ \vdashr \Nat \pipe \Comparable \pipe \SuccTwice : \Int \to \record{\succ : \Int, \compare : \record{\get : \Int} \to \Bool, \succTwice : \Int}$$


The presented encoding has several benefits with respect to scalability. 
First, the size of the presented repositories is polynomial in $|\mathcal{L}| * |\mathcal{C}| * |\mathcal{M}|$. 
Second, expanding the label set $\mathcal{L}$ can be performed automatically in polynomial time by adding additional components $(\interpret{\sigma_M} \to \rrecord{l(\alpha_l)}) \to (\interpret{\sigma_M} \to \rrecord{l(\alpha_l)})$ to each mixin $M$ for each new label $l$.
Third, adding a class/mixin to an existing repository is as simple as adding one typed combinator for the class/mixin. Existing combinators in the repository remain untouched. As an example, we add the following mixin $\SuccDelta$ to $\Delta_{\{\get, \set, \succ, \succTwice, \compare\}}^{\{\Nat\},\{\SuccTwice, \Comparable\}}$.
\begin{align*}
&\SuccDelta = \lambda \argClass. \Y (\lambda \myClass. \lambda \state. \\
&\qquad \letin{\super = \argClass \ \state} \\
&\qquad \letin{\self = \myClass \ \state} \super \oplus \record{\succ = \lambda d.(\super.\set(\super.get + d)}))
\end{align*}

In $\lambdaR$ for all types $\rho \in \TTR$ we have 
$$\vdashr \SuccDelta \colon (\Int \to \rho \cap \record{\get : \Int, \set : \Int \to \Int}) \to (\Int \to \rho + \record{\succ : \Int \to \Int})$$

We obtain the following extended repository
\begin{align*}
&\Delta_{\{\get, \set, \succ, \succTwice, \compare\}}^{\{\Nat\},\{\SuccTwice, \Comparable, \SuccDelta\}} = \Delta_{\{\get, \set, \succ, \succTwice, \compare\}}^{\{\Nat\},\{\SuccTwice, \Comparable\}} \\
& \qquad \cup \{ \SuccDelta : \big((\Int \to \rrecord{\get(\Int) \cap \set(\Int \to \Int)}) \to (\Int \to \rrecord{\succ(\Int \to \Int)})\big) \\
&\qquad \qquad \cap \big((\Int \to \rrecord{\get(\alpha_\get)}) \to (\Int \to \rrecord{\get(\alpha_\get)})\big)\\
&\qquad \qquad \cap \big((\Int \to \rrecord{\set(\alpha_\set)}) \to (\Int \to \rrecord{\set(\alpha_\set)})\big)\\
&\qquad \qquad \cap \big((\Int \to \rrecord{\succTwice(\alpha_\succTwice)}) \to (\Int \to \rrecord{\succTwice(\alpha_\succTwice)})\big)\\
&\qquad \qquad \cap \big((\Int \to \rrecord{\compare(\alpha_\compare)}) \to (\Int \to \rrecord{\compare(\alpha_\compare)})\big)\}
\end{align*}

Asking the inhabitation question $$\Delta_{\{\get, \set, \succ, \succTwice, \compare\}}^{\{\Nat\},\{\SuccTwice, \Comparable, \SuccDelta\}} \vdashk ? : \interpret{\Int \to \record{\succ : \Int \to \Int, \succTwice : \Int}}$$ synthesizes \enquote{$\Nat \pipe \SuccTwice \pipe \SuccDelta$}. Note that even in such a simplistic scenario the order in which mixins are applied can be crucial mainly because $\Override$ is not commutative. Moreover, the early binding of self and the associated preservation of overwritten methods allows for destructive overwrite of $\succ$ by $\SuccDelta$ without invalidating $\succTwice$ that is previously added by $\SuccTwice$. This also may make multiple applications of a single mixin meaningful.

In order to improve the reading experience, the running example is coherently arranged in the appendix.

\subsection{Semantic Types}\label{subsec:semanticTypes}
An additional benefit of using intersection types and, in particular, the (CL)S framework is the availability of the so called \emph{semantic types}~\cite{JR13}. Semantic types can be used to further specify the semantics of typed combinators in a repository and restrict/guide the inhabitant search. Semantic types consist of additional constants $\semType{a}$ and constructors $\semType{c}(\cdot)$ that arise in the current domain of interest (semantic types are well suited to capture taxonomies~\cite{JR13}). Native types are augmented by semantic types using intersection. For example, consider the native type $\Int$. We may be interested in whether the current value is $\Even$ or $\Odd$. Therefore, we may augment the native type by this information resulting in types of the form $\Int \cap \Even$ representing even integers or $\Int \cap \Odd$ representing odd integers. This becomes increasingly interesting when we have knowledge about functional dependencies between semantic types in the domain of interest. In particular, knowing that successors of even integers are odd and vice versa, we may augment the native type of $\Nat$ to represent this domain knowledge 
\begin{align*}
\Nat  : &  (\Int \cap \Even \to \rrecord{\get(\Int \cap \Even)}) \\
& \cap (\Int \cap \Odd \to \rrecord{\get(\Int \cap \Odd)}) \\
&\cap (\Int \to \rrecord{\set((\Int \to \Int) \cap (\Even \to \Even) \cap (\Odd \to \Odd))}) \\
&\cap (\Int \cap \Even \to \rrecord{\succ(\Int \cap \Odd)}) \\
&\cap (\Int \cap \Odd \to \rrecord{\succ(\Int \cap \Even)})
\end{align*}

The above type of $\Nat$ expresses that the $\get$ method returns the semantic type of the underlying state of the object, the $\set$ method does not take the underlying state of the object into account, and the $\succ$ method returns the opposed semantic type. Note that, if all semantic components are erased, the remaining type of $\Nat$ is exactly the original native type.

In the setting of mixin composition synthesis, semantic types enhance the expressiveness of the type system. The mixin $\SuccTwice$ from our running example can be typed in the following way

\begin{align*}
\SuccTwice : (\Int \cap \Even \to \rrecord{\succ(\Int \cap \Odd)}) \cap (\Int \cap \Odd \to \rrecord{\succ(\Int \cap \Even)}) \\
\quad \to (\Int \cap \Even \to \rrecord{\succTwice(\Int \cap \Even)}) \cap (\Int \cap \Odd \to \rrecord{\succTwice(\Int \cap \Odd)})
\end{align*}

The above type of $\SuccTwice$ expresses that given the proper semantic types of $\succ$, the semantic type of $\succTwice$ corresponds to the semantic type of the underlying state of the object, i.e.~the twofold successor of an even (resp. odd) integer is even (resp. odd).

This allows to distinguish methods with different semantics in the domain of interest but exhibiting identical native types. Consider the following mixin $\Parity$ that overwrites the method $\succ$ to be the twofold successor such that the parity of the underlying state remains the same. 
\begin{align*}
&\Parity = \lambda \argClass. \Y (\lambda \myClass. \lambda \state. \\
&\qquad \letin{\super = \argClass \ \state} \\
&\qquad \letin{\self = \myClass \ \state} \super \oplus \record{\succ = \super.\succTwice})
\end{align*}
$\Parity$ can be typed in the following way
\begin{align*}
\Parity : (\Int \cap \Even \to \rrecord{\succTwice(\Int \cap \Even)}) \cap (\Int \cap \Odd \to \rrecord{\succTwice(\Int \cap \Odd)}) \\
\quad \to (\Int \cap \Even \to \rrecord{\succ(\Int \cap \Even)}) \cap (\Int \cap \Odd \to \rrecord{\succ(\Int \cap \Odd)})
\end{align*}

Asking the inhabitation question 
$$\Delta_{\{\get, \set, \succ, \succTwice, \compare\}}^{\{\Nat\},\{\SuccTwice, \Comparable, \SuccDelta, \Parity\}} \vdashk \Int \cap \Even \to \rrecord{\succ : \Int \cap \Even}$$ results in \enquote{$\Nat \pipe \SuccTwice \pipe \Parity$}. Note that due to the early binding, $\succTwice$ uses the old $\succ$ method which has the required semantics in order for $\SuccTwice$ to be applied.


Let us explore the descriptive capabilities of semantic types using a more illustrative example. Consider the domain of cryptography containing algorithms for encrypting and signing data. Describing abstract properties such as \enquote{encrypted} or \enquote{signed} at the lowest level, e.g.~using Hoare logic, requires an enormous amount of work. In practice, such properties are described textually while the native type, e.g.~$\String$, does not capture particular semantics. Consider the following repository
\begin{align*}
\Delta_{\text{native}} = \{ 
& \text{Reader} : && \String \to \rrecord{\get(\String)}, \\
& \text{Enc} : && (\String \to \rrecord{\get(\String)}) \to \String \to \rrecord{\get(\String)}, \\
& \text{Sign} : && (\String \to \rrecord{\get(\String)}) \to \String \to \rrecord{\get(\String)}, \\
& \text{Time} : && (\String \to \rrecord{\get(\String)}) \to \String \to \rrecord{\get(\String)} \}
\end{align*}
with the following textual description
\begin{itemize}
\item The Reader class provides a $\get$ method that returns some plain text data.
\item The Enc mixin replaces the $\get$ method of a given class with a new $\get$ method that returns the encrypted result of the overwritten $\get$ method.
\item The Sign mixin replaces the $\get$ method appending the signature to the result of the overwritten $\get$ method.
\item The Time mixin replaces the $\get$ method appending a time-stamp to the result of the overwritten $\get$ method.
\end{itemize}

Unfortunately, the types in $\Delta$ are too general to be used for synthesis of meaningful compositions. However, we can use semantic types to embed the textual description of the particular semantics in our domain of interest into $\Delta_{\text{native}}$ resulting in the following repository
\begin{align*}
\Delta = \{ 
& \text{Reader} : && \String \to \rrecord{\get(\String \cap \Plain)}, \\
& \text{Enc} : && (\String \to \rrecord{\get(\String \cap \alpha)}) \to (\String \to \rrecord{\get(\String \cap \Enc(\alpha))}), \\
& \text{Sign} : && (\String \to \rrecord{\get(\String \cap \alpha)}) \to (\String \to \rrecord{\get(\String \cap \alpha \cap \Sign(\alpha))}), \\
& \text{Time} : && (\String \to \rrecord{\get(\String \cap \alpha)}) \to (\String \to \rrecord{\get(\String \cap \alpha \cap \Time)}) \}
\end{align*}
In the above repository $\Delta$, the mixin Enc replaces any semantic information $\alpha$ of the $\get$ method by $\Enc(\alpha)$. The mixin Sign adds semantic information $\Sign(\alpha)$ to any previous semantic information $\alpha$ while also preserving $\alpha$. The mixin Time adds semantic information $\Time$ while preserving the old semantic information.

If we are interested in a composition that has a $\get$ method returning an encrypted plain text with a time-stamp and a signature, we may ask the following inhabitation question
$$\Delta \vdashk ? : \String \to \rrecord{get(\String \cap \Enc(\Plain \cap \Time \cap \Sign(\Plain \cap \Time)))}$$
The above question is answered by \enquote{$\text{Reader} \pipe \text{Time} \pipe \text{Sign} \pipe \text{Enc}$}.
If we are interested in a composition that encrypts the plain-text thrice, we may ask the following inhabitation question
$$\Delta \vdashk ? : \String \to \rrecord{get(\String \cap \Enc(\Enc(\Enc(\Plain))))}$$
The above question is answered by \enquote{$\text{Reader} \pipe \text{Enc} \pipe \text{Enc} \pipe \text{Enc}$}.


	\section{Conclusion and Future Work}\label{sec:conclusion}
	
We presented a theory for automatic compositional construction of object-oriented classes by combinatory synthesis. This theory is based on the $\lambda$-calculus with records and record merge $\Override$ typed by intersection types with records and $+$. It is capable of modeling classes as functions from states to records (i.e.~objects), and mixins as functions from classes to classes. Mixins can be assigned meaningful types using $+$ expressing their compositional character. However, non-monotonic properties of $+$ are incompatible with the existing theory of \bcl\, synthesis. Therefore, we designed a translation to repositories of combinators typed in $\bclc$. We have proven this translation to be sound (Theorem \ref{thm:soundness}) and partially complete (Theorem \ref{thm:partial_completeness}). A notable feature is the encoding of negative information (the absence of labels). The original approach~\cite{BDDCdLR15} exploited the logic programming capabilities of inhabitation by adding sets of combinators serving as witnesses for the non-presence of labels. In this work we further refined the encoding by embedding the distinction between labels that are accessed, modified or untouched by a mixin into the combinator type directly. In Section~\ref{sec:synthesis} we also showed that this encoding scales wrt.~extension of repositories.

Future work includes further studies on the possibilities to encode predicates exploiting patterns similar to the negative information encoding. 
Another direction of future work is to extend types of mixins and classes by semantic as well as modal types \cite{DudderMR14}, a development initiated in \cite{BessaiDDM14}. In particular, the expressiveness of semantic types can be used to assign meaning to multiple applications of a single mixin and allow to reason about object-oriented code on a higher abstraction level as well as with higher semantic accuracy.

%


\bibliographystyle{alpha}
\bibliography{SemMixBiblio,SyntBibliography}

\newpage
\appendix
\section{}
Proof of Lemma \ref{lem:ttc_path_intersection}.
\def\atPosition#1#2{ASD}
\def\leqC{ASD}
\begin{proof}
By definition, $\bigcap \PP(\tau)$ is $\omega$ or an intersection of paths, therefore it is organized. Due to the identities $\sigma \to \omega = \omega$ and $\omega \cap \omega = \omega$ we have that if $\PP(\tau) = \emptyset$ then $\tau = \omega$.
We show by induction on the depth of the syntax tree of $\tau$ that $\bigcap \PP(\tau) = \tau$.
\begin{description}
\item[Basis Step]
If $\tau \equiv \omega$ or $\tau \equiv \alpha$ or $\tau \equiv a$, then $\bigcap \PP(\tau) = \tau$ by definition.
\item[Inductive Step]\ 
\begin{description}
\item[Case 1.1] $\tau \equiv \tau_1 \to \tau_2$ and $\PP(\tau_2) = \emptyset$. \\
Since $\PP(\tau_2) = \emptyset$, we have $\tau_2 = \omega$, therefore $\tau = \omega = \bigcap \emptyset = \bigcap \PP(\tau)$.
\item[Case 1.2] $\tau \equiv \tau_1 \to \tau_2$ and $\PP(\tau_2) \neq \emptyset$. \\
$\bigcap \PP(\tau) = \bigcap \{\tau_1 \to \pi \mid \pi \in \PP(\tau_2)\} = \tau_1 \to \bigcap \PP(\tau_2) \stackrel{\text{IH}}{=} \tau_1 \to \tau_2 = \tau$.
\item[Case 2] $\tau \equiv \tau_1 \cap \tau_2$. \\
$\bigcap \PP(\tau) = \bigcap \PP(\tau_1) \cap \bigcap \PP(\tau_2) \stackrel{\text{IH}}{=} \tau_1 \cap \tau_2 = \tau$.
\item[Case 3.1] $\tau \equiv c(\tau_1)$ and $\PP(\tau_1) = \emptyset$.\\
$\bigcap \PP(\tau) = c(\omega) = \tau$.
\item[Case 3.2] $\tau \equiv c(\tau_1)$ and $\PP(\tau_1) \neq \emptyset$.\\
$\bigcap \PP(\tau) = c(\bigcap \PP(\tau_1)) \stackrel{\text{IH}}{=} c(\tau_1) = \tau$
\end{description}
\end{description}
\end{proof}

Proof of Lemma \ref{lem:ttc_path_set_subtyping}.
\begin{proof}
For \enquote{$\Leftarrow$} we have $\sigma \stackrel{\text{Lem. } \ref{lem:ttc_path_intersection}}{=} \bigcap \PP(\sigma) \leq \bigcap \PP(\tau) \stackrel{\text{Lem. } \ref{lem:ttc_path_intersection}}{=} \tau$ using the fact that $\sigma_1 \leq \tau_1$  and $\sigma_2 \leq \tau_2$ implies $\sigma_1 \cap \sigma_2 \leq \tau_1 \cap \tau_2$.
For \enquote{$\Rightarrow$} we show by induction on the derivation of $\sigma \leq \tau$ that for each path $\pi \in \PP(\tau)$ there exists a path $\pi' \in \PP(\sigma)$ such that $\pi' \leq \pi$ and the desired properties hold.
\begin{description}
\item[Basis Step] In the following cases we have $\PP(\sigma) \supseteq \PP(\tau)$, therefore the desired properties hold via syntactical identity
\begin{itemize}
\item $\sigma \equiv \sigma \leq \sigma \equiv \tau$
\item $\sigma \equiv \sigma \leq \omega \equiv \tau$
\item $\sigma \equiv \omega \leq \omega \to \omega \equiv \tau$
\item $\sigma \equiv \sigma' \cap \tau' \leq \sigma' \equiv \tau$ for some $\sigma', \tau' \in \TTC$
\item $\sigma \equiv \sigma' \cap \tau' \leq \tau' \equiv \tau$ for some $\sigma', \tau' \in \TTC$
\item $\sigma \equiv (\sigma' \to \tau_1') \cap (\sigma' \to \tau_2') \leq \sigma' \to \tau_1' \cap \tau_2' \equiv \tau$ for some $\sigma', \tau_1', \tau_2' \in \TTC$
\item $\sigma \equiv c(\sigma_1) \cap c(\tau_1) \leq c(\sigma_1 \cap \tau_1) \equiv \tau$ for some $\sigma_i, \tau_i \in \TTC$ for $1 \leq i \leq n$
\end{itemize}
\item[Inductive Step] Let $\pi \in \PP(\tau)$.
\begin{description}
\item[Case 1] $\sigma \equiv \sigma_1 \to \tau_1 \leq \sigma_2 \to \tau_2 \equiv \tau$ with $\sigma_2 \leq \sigma_1$ and $\tau_1 \leq \tau_2$.\\
We have $\PP(\sigma_2 \to \tau_2) \ni \pi \equiv \sigma_2 \to \pi_2$ for some path $\pi_2 \in \PP(\tau_2)$. By the induction hypothesis there exists a path $\pi_1 \in \PP(\tau_1)$ such that $\pi_1 \leq \pi_2$. Therefore, $$\PP(\sigma) \ni \sigma_1 \to \pi_1 \leq \sigma_2 \to \pi_2 \equiv \pi \text{ with } \sigma_2 \leq \sigma_1 \text{ and } \pi_1 \leq \pi_2$$

\item[Case 2] $\sigma \equiv \sigma \leq \tau_1 \cap \tau_2 \equiv \tau$ with $\sigma \leq \tau_1$ and $\sigma \leq \tau_2$.\\
Since $\pi \in \PP(\tau) = \PP(\tau_1) \cup \PP(\tau_2)$, we have $\pi \in \PP(\tau_1)$ or $\pi \in \PP(\tau_2)$. By the induction hypothesis in both cases there exists a path $\pi' \in \PP(\sigma)$ such that $\pi' \leq \pi$ satisfying the desired properties.

\item[Case 3] $\sigma \equiv c(\sigma_1) \leq c(\tau_1) \equiv \tau$ with $\PP(\tau_i)=\emptyset$.\\
Since $\pi \equiv c(\omega)$, we have $\pi' \leq \pi$ satisfying the desired properties for any $\pi' \in \PP(\sigma)$, noting that $\PP(\sigma) \neq \emptyset$.

\item[Case 4] $\sigma \equiv c(\sigma_1) \leq c(\tau_1) \equiv \tau$ with $\sigma_1 \leq \tau_1$ and $\PP(\tau_1) \neq \emptyset$.\\
By Definition \ref{def:ttc_paths_in_tau}, we have $\pi \equiv c(\pi_\tau)$ for some $\pi_\tau \in \PP(\tau_1)$.
By the induction hypothesis there exists a $\pi_\sigma \in \PP(\sigma_1)$ such that $\pi_\sigma \leq \pi_\tau$. Therefore, $\PP(\sigma) \ni c(\pi_\sigma) \leq c(\pi_\tau) \equiv \pi$.

\item[Case 5] $\sigma \leq \sigma_1 \leq \tau$.\\
By the induction hypothesis there exists a path $\pi_1 \in \PP(\sigma_1)$ such that $\pi_1 \leq \pi$. Again by the induction hypothesis there exists a path $\pi' \in \PP(\sigma)$ such that $\pi' \leq \pi_1$. Therefore, we have $\PP(\sigma) \ni \pi' \leq \pi_1 \leq \pi$.
\end{description}
\end{description}
\end{proof}

Proof of Lemma \ref{lem:blckc_path_lemma}.
\begin{proof}
$(2) \Rightarrow (1)$ is trivial using the rules $(\ArrE), (\leq)$ and the property $(\sigma_1 \to \tau_1) \cap (\sigma_2 \to \tau_2) \leq \sigma_1 \cap \sigma_2 \to \tau_1 \cap \tau_2$.
For $(1) \Rightarrow (2)$, we use induction on the derivation of $\Delta \vdashk C \, E_1 \ldots E_m : \tau$. Wlog. $\tau \neq \omega$ and (similar to~\cite[Proposition 8]{RehofEtAlCSL12}) the derivation $\Delta \vdashk C \, E_1 \ldots E_m : \tau$ does not contain atoms other than $\atoms(\Delta) \cup \atoms(\tau)$.
\begin{description}
\item[Basis Step] The last rule is $\prooftree
	C:\sigma \in \Delta \qquad \level(S) \leq k
	\justifies
	\Delta \vdashk C:S(\sigma) \equiv \tau
	\using (\text{Var})
\endprooftree$. For $P = \PP_0(S(\sigma))$ we have $\bigcap_{\pi \in P} \tgt_0(\pi) = \bigcap P = \bigcap \PP_0(S(\sigma)) = \bigcap \PP(S(\sigma)) \stackrel{\text{Lem. } \ref{lem:ttc_path_intersection}}{=} S(\sigma) = \tau$.
\item[Inductive Step]\ \\
\begin{description}
\item[Case 1] The last rule is $\prooftree
	\Delta \vdashk C \, E_1 \ldots E_m: \sigma \qquad \sigma\leq\tau
	\justifies
	\Delta \vdashk C \, E_1 \ldots E_m : \tau
	\using (\leq)
\endprooftree $.\\ 
Follows immediately from the induction hypothesis with $\bigcap_{\pi \in P} \tgt_m(\pi) \leq \sigma \leq \tau$.
\item[Case 2] The last rule is $\prooftree
	\Delta \vdashk C \, E_1 \ldots E_m: \tau_1 \quad 
	\Delta \vdashk C \, E_1 \ldots E_m: \tau_2
	\justifies
	\Delta \vdashk C \, E_1 \ldots E_m: \tau_1\inter\tau_2 \equiv \tau
	\using (\inter)
\endprooftree$.\\
By the induction hypothesis, there exist sets\\
$P_1, P_2 \subseteq \PP_m(\bigcap \{S(\Delta(C)) \mid \level(S) \leq k, \atoms(S) \subseteq \atoms(\Delta) \cup \atoms(\tau)\})$\\ such that $\bigcap_{\pi \in P_j} \tgt_m(\pi) \leq \tau_j$ and $\Delta \vdashk E_i : \bigcap_{\pi \in P_j} \arg_i(\pi)$ for $1 \leq i \leq m$ and $j \in \{1,2\}$.
For $P = P_1 \cup P_2$ we obtain 
\begin{enumerate}
\item $\bigcap_{\pi \in P} \tgt_m(\pi) = \bigcap_{\pi \in P_1} \tgt_m(\pi) \cap \bigcap_{\pi \in P_2} \tgt_m(\pi) \leq \tau_1 \cap \tau_2 = \tau$
\item $\Delta \vdashk E_i : \bigcap_{\pi \in P} \arg_i(\pi) = \bigcap_{\pi \in P_1} \arg_i(\pi) \cap \bigcap_{\pi \in P_2} \arg_i(\pi)$ for $1 \leq i \leq m$ using the induction hypothesis and the rule $(\inter)$.
\end{enumerate}
\item[Case 3] The last rule is $\prooftree
	\Delta \vdashk C \, E_1 \ldots E_{m-1}: \sigma\to\tau \qquad \Delta \vdashk E_m:\sigma
	\justifies
	\Delta \vdashk C \, E_1 \ldots E_m:\tau
	\using (\ArrE)
\endprooftree$.\\
By the induction hypothesis, there exists a set\\
 $P' \subseteq \PP_{m-1}(\bigcap \{S(\Delta(C)) \mid \level(S) \leq k, \atoms(S) \subseteq \atoms(\Delta) \cup \atoms(\tau)\})$\\
  such that $\bigcap_{\pi \in P'} \tgt_{m-1}(\pi) \leq \sigma \to \tau$. Let $P = \{\pi \in P' \mid \sigma \leq \arg_m(\pi)\}$. By Lemma \ref{lem:ttc_path_set_subtyping} for each path $\sigma \to \pi \in \PP(\sigma \to \tau)$ there exists a path $\pi' \in P'$ such that $\tgt_{m-1}(\pi') \leq \sigma \to \pi$ and $\tgt_{m-1}(\pi') = \arg_{m}(\pi') \to \tgt_m(\pi')$ with $\sigma \leq \arg_m(\pi')$ and $\tgt_m(\pi') \leq \pi$, therefore $\pi' \in P$. By Lemma \ref{lem:ttc_path_set_subtyping} we obtain $\bigcap_{\pi \in P} \tgt_{m}(\pi) \leq \tau$. Additionally, by the induction hypothesis $\Delta \vdashk E_i : \bigcap_{\pi \in P'} \arg_i(\pi) \leq \bigcap_{\pi \in P} \arg_i(\pi)$ for $1 \leq i \leq m-1$, and $\Delta \vdashk E_m : \sigma \leq \bigcap_{\pi \in P} \arg_m(\pi)$.
\end{description}
\end{description}
\end{proof}


\section{Running Example}\label{app:running}
\noindent Overview of terms
\begin{align*}
&\Nat = \Y (\lambda \myClass. \lambda \state. \\
&\qquad \letin{\self = \myClass \ \state} \\
&\qquad \qquad \record{\get = \state, \set = \lambda \State'.\state', \succ = \self.\set(\self.\get + 1)}\\
&\Comparable = \lambda \argClass. \Y (\lambda \myClass. \lambda \state. \\
&\qquad \letin{\super = \argClass \ \state} \\
&\qquad \letin{\self = \myClass \ \state} \\
&\qquad \qquad \super \oplus \record{\compare = \lambda o.(o.\get == \self.\get)})\\
&\SuccTwice = \lambda \argClass. \Y (\lambda \myClass. \lambda \state. \\
&\qquad \letin{\super = \argClass \ \state} \\
&\qquad \letin{\self = \myClass \ \state} \\
&\qquad \qquad \super \oplus \record{\succTwice = \letin{\super' = \argClass(\super.\succ)} \super'.succ})\\
&\SuccDelta = \lambda \argClass. \Y (\lambda \myClass. \lambda \state. \\
&\qquad \letin{\super = \argClass \ \state} \\
&\qquad \letin{\self = \myClass \ \state} \\
&\qquad \qquad \super \oplus \record{\succ = \lambda d.(\super.\set(\super.get + d)}))\\
&\Parity = \lambda \argClass. \Y (\lambda \myClass. \lambda \state. \\
&\qquad \letin{\super = \argClass \ \state} \\
&\qquad \letin{\self = \myClass \ \state} \\
&\qquad \qquad \super \oplus \record{\succ = \super.\succTwice})
\end{align*}

Overview of $\lambdaR$ types for any record type $\rho \in \TTR$
\begin{align*}
&\Nat  :  \Int \to \record{\get : \Int, \set : \Int \to \Int, \succ : \Int} \\
&\Comparable  :  (\Int \to \rho \cap \record{\get : \Int}) \to (\Int \to \rho + \record{\compare : \record{\get : \Int} \to \Bool})\\
&\SuccTwice  :  (\Int \to \rho \cap \record{\succ : \Int}) \to (\Int \to \rho + \record{\succTwice : \Int})\\
&\SuccDelta : (\Int \to \rho \cap \record{\get : \Int, \set : \Int \to \Int}) \to (\Int \to \rho + \record{\succ : \Int \to \Int})\\
&\Parity : (\Int \to \rho \cap \record{\succTwice : \Int}) \to (\Int \to \rho + \record{\succ : \Int})
\end{align*}

\bigskip

\noindent Overview of $\TTC$ types
\begin{align*}
&\Nat  :  \Int \to \rrecord{\get(\Int) \cap \set(\Int \to \Int) \cap \succ(\Int)}, \\
&\Comparable  :  \big((\Int \to \rrecord{\get(\Int)}) \to (\Int \to \rrecord{\compare(\rrecord{\get(\Int)} \to \Bool)})\big)\\
&\qquad \cap \big((\Int \to \rrecord{\get(\alpha_\get)}) \to (\Int \to \rrecord{\get(\alpha_\get)})\big)\\
&\qquad \cap \big((\Int \to \rrecord{\set(\alpha_\set)}) \to (\Int \to \rrecord{\set(\alpha_\set)})\big)\\
&\qquad \cap \big((\Int \to \rrecord{\succ(\alpha_\succ)}) \to (\Int \to \rrecord{\succ(\alpha_\succ)})\big)\\
&\qquad \cap \big((\Int \to \rrecord{\succTwice(\alpha_\succTwice)}) \to (\Int \to \rrecord{\succTwice(\alpha_\succTwice)})\big)\\
&\SuccTwice  :  \big((\Int \to \rrecord{\succ(\Int)}) \to (\Int \to \rrecord{\succTwice(\Int)})\big)\\
&\qquad \cap \big((\Int \to \rrecord{\get(\alpha_\get)}) \to (\Int \to \rrecord{\get(\alpha_\get)})\big)\\
&\qquad \cap \big((\Int \to \rrecord{\set(\alpha_\set)}) \to (\Int \to \rrecord{\set(\alpha_\set)})\big)\\
&\qquad \cap \big((\Int \to \rrecord{\succ(\alpha_\succ)}) \to (\Int \to \rrecord{\succ(\alpha_\succ)})\big)\\
&\qquad \cap \big((\Int \to \rrecord{\compare(\alpha_\compare)}) \to (\Int \to \rrecord{\compare(\alpha_\compare)})\big)\\
&\SuccDelta : \big((\Int \to \rrecord{\get(\Int) \cap \set(\Int \to \Int)}) \to (\Int \to \rrecord{\succ(\Int \to \Int)})\big)\\
&\qquad \cap \big((\Int \to \rrecord{\get(\alpha_\get)}) \to (\Int \to \rrecord{\get(\alpha_\get)})\big)\\
&\qquad \cap \big((\Int \to \rrecord{\set(\alpha_\set)}) \to (\Int \to \rrecord{\set(\alpha_\set)})\big)\\
&\qquad \cap \big((\Int \to \rrecord{\succTwice(\alpha_\succTwice)}) \to (\Int \to \rrecord{\succTwice(\alpha_\succTwice)})\big)\\
&\qquad \cap \big((\Int \to \rrecord{\compare(\alpha_\compare)}) \to (\Int \to \rrecord{\compare(\alpha_\compare)})\big)\\
&\Parity : \big((\Int \to \rrecord{\succTwice(\Int)}) \to (\Int \to \rrecord{\succ(\Int)})\big)\\
&\qquad \cap \big((\Int \to \rrecord{\get(\alpha_\get)}) \to (\Int \to \rrecord{\get(\alpha_\get)})\big)\\
&\qquad \cap \big((\Int \to \rrecord{\set(\alpha_\set)}) \to (\Int \to \rrecord{\set(\alpha_\set)})\big)\\
&\qquad \cap \big((\Int \to \rrecord{\succTwice(\alpha_\succTwice)}) \to (\Int \to \rrecord{\succTwice(\alpha_\succTwice)})\big)\\
&\qquad \cap \big((\Int \to \rrecord{\compare(\alpha_\compare)}) \to (\Int \to \rrecord{\compare(\alpha_\compare)})\big)
\end{align*}

%

\end{document}